%
\documentclass[10pt]{article} 
\usepackage{latexsym}
\usepackage{epstopdf,caption,subcaption,graphicx,xcolor,hyperref}
\usepackage{tikz}
\usepackage{pgfgantt}
\usepackage[capitalise,noabbrev]{cleveref}
\usetikzlibrary{shapes,arrows,fit,calc,positioning}
\usetikzlibrary{decorations.pathreplacing}
%
 \setlength{\evensidemargin}{+0.00in}   
 \setlength{\oddsidemargin} {+0.00in}
 \setlength{\textwidth}     {+6.50in}
 \setlength{\topmargin}     {-0.50in}   
 \setlength{\textheight}    {+9.00in}

\parskip=0pt
\normalsize
\bibliographystyle{plainurl}
%
%
%
%
\newtheorem{theorem}{Theorem}
\newtheorem{lemma}{Lemma}
\newtheorem{operation}{Operation}

\newtheorem{fact}{Fact}
\newtheorem{invariant}{Invariant}

\newtheorem{remark}{Remark}

\newtheorem{notation}{Notation}
\newtheorem{definition}{Definition}

\newenvironment{proof}{{\sc Proof. }}{\hfill$\Box$\vspace{0.1in}}
%
%

\thispagestyle{plain}
%
%
\title{An Approximation Algorithm for Covering Vertices by $4^+$-Paths}
\author{
Mingyang~Gong\thanks{Department of Computing Science, University of Alberta.  Edmonton, Alberta T6G 2E8, Canada.
	Emails: \texttt{\{mgong4, guohui\}@ualberta.ca}}
\and
Zhi-Zhong~Chen\thanks{Division of Information System Design, Tokyo Denki University. Saitama 350-0394, Japan.
  Email: \texttt{zzchen@mail.dendai.ac.jp}}
\thanks{Correspondence author.}
\and
Guohui~Lin$^*$
\and
Zhaohui~Zhan\thanks{Department of Computer Science, City University of Hong Kong. Hong Kong SAR, China.
  Email: \texttt{zhzhan3-c@my.cityu.edu.hk}}
}
\date{}
\begin{document}
\maketitle
\begin{abstract}
This paper deals with the problem of finding a collection of vertex-disjoint paths in a given graph 
$G=(V,E)$ such that each path has at least four vertices and the total number of vertices in these paths is maximized. 
The problem is NP-hard and admits an approximation algorithm which achieves a ratio of $2$ and runs in 
$O(|V|^8)$ time. The known algorithm is based on time-consuming local search, and its authors ask whether 
one can design a better approximation algorithm by a completely different approach. 
In this paper, we answer their question in the affirmative by presenting a new approximation algorithm for the problem. 
Our algorithm achieves a ratio of 1.874 and runs in $O(\min\{|E|^2|V|^2, |V|^5\})$ time. 
Unlike the previously best algorithm, ours starts with a maximum matching $M$ of $G$ and then tries to 
transform $M$ into a solution by utilizing a maximum-weight path-cycle cover in a suitably constructed graph. 

\paragraph{Keywords:}
Path cover; path-cycle cover; maximum matching; recursion; approximation algorithm 
\end{abstract}

\newpage 

\section{Introduction}\label{sec:intro}
Throughout this paper, a graph always means a simple undirected graph without parallel edges or self-loops,
and an approximation algorithm always means one running in polynomial time.
Let $k$ be a positive integer. 
Given a graph $G = (V, E)$, $MPC^{k+}_v$ is the problem of finding a collection of vertex-disjoint paths 
each with at least $k$ vertices in $G$ so that the total number of vertices in these paths is maximized. 
Note that we can assume that each path in the output collection has at most $2k-1$ vertices. 
This is because we can split a path having $2k$ or more vertices into two or more paths each having at least $k$ and at most $2k-1$ vertices. 
$MPC^{k+}_v$ has numerous real-life applications such as transportation networks \cite{GFL22a}. 
In this paper, we mainly focus on $MPC^{4+}_v$. 

On one hand, $MPC^{k+}_v$ is related to many important optimization problems. 
For example, Berman and Karpinski \cite{BK06} consider {\em the maximum path cover problem}, 
which is the problem of finding a collection of vertex-disjoint paths in a given graph so that the total number of edges in the paths is maximized. 
For other related path cover problems with different objectives, the reader is referred to 
\cite{BK06, AN07, PH08, AN10, RT14, CCC18, GW20, CCL21} for more details.
On the other hand, $MPC^{k+}_v$ can be viewed as a special case of  {\em the maximum-weight $(2k - 1)$-set packing problem} 
because the former can be easily reduced to the latter as follows. 
Recall that an instance of the latter problem is a collection ${\cal C}$ of sets each having a non-negative weight and at most $2k-1$ elements. 
The objective is to select a collection of pairwise-disjoint sets in ${\cal C}$ so that the total weight of the selected sets is maximized. 
To reduce $MPC^{k+}_v$ to the maximum-weight $(2k-1)$-set packing problem,
it suffices to construct an instance ${\cal C}$ of the latter problem from a given instance graph $G$ of $MPC^{k+}_v$, 
where ${\cal C}$ is the collection of all paths of $G$ with at least $k$ and at most $2k-1$ vertices and 
the weight of each path $P$ in ${\cal C}$ is the number of vertices in $P$. 
This reduction leads to an approximation algorithm for $MPC^{k+}_v$ achieving a ratio of $k$ 
because the maximum-weight $(2k - 1)$-set packing problem can be approximated within a ratio of $k$ \cite{H83} 
or within a slightly better ratio of $k-\frac 1{63,700,992}+\epsilon$ \cite{N21} for any $\epsilon>0$.

$MPC^{k+}_v$ can be solved in polynomial time if $k\le 3$ \cite{CCL21}, but is NP-hard otherwise \cite{KLM22}.
Kobayashi {\em et al.} \cite{KLM22} design an approximation algorithm for $MPC^{4+}_v$ achieving a ratio of 4.
Afterwards, Gong et al. \cite{GFL22a} give the formal definition of $MPC^{k+}_v$ and present an approximation algorithm for $MPC^{k+}_v$ 
which achieves a ratio of $\rho(k) \le 0.4394k + 0.6576$ and runs in $O(|V|^{k+1})$ time. 
The core of their algorithm is three local improvement operations, 
each of which increases the number of vertices in the current solution by at least~1 if it is applicable. 
The algorithm stops when none of the three operations is applicable.
They employ an amortization scheme to analyze the approximation ratio of their algorithm by
assigning the vertices in the optimal solution to the vertices of the solution outputted by their algorithm. 
For the special case where $k = 4$,
they design two more local improvement operations to increase the number of vertices or the number of paths with exactly $4$ vertices in the current solution, 
and then use a more careful amortization scheme to prove that 
the approximation ratio of their algorithm is bounded by $2$ although the running time jumps to $O(|V|^8)$. 
As an open question, they ask whether one can design better approximation algorithms for the problem by completely different approaches. 

In this paper, we answer their open question in the affirmative for the case where $k = 4$. 
Motivated by the approaches in \cite{K09, CKM10, CCL21} for similar problems, one may want to design an approximation algorithm for $MPC^{k+}_v$ 
by first computing a maximum path-cycle cover ${\cal C}$ of the input graph $G$ and then transforming ${\cal C}$ into a solution for $G$. 
Unfortunately, this approach {\color{black}to maximizing the number of edges} does not seem to work. 
Our new idea for designing a better approximation algorithm for $MPC^{4+}_v$ is to let the algorithm start by computing a maximum matching $M$ in the input graph $G$. 
The intuition behind this idea is that the paths in an optimal solution for $G$ can cover at most $\frac 5{\color{black}2} |M|$ vertices. 
So, it suffices to find a solution for $G$ {\color{black}of} which the paths cover a large fraction of the endpoints of the edges in $M$. 
To this purpose, our algorithm then constructs a maximum-weight path-cycle cover $C$ in an auxiliary graph suitably constructed from $M$ and $G$. 
Our algorithm further tries to use the edges in $C$ to connect a large fraction of the edges of $M$ into paths with at least four vertices. 
If the algorithm fails to do so, then it will be able to reduce the problem to a smaller problem and in turn uses recursion to get a good solution. 

The rest of the paper is organized as follows.
Section~2 gives some basic definitions.
Section~3 presents the algorithm for $MPC^{4+}_v$.
Section~4 analyzes the approximation ratio of the algorithm.
Lastly, Section~5 concludes the paper with the main algorithm design ideas and some possible future research.

\section{Basic Definitions}\label{sec:def}
{\color{black}A number of symbols and terms are specified in Notations~\ref{nota01}--\ref{nota10}, and used in the rest of the paper.}

\begin{notation}
\label{nota01}
For a graph $G$, $V(G)$ denotes the vertex set of $G$ and $E(G)$ denotes the edge set of $G$.
\end{notation}

Let $G$ be a graph. For a subset $F$ of $E(G)$, we use $V(F)$ to denote 
the set $\{v\in V(G) \mid v$ is an endpoint of an edge in $F\}$.
A {\em spanning subgraph} of $G$ is a subgraph $H$ with $V(H)=V(G)$. 
For a set $F$ of edges in $G$, $G - F$ denotes the spanning subgraph $(V(G),E(G)\setminus F)$.  
In contrast, for a set $F$ of edges with $V(F)\subseteq V(G)$ and $F\cap E(G) = \emptyset$, 
$G + F$ denotes the graph $(V(G),E(G)\cup F)$.
The {\em degree} of a vertex $v$ in $G$, denoted by $d_G(v)$, is the number of edges incident to $v$ in $G$. 
A vertex $v$ of $G$ is {\em isolated} in $G$ if $d_G(v) = 0$. The {\em subgraph induced by} a subset $U$ of 
$V(G)$, denoted by $G[U]$, is the graph $(U, E_U)$, where $E_U=\{\{u,v\}\in E(G) \mid u, v\in U\}$. 
Two vertex-disjoint subgraphs of $G$ are {\em adjacent} in $G$ if $G$ has an edge between them. 

A {\em cycle} in $G$ is a connected subgraph of $G$ in which each vertex is of degree~2. 
A {\em path} in $G$ is either a single vertex of $G$ or a connected subgraph of $G$ 
in which exactly two vertices (called the {\em endpoints}) are of degree~$1$ and 
the others (called the {\em internal vertices}) are of degree~$2$. 
A {\em path component} of $G$ is a connected component of $G$ that is a path. 
If a path component is an edge, then it is called an {\em edge component}. 
The {\em order} of a cycle or path $P$, denoted by $|P|$, is the number of vertices in $P$. 
A {\em $k$-path} of $G$ is a path of order $k$ in $G$, while a {\em $k^+$-path} of $G$ is a path of order $k$ or more in $G$. 
A {\em triangle} of $G$ is a cycle of order~$3$ in $G$. 
A {\em matching} of $G$ is a (possibly empty) set of edges of $G$ in which no two edges share an endpoint. 
A {\em maximum matching} of $G$ is a matching of $G$ whose size is maximized over all matchings of $G$. 
A {\em path-cycle cover} of $G$ is a set $F$ of edges in $G$ such that 
in the spanning subgraph $(V(G), F)$, the degree of each vertex is at most~$2$.
A {\em star} is a connected graph in which at most one vertex is of degree $\ge 2$ and each of the remaining vertices is of degree~$1$.
The vertex of degree $\ge 2$ is called the {\em center}, while the other vertices are the {\em satellites} of the star.
{\color{black}Note that a single edge is not readily a star,
but becomes so after one vertex is chosen to be the center and accordingly the other becomes the satellite.}

\begin{notation}
\label{nota02}
For a graph $G$, 
\begin{itemize}
\parskip=0pt
\item
	$OPT(G)$ denotes an optimal solution for the instance graph $G$ of $MPC^{4+}_v$,
	and $opt(G)$ denotes the total number of vertices in $OPT(G)$;
\item
	$ALG(G)$ denotes the solution for $G$ outputted by a specific algorithm,
	and $alg(G)$ denotes the total number of vertices in $ALG(G)$.
\end{itemize}
\end{notation}

As aforementioned, we have the following fact:

\begin{fact}
\label{fact01}
The order of each path in any feasible solution is between $4$ and $7$.
\end{fact}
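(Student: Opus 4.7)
The plan is to dispatch the two bounds separately, treating Fact~\ref{fact01} as a restatement of the normalization already informally argued in the introduction. The lower bound is immediate from the definition of $MPC^{4+}_v$: every path in a feasible solution has at least $4$ vertices, so I would spend no time on it beyond a one-line remark.

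For the upper bound, I would argue that any feasible solution can be converted, without losing any covered vertex, into one whose paths all have order at most $7$. The key reduction is purely arithmetic: every integer $n \ge 4$ can be written as a sum of integers from $\{4,5,6,7\}$. I would prove this by strong induction on $n$: the base cases $n \in \{4,5,6,7\}$ are trivial, and for $n \ge 8$ we have $n-4 \ge 4$, so by the inductive hypothesis $n-4$ admits such a decomposition, and prepending a summand of $4$ yields one for $n$. Given this, for any path $P$ of order $n \ge 8$ in the solution, writing $n = n_1 + n_2 + \cdots + n_t$ with each $n_i \in \{4,5,6,7\}$ and cutting $P$ at the appropriate internal edges splits $P$ into $t$ vertex-disjoint subpaths whose orders are all in $[4,7]$. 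Since the subpaths together cover exactly the vertices of $P$, replacing $P$ by them in the collection keeps the solution feasible and preserves the total number of covered vertices. Iterating this over all long paths produces an equivalent feasible solution whose paths all satisfy $4 \le |P| \le 7$.

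The main (very mild) obstacle is purely a matter of exposition: making clear that the fact should be read as ``without loss of generality'' rather than as a statement about an arbitrary feasible solution, since a single path on, say, $100$ vertices is by itself technically feasible. Once that interpretation is made explicit, the proof reduces to the one-line arithmetic observation above together with the trivial splitting operation on a path, and there is no further combinatorial content to work out.
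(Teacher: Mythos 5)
Your proposal is correct and matches the paper's own justification, which is the normalization remark in the introduction (any path with at least $2k$ vertices can be split into paths of order between $k$ and $2k-1$; here $k=4$). Your explicit induction showing every $n\ge 4$ decomposes into summands from $\{4,5,6,7\}$, and your note that the fact is a without-loss-of-generality normalization of feasible solutions, are exactly the intended reading.
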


\section{The Algorithm for $MPC^{4+}_v$}\label{sec:1stAlgo}
Throughout the remainder of this paper, we fix an instance $G$ of $MPC^{4+}_v$ for discussion. 
Let $n = |V(G)|$ and $m = |E(G)|$. 

Our algorithm for $MPC^{4+}_v$ consists of multiple phases.
In the first phase, it computes a maximum matching $M$ in $G$ in $O(\sqrt{n}m)$ time \cite{MV80},
initializes a subgraph $H = (V(M), M)$, then repeatedly modifies $H$ and $M$ (cf. Section~\ref{subsec:k=4}) 
in such a way that $M$ always remains to be a maximum matching of $G$ with $M\subseteq E(H)$ 
and $H$ eventually becomes a graph in which 
each connected component is an edge in $M$, a triangle with one edge in $M$, a star with one edge in $M$, 
or a 5-path with two edges in $M$. 

\begin{lemma}
\label{lemma01}
$|V(M)| \ge \frac{4}{5} opt(G)$. 
\end{lemma}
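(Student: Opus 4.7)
The plan is to lower-bound $|V(M)|$ by exhibiting, inside $OPT(G)$, a matching whose vertex set has size at least $\tfrac{4}{5}\,opt(G)$; since $M$ is a maximum matching of $G$, we then have $|V(M)| \ge |V(M')|$ for any such matching $M'$, and the bound follows.

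More concretely, I would proceed as follows. By Fact~\ref{fact01}, each path $P$ in $OPT(G)$ has order $k \in \{4,5,6,7\}$. From each such $P$, pick every other edge along $P$ to obtain a matching inside $P$ of size $\lfloor k/2 \rfloor$, covering exactly $2\lfloor k/2 \rfloor$ vertices of $P$. Since the paths of $OPT(G)$ are pairwise vertex-disjoint, the union $M'$ over all paths is itself a matching of $G$.

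The next step is to check the per-path ratio $2\lfloor k/2 \rfloor / k$ for $k \in \{4,5,6,7\}$: the values are $1, \tfrac{4}{5}, 1, \tfrac{6}{7}$ respectively, and the minimum is $\tfrac{4}{5}$, attained by $5$-paths. Summing over all paths of $OPT(G)$ yields $|V(M')| \ge \tfrac{4}{5}\,opt(G)$. Because $M$ is a maximum matching, $|M| \ge |M'|$, so $|V(M)| = 2|M| \ge 2|M'| = |V(M')| \ge \tfrac{4}{5}\,opt(G)$, as claimed.

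There is really no serious obstacle here; the argument is just a case analysis over the four possible path orders, combined with the maximality of $M$. The only subtle point to note is that one must use the $k=5$ case as the tight one, since a $5$-path contributes only a $2$-edge matching out of $5$ vertices, which is exactly what forces the constant $\tfrac{4}{5}$ rather than something better.
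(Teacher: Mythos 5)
Your proof is correct and follows essentially the same approach as the paper: take alternate edges of each path in $OPT(G)$ to form a matching whose vertex set covers at least $\frac{4}{5}$ of each path (tight for $5$-paths), and conclude by the maximality of $M$. The only cosmetic difference is that you invoke Fact~\ref{fact01} to reduce to the four cases $k\in\{4,5,6,7\}$, whereas the paper's argument handles arbitrary path orders $\ge 4$ directly via the inequality $\frac{\ell}{\ell+1}\ge\frac{4}{5}$ for even $\ell\ge 4$.
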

\begin{proof}
Consider an arbitrary path $P$ in $OPT(G)$.
Let $e_1$, \ldots, $e_\ell$ be the edges of $P$ and suppose that they appear in $P$ in this order from one endpoint to the other.
Obviously, $M_o = \{e_i \mid i \mbox{ is odd}\}$ is a matching. If $\ell$ is odd, $V(P) = V(M_o)$; otherwise, exactly one vertex of $P$ is not in $V(M_o)$. 
We claim that $|V(M_o)| \ge \frac{4}{5} |V(P)|$. 
This is clearly true if $\ell$ is odd. So, we may assume below that $\ell$ is even. 
Then, $\ell\ge 4$ because $\ell+1=|V(P)|\ge 4$ and both $\ell$ and $4$ are even. 
Now, since $|V(M_o)| \ge \frac{\ell}{\ell+1} |V(P)| $, we have $|V(M_o)| \ge \frac{4}{5} |V(P)|$.
Note that  {\color{black}$\cup_P M_o$ is a matching and} $opt(G) = \sum_P |V(P)|$, where $P$ ranges over all paths in $OPT(G)$.
So, by the claim, $|V(M)| \ge \frac{4}{5} opt(G)$.
\end{proof}

\subsection{Modifying $H$ and $M$}\label{subsec:k=4}
We here describe a process for modifying $H$ and $M$ iteratively. 
The process consists of two steps. 
During the first step, the following will be an invariant.

\begin{invariant}
\label{inva01}
$M \subseteq E(H)$ and each connected component $K$ of $H$ is an edge or a $5$-path.
Moreover, if $K$ is an edge, then this edge is in $M$;
if $K$ is a $5$-path, then the two edges of $E(K)$ incident to the endpoints of $K$ are in $M$.
\end{invariant}

Initially, Invariant~\ref{inva01} clearly holds.

\begin{definition}
\label{def01}
An {\em augmenting triple} with respect to $H$ is a triple $(u_0, e_0 = \{v_0, w_0\}, e_1 = \{v_1, w_1\})$ such that
$u_0 \in V(G) \setminus V(H)$, both $e_0$ and $e_1$ are edge components of $H$, and one of the following two conditions holds:
\begin{itemize}
\parskip=0pt
\item[C1.]
	$\{u_0, v_0\}, \{u_0, v_1\} \in E(G)$.
\item[C2.]
	$\{u_0, v_0\}, \{w_0, v_1\} \in E(G)$.
\end{itemize}
\end{definition}

\begin{definition}
\label{def02}
{\em Modifying $H$ and $M$ with an augmenting triple $(u_0, e_0, e_1)$ w.r.t. $H$} is the operation of modifying $H$ and $M$ as follows:
\begin{description}
\parskip=0pt
\item[\rm Case~1:]
	Condition~C1 holds. In this case, add $u_0$ and the edges $\{u_0, v_0\}$, $\{u_0, v_1\}$ to $H$.
\item[\rm Case~2:]
	Condition~C2 holds. In this case, add $u_0$ and the edges $\{u_0, v_0\}$, $\{w_0, v_1\}$ to $H$ and then modify $M$ by replacing $e_0$ with $\{u_0, v_0\}$. 
\end{description}
\end{definition}

It is possible that an augmenting triple $(u_0, e_0, e_1)$ satisfies both Conditions~C1 and~C2. 
If this happens, then we prefer Condition~C1, i.e., we modify $H$ as in Case~1. 

The first step of the modification process is as follows.
\begin{description}
\parskip=0pt
\item[Step 1.1]
	Repeatedly modify $H$ and $M$ with an augmenting triple until no such triple exists. 
\end{description}

Modifying $M$ and $H$ with a given augmenting triple takes $O(1)$ time and produces one more 5-path in $H$. 
So, there are at most $O(n)$ repetitions in Step~1.1. 
To decide whether there is an augmenting triple satisfying Condition~C1, 
it suffices to check, for each vertex $u \in V(G)\setminus V(H)$, whether two edges incident to $u$ in $G$ can be used to connect two edge components of $H$ into a 5-path of $G$. 
So, this takes $O(\min\{m^2, n^3\})$ time. 
Similarly, to decide whether there is an augmenting triple satisfying Condition~C2, it suffices to check, 
for each edge component $e = \{v, w\}$ of $H$, whether $v$ has a neighbor $u \in V(G)\setminus V(H)$ and $w$ has a neighbor in $V(M \setminus \{e\})$. 
So, this takes $O(\min\{m^2, n^3\})$ time, too. In total, Step~1.1 takes $O(\min\{m^2n, n^4\})$ time.
We have the following lemma on $H$ and $M$ when Step 1.1 terminates.

\begin{lemma}
\label{lemma02}
When Step 1.1 terminates, the following statements on $H$ and $M$ hold:
\begin{enumerate}
\parskip=0pt
\item
	$M$ is a maximum matching in $G$. 
\item
	Each connected component of $H$ is a $5$-path or an edge of $M$.
	Moreover, if it is a $5$-path, then the two edges incident to the endpoints are in $M$.
\item
	If $e = \{v, w\}$ is an edge component of $H$ and $u$ is a vertex of $V(G)\setminus V(H)$ such that $\{u, v\} \in E(G)$,
	then an edge of $G$ can connect $u$ ($w$, respectively) only to $w$ ($u$, respectively) or the internal but not 
	the middle vertices of $5$-paths in $H$.
\end{enumerate}
\end{lemma}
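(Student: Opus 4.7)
The plan is to dispatch Parts~1 and~2 quickly by induction on the number of modifications performed in Step~1.1, and then to tackle Part~3 with a careful case analysis combining two forbidden structures at termination: augmenting triples (absent by the stopping rule) and $M$-augmenting paths (absent by Part~1).

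For Part~1, I would observe that $M$ starts as a maximum matching of $G$. Case~1 of Definition~\ref{def02} leaves $M$ untouched, while in Case~2 the replacement of $e_0 = \{v_0, w_0\}$ by $\{u_0, v_0\}$ preserves both the matching property and the cardinality of $M$, since $u_0 \in V(G)\setminus V(H)\subseteq V(G)\setminus V(M)$. Part~2 then follows once Invariant~\ref{inva01} is verified to be preserved by both cases: in Case~1 the two edge components are glued through $u_0$ into a $5$-path whose middle is $u_0$ and whose end-edges are $\{v_0,w_0\}$ and $\{v_1,w_1\}$, both in $M$; in Case~2 they are glued into a $5$-path whose middle is $w_0$ and whose end-edges are $\{u_0,v_0\}$ and $\{v_1,w_1\}$, both in the updated $M$.

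The main work is Part~3. A preliminary observation used throughout is that the middle vertex of every $5$-path of $H$ is not in $V(M)$: in both update cases, the vertex that becomes the middle of the newly produced $5$-path ($u_0$ in Case~1, $w_0$ after the $M$-swap in Case~2) is not in $V(M)$, and $5$-paths are never modified further by Step~1.1. Given this, for any neighbour $y$ of $u$ with $y\ne v$ I would case-split on where $y$ sits in $H$. If $y\notin V(H)$, then $\{u,y\}$ is a length-$1$ $M$-augmenting path, contradicting Part~1. If $y$ lies in another edge component, then $(u,\{v,w\},\cdot)$ is an augmenting triple satisfying C1, contradicting termination. If $y$ is the middle of a $5$-path, then $\{u,y\}$ is again a length-$1$ $M$-augmenting path. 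If $y$ is an endpoint $p_1$ of a $5$-path with vertex sequence $p_1,p_2,p_3,p_4,p_5$, then $u,p_1,p_2,p_3$ is an $M$-augmenting path of length~$3$. The only remaining possibility is that $y$ is an internal but non-middle vertex ($p_2$ or $p_4$) of some $5$-path, which is exactly what the claim allows. The argument for neighbours of $w$ is analogous but routes every candidate alternating path through the matched edge $\{v,w\}$ and the unmatched vertex $u$: a neighbour outside $V(H)$ produces the augmenting path $y,w,v,u$ of length~$3$; an endpoint of a different edge component produces an augmenting triple satisfying C2; and the middle or an endpoint of a $5$-path produces an $M$-augmenting path of length~$3$ or~$5$ respectively.

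I expect the main obstacle to be the meticulous verification, in each sub-case of Part~3, that the candidate alternating path really does alternate between $M$-edges and non-$M$-edges; the unmatched-middle observation for $5$-paths is what makes this bookkeeping go through uniformly.
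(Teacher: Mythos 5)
Your proof is correct and follows essentially the same strategy as the paper's: maximality of $M$ (no $M$-augmenting path) rules out neighbours outside $V(H)$ and at the endpoints or middles of $5$-paths, while the termination condition of Step~1.1 (no augmenting triple, via C1 for $u$ and C2 for $w$) rules out neighbours in other edge components. The only cosmetic difference is that the paper dispatches the vertex $w$ by a symmetry argument (swapping $\{u,v\}$ with $e$), whereas you construct the alternating paths through $\{v,w\}$ and $u$ explicitly; your unmatched-middle observation is exactly what makes both versions work.
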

\begin{proof}
The first statement is true {\color{black}since $M$ is modified by edge swapping only during Step 1.1.

The second statement is obvious too,
since when an augmenting triple with respect to the current $H$ and $M$ is identified,
two edge components and an outside vertex are merged into a $5$-path component, which stays untouched till Step 1.1 terminates.

We next prove the third statement.
Consider the vertex $u \in V(G)\setminus V(H)$. No edge of $G$ can connect $u$ to a vertex not in $V(M)$ or an endpoint of a $5$-path, 
due to $M$ being a maximum matching. No edge of $G$ can connect $u$ to another edge component of $H$ than $e$ either, 
since otherwise Step 1.1 would still be applicable.
It follows that the only possible neighbors of $u$ are the internal but not the middle vertices of $5$-paths, besides the vertices $v$ and $w$ of $e$.

The third statement in the lemma holds for the vertex $w$ for the same reasons,
as one can swap the edge $\{u, v\}$ with the edge $e$ of $H$ to obtain an essentially equivalent graph.}
\end{proof}

By Lemma~\ref{lemma02}, we continue to modify $H$ (but not $M$) in the next step to add edges connecting the outside vertices and the edge components.

\begin{description} 
\parskip=0pt
\item[Step 1.2]
	Add all those edges $\{u, v\} \in E(G)$ such that $u \in V(G)\setminus V(H)$ and $v$ is an endpoint of an edge component of $H$,
	as well as their endpoints $u$, to $H$. 
\end{description}

Step~1.2 is done in $O(m)$ time.
We have the next lemma on $H$ and $M$ at the end of Step 1.2.

\begin{lemma}
\label{lemma03}
Suppose $H$ and $M$ have been modified as in the above Steps 1.1--1.2 in $O(\min\{m^2 n, n^4\})$ time.
The following statements hold.
\begin{enumerate}
\parskip=0pt
\item
	$M$ is a maximum matching in $G$. 
\item
	Each connected component $K$ of $H$ is a $5$-path, an edge, a triangle, or a star.
	Moreover, if $K$ is a $5$-path, then the two edges incident to the endpoints are in $M$;
	otherwise, exactly one edge of $E(K)$ is in $M$.
\item
	If $K_1$ and $K_2$ are two different connected components of $H$ such that 
	there is an edge $\{v_1, v_2\} \in E(G)$ with $v_i \in V(K_i)$, and either $K_1$ is a star and $v_1$ is its satellite or $K_1$ is a triangle, 
	then $K_2$ is a $5$-path and $v_2$ is an internal but not the middle vertex of $K_2$. 
\item
	For each vertex $u \in V(G)\setminus V(H)$, every neighbor of $u$ in $G$ is an internal but not the middle vertex of a $5$-path in $H$.
\end{enumerate}
\end{lemma}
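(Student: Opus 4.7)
The plan is to deduce Statements~1--4 from Lemma~\ref{lemma02} (the structural description at the end of Step~1.1) together with the fact that Step~1.2 exhaustively consumes every edge from an outside vertex to an endpoint of an edge component. Statement~1 is immediate because Step~1.2 alters only $H$, not $M$, and Lemma~\ref{lemma02} already guarantees that $M$ is a maximum matching at the end of Step~1.1.

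For Statement~2, the 5-path components of $H$ are untouched by Step~1.2 and therefore inherit the required property directly from Lemma~\ref{lemma02}. The substance is the fate of each edge component $e = \{v, w\}$. Let $U_v$ and $U_w$ denote its sets of outside neighbors at the start of Step~1.2. The plan is to apply part~3 of Lemma~\ref{lemma02} element-wise: taking two distinct witnesses from $U_v$ forces $U_w = \emptyset$, and taking one witness from each of $U_v$ and $U_w$ forces $U_v = U_w$ to be a common singleton. This reduces the possible outcomes of Step~1.2 on $e$ to three---the untouched edge $\{v, w\}$ (both sets empty), a star centered at $v$ or $w$ (exactly one side empty), or the triangle on $\{u, v, w\}$ (shared singleton)---and in every case the only $M$-edge inside is $\{v, w\}$, since each added vertex was unmatched. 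The would-be 4-path $u - v - w - u'$ is the only worrisome non-star shape, and ruling it out via the above dichotomy is the main obstacle.

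For Statement~3, I would pick an outside witness $u$ that triggered the modification of the original edge component $e = \{v, w\}$ into $K_1$. Part~3 of Lemma~\ref{lemma02} restricts every edge incident to $u$ or $w$ and leaving the component to land on an internal-non-middle vertex of a 5-path. For a star, this already suffices because each satellite is either $w$ or a member of $U_v$, both of which the lemma directly constrains. For a triangle, the lemma is invoked twice---once with the outside witness viewed via $\{u, v\}$, once via $\{u, w\}$---so that all three vertices inherit the restriction. In either case, any cross-component edge $\{v_1, v_2\}$ with $v_1$ a satellite or triangle vertex forces $K_2$ to be a 5-path and $v_2$ to be an internal-non-middle vertex.

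For Statement~4, let $u \in V(G) \setminus V(H)$ after Step~1.2; then $u$ was already outside at the end of Step~1.1, since Step~1.2 only adds vertices to $H$. Exhaustiveness of Step~1.2 forbids $u$ from being adjacent to any pre-Step-1.2 edge-component endpoint, which covers the centers of the new stars, the two originally-matched vertices of each new triangle, and the endpoints of any remaining edge component. Maximality of $M$ forbids $u$ from being adjacent to another unmatched vertex, which eliminates outside satellites, the third vertex of each triangle, and the middle vertex of any 5-path. Finally, an edge from $u$ to an endpoint of a 5-path would yield an $M$-augmenting path of length three reaching the unmatched middle vertex, again contradicting maximality. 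The only remaining possibilities for a neighbor of $u$ are internal-non-middle vertices of 5-paths, as claimed.
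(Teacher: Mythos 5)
Your proof is correct and follows essentially the same route as the paper: everything is deduced from Lemma~\ref{lemma02}, the exhaustiveness of Steps~1.1--1.2, and the maximality of $M$ (your element-wise use of part~3 of Lemma~\ref{lemma02} to rule out the $u$-$v$-$w$-$u'$ configuration is just the paper's augmenting-path argument routed through the already-proved lemma, and your treatments of Statements~3 and~4 are in fact more explicit than the paper's). The only nit is that in Statement~4 your illustrative list for the ``edge-component endpoint'' case omits the originally matched satellite $w$ of a star, but your stated general principle already covers it.
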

\begin{proof}
The first statement follows from Lemma~\ref{lemma02}, since $M$ stays untouched during Step 1.2.

The second statement follows from Lemma~\ref{lemma02} and the fact that when Step 1.1 terminates,
the two endpoints of an edge component of $H$ cannot be adjacent to two distinct outside vertices due to $M$ being a maximum matching in $G$.
That is, if the two endpoints of an edge component of $H$ are not adjacent to any outside vertex, then it remains as an edge component at the end of Step 1.2;
if the two endpoints of an edge component of $H$ are adjacent to a common outside vertex, then it becomes a triangle component at the end of Step 1.2;
otherwise, exactly one of the two endpoints of an edge component of $H$ is adjacent to one or more outside vertices,
and then it becomes a star component at the end of Step 1.2, with the endpoint being the center.
Note that all these possibilities are originated from an edge component and thus they contain exactly one edge of $M$.

For the third statement, one sees that we may assume without loss of generality that $v_1$ is an outside vertex of $H$ when Step 1.1 terminates,
i.e., the same as the vertex $u$ in Lemma~\ref{lemma02};
therefore, by Lemma~\ref{lemma02} again $v_2$ is an internal but not the middle vertex of a $5$-path.
The last statement holds again due to $M$ being a maximum matching in $G$.
\end{proof}

\subsection{Bad components and rescuing them}\label{subsec:rescue}
We consider the subgraph $H$ and the maximum matching $M$ at the end of Step 1.2.

\begin{definition}
\label{def03}
A {\em bad component} of $H$ is a connected component that is not a $5$-path.
\end{definition}

{\color{black}In the sequel, a component always means a {\em connected} component.}
In the second phase, we {\em rescue} as many bad components of $H$ as possible, by performing three steps of operations.

\begin{description}
\parskip=0pt
\item[Step 2.1] 
Construct a spanning subgraph $G_1$ of $G$ of which the edge set consists of all the edges $\{v_1, v_2\}$ of $G$ 
such that $v_1$ and $v_2$ appear in different components of $H$ and at least one of the components is bad. 
\end{description}

\begin{definition}
\label{def04}
A set $F$ of edges in $G_1$ {\em saturates a bad component} $K$ of $H$ if at least one edge in $F$ is incident to a vertex of $K$.
The {\em weight} of $F$ is the number of bad components saturated by $F$.
\end{definition}

\begin{lemma}
\label{lemma04}
A maximum-weighted path-cycle cover in $G_1$ can be computed in $O(\min\{nm,n^2\}\log n)$ time.
\end{lemma}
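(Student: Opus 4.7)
The plan is to reduce the problem of computing a maximum-weight path-cycle cover in $G_1$ (with the saturation weight function) to a maximum-weight $b$-matching problem on a contracted auxiliary graph, and then to solve that problem by a known maximum-weight matching algorithm.

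First, I construct an auxiliary graph $G_1^*$ by contracting each bad component $B$ of $H$ into a single super-vertex $x_B$, while leaving the vertices of $5$-path components of $H$ intact; parallel edges are consolidated into single edges. Since $G_1$ has no edge between two $5$-path components, every edge of $G_1^*$ has at least one super-vertex endpoint. I assign weights $w(e) = 2$ when both endpoints of $e$ are super-vertices and $w(e) = 1$ otherwise, and vertex capacities $b(x_B) = 1$ for every super-vertex and $b(v) = 2$ for every $5$-path vertex $v$. I then seek a maximum-weight $b$-matching in $G_1^*$ with respect to these capacities and weights.

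The key claim is that the maximum-weight $b$-matching in $G_1^*$ has the same value as the maximum-weight path-cycle cover in $G_1$. For one direction, given a path-cycle cover $F$ of $G_1$ saturating a set $S$ of bad components, I form $F^* \subseteq E(G_1^*)$ by picking, for each $B \in S$, a single edge of $F$ incident to $V(B)$; if a picked edge spans two bad components of $S$, it serves as the representative for both. The $b$-matching constraints hold because each super-vertex is incident to at most one edge of $F^*$ by construction, and each $5$-path vertex inherits the max-degree-$2$ bound from $F$. A handshake count gives $w(F^*) = \sum_B \deg_{F^*}(x_B) = |S|$, matching the weight of $F$. Conversely, any $b$-matching $M^*$ in $G_1^*$ lifts to a path-cycle cover in $G_1$ with the same saturation count, by choosing a representative edge of $G_1$ for each edge of $M^*$; the degree bound at every vertex of $G_1$ is preserved since each super-vertex is touched by at most one edge of $M^*$.

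Finally, I compute the maximum-weight $b$-matching by applying the standard reduction to a maximum-weight matching in a general graph; since the capacities are bounded by $2$, this incurs only a constant-factor blowup in both vertices and edges. Invoking a classical maximum-weight matching algorithm for general graphs then yields the claimed $O(\min\{nm, n^2\}\log n)$ running time. The main obstacle is verifying the tight correspondence between the two problems in both directions, particularly for edges spanning two bad components; once that is in place, the running-time bound follows directly from the matching subroutine.
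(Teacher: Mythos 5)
There is a genuine gap in your reduction: contracting each bad component $B$ into a super-vertex of capacity $b(x_B)=1$ caps the number of cover edges incident to $B$ at one, whereas a maximum-weight path-cycle cover may need \emph{several} edges incident to the same bad component in order to saturate several of its neighbours. Concretely, take $G$ to be the $6$-path $a_1 b_1 a_2 b_2 a_3 b_3$ with $M=\{\{a_i,b_i\}: i=1,2,3\}$; then $H$ consists of three edge components $B_1,B_2,B_3$ (all bad), and $G_1$ contains the two edges $\{b_1,a_2\}$ and $\{b_2,a_3\}$. Taking both edges is a valid path-cycle cover (every vertex has degree at most $2$) of weight $3$, but in your $G_1^*$ these become the two edges of a path $x_{B_1}$--$x_{B_2}$--$x_{B_3}$ sharing the capacity-$1$ vertex $x_{B_2}$, so your maximum-weight $b$-matching has value only $2$. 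The forward direction of your ``key claim'' therefore fails: your construction of $F^*$ can force two representative edges onto the same super-vertex (here $x_{B_2}$ must absorb a representative for $B_1$ and one for $B_3$), and no choice of representatives repairs this. Since Lemma~\ref{lemma05} relies on $C$ being a genuinely maximum-weight path-cycle cover, computing the $b$-matching optimum instead would undercount $|V(M_C)|$ and break the ratio analysis. The underlying difficulty is that the saturation weight is a coverage (Boolean-OR) objective over the vertices of each bad component, not a linear function of edge multiplicities at a contracted node.

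The paper avoids this by keeping every vertex of every bad component intact with its full degree budget and instead attaching a gadget $\{x_i,y_i,z_i\}$ to each bad component $B_i$: it forces $f(v)=g(v)=2$ on each $v\in V(B_i)$, uses weight-$0$ filler edges to $x_i,y_i$, and places the unit weight on the edges to $z_i$ (with $g(z_i)=1$), so that weight $1$ is collectable exactly when at least one real edge of $G_1$ is incident to $B_i$, no matter how many such edges there are. If you want to salvage a contraction-based argument you would need a gadget with the same ``at least one edge, counted once, with no cap on the total number of incident edges'' behaviour; a single capacity-$1$ (or any fixed-capacity) super-vertex cannot provide it.
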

\begin{proof}
The proof is a reduction to the maximum-weight $[f, g]$-factor problem. 
Recall that for two functions $f$ and $g$ mapping each vertex $v$ of a{\color{black}n edge-weighted} graph $G'$ to
two non-negative integers $f(v), g(v)$ with $f(v) \le g(v)$,
an {\em $[f,g]$-factor} of $G'$ is a set $F$ of edges in $G'$ such that in the spanning subgraph $(V(G'), F)$, 
the degree of each vertex $v$ is at least $f(v)$ and at most $g(v)$. 
The {\em weight} of an $[f,g]$-factor $F$ of $G'$ is the total weight of the edges in $F$. 
Given $G'$, $f$, and $g$, a maximum-weight $[f,g]$-factor of $G'$ can be computed in $O(m'n'\log n')$ time~\cite{Gab83}, 
where $m'=|E(G')|$ and $n'=|V(G')|$. 

Let $B_1, B_2, \ldots, B_h$ be the bad components of $H$.
We construct an auxiliary edge-weighted graph $G' = (V(G) \cup X, E(G_1) \cup F_1 \cup F_2)$ as follows:
\begin{itemize}
\parskip=0pt
\item $X = \{x_i, y_i, z_i \mid 1 \le i \le h \}$. 

\item $F_1 = \{\{x_i, v\}, \{y_i, v\} \mid v \in V(B_i), 1 \le i \le h\}$ and 
$F_2 = \{\{x_i, z_i\}, \{y_i, z_i\} \mid 1 \le i \le h\}$.

\item The weight of each edge in $E(G_1) \cup F_1$ is $0$ while the weight of each edge in $F_2$ is $1$.

\item For each vertex $v \in \bigcup_{i = 1}^h V(B_i)$, let $f(v) = g(v) = 2$. 

\item For each $v \in V(G) - \bigcup_{i = 1}^h V(B_i)$, let $f(v) = 0$ and $g(v) = 2$.

\item For each $i \in\{ 1, 2, \ldots, h\}$, $f(x_i) = f(y_i) = f(z_i) = 0$ and $g(x_i) = g(y_i) = |V(B_i)|$, $g(z_i) = 1$.
\end{itemize}

We next prove that the maximum weight of an $[f,g]$-factor of $G'$ equals the maximum weight of a path-cycle cover of $G_1$. 

Given a maximum-weight path-cycle cover $C$ of $G_1$, we can obtain an $[f, g]$-factor $F$ for $G'$ as follows: 
Initially, we set $F = C$. 
Then, for each bad component $B_i$ and each vertex $v$ in $B_i$, 
we perform one of the following according to the degree of $v$ in the graph $(V(G), F)$. 
\begin{itemize}
\parskip=0pt
\item If the degree of $v$ in the graph $(V(G), F)$ is~0, 
then add the edges $\{v, x_i\}, \{v, y_i\}$ to $F$.

\item If the degree of $v$ in the graph $(V(G), F)$ is~1, then add the edge $\{v, x_i\}$ to $F$, 
and further add the edge $\{y_i, z_i\}$ to $F$ if it has not been added to $F$.

\item If the degree of $v$ in the graph $(V(G), F)$ is~2, then add the edge $\{y_i, z_i\}$ to $F$ 
if it has not been added to $F$.
\end{itemize}
Clearly, $F$ is an $[f,g]$-factor of $G'$. 
We claim that the weight of $F$ is no less than that of $C$.
To see this, consider a bad component $B_i$ saturated by $C$. 
Then, there exists a vertex $v$ in $B_i$ such that $C$ contains an edge incident to $v$. 
Hence, by the construction of $F$, $F$ contains $\{y_i, z_i\}$. 
Since the weight of $\{y_i, z_i\}$ is $1$, the claim holds. 

Conversely, given a maximum-weight $[f, g]$-factor $F$ of $G'$, we obtain a subset $C$ of $E(G_1)$ with $C = E(G_1) \cap F$. 
Since $g(v) = 2$ for each vertex $v \in V(G_1)$, $C$ is a path-cycle cover of $G_1$.
We claim that the weight of $C$ is no less than that of $F$. 
To see this, consider a bad component $B_i$ such that $\{x_i, z_i\}$ or $\{y_i, z_i\}$ is in $F$.
Since $g(z_i)=1$, exactly one of $\{x_i, z_i\}$ and $\{y_i, z_i\}$ is in $F$. 
Without loss of generality, we assume $\{x_i, z_i\}$ is in $F$. 
Then, there exists a vertex $v$ in $B_i$ such that the edge $\{v, x_i\}$ is not in $F$. 
Since $f(v) = g(v) = 2$, $C$ contains an edge incident to $v$ and hence $C$ saturates $B_i$. 
So, the claim holds. 

By the above two claims,  the maximum weight of an $[f,g]$-factor of $G'$ equals the maximum weight 
of a path-cycle cover of $G_1$. 
Now, since $|V(G')|\le 4n$ and $|E(G')|\le m+4n$, the running time is bounded by $O(\min\{mn, n^2\}\log n)$.
So, the lemma holds.
\end{proof}

%
\begin{description}
\parskip=0pt
\item[Step 2.2.]
	Compute a maximum-weight path-cycle cover $C$ of $G_1$ (as in the proof of Lemma~\ref{lemma04}).
\item[Step 2.3.]
	As long as $C$ contains an edge $e$ such that $C \setminus e$ has the same weight as $C$, 
	repeatedly remove $e$ from $C$.
\end{description}

\begin{notation}
\label{nota04}
\ \\
\vspace{-5mm}
\begin{itemize}
\parskip=0pt
\item
	$G_1$ denotes the spanning subgraph of $G$ constructed in Step 2.1.
\item
	$C$ denotes the maximum weight path-cycle cover of $G_1$ computed at the end of Step 2.3.
\item
	$M_C$ denotes the subset of the maximum matching $M$ containing those edges in $5$-paths of $H$ 
	or in bad components of $H$ saturated by $C$. 
\end{itemize}
\end{notation}

The next lemma will be crucial for analyzing the approximation ratio of our algorithm.

\begin{lemma}
\label{lemma05}
We have $|V(M_C)| \ge \frac 45 opt(G)$.
\end{lemma}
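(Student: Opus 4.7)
The plan is to establish the identity $|V(M_C)| = |V(M)| - 2b_u$, where $b_u$ denotes the number of bad components of $H$ unsaturated by $C$, and then to upgrade Lemma~\ref{lemma01} to $|V(M)| \ge \frac{4}{5}\,opt(G) + 2b_u$ by combining the maximality of $C$ with the maximality of $M$. The identity follows directly from Lemma~\ref{lemma03}(2): each $5$-path of $H$ contributes its two endpoint $M$-edges ($4$ vertices) to $M_C$, each saturated bad component contributes its unique $M$-edge ($2$ vertices), and each unsaturated bad component contributes nothing.

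To prove the upgraded bound, I would first use the maximality of $C$ to locate at least $b_u$ bad components of $H$ disjoint from $V(OPT(G))$. Setting $F_O := E(OPT(G)) \cap E(G_1)$, observe that $F_O$ is a path-cycle cover of $G_1$ since every vertex has degree at most $2$ in $OPT(G)$. The crux is that $F_O$ saturates every bad component $B$ meeting $V(OPT(G))$: because $B$ is an edge, triangle, or star, it contains no path of order~$4$, so any $OPT$ path through $B$ must leave $B$ along some edge $\{u,w\}$ with $u\in V(B)$ and $w\notin V(B)$. By Lemma~\ref{lemma03}(4), $w$ cannot lie in $V(G)\setminus V(H)$, since such vertices are adjacent only to non-middle internal vertices of $5$-paths, never to vertices of bad components; hence $w$ lies in a different component of $H$, so $\{u,w\}\in E(G_1)\cap E(OPT(G)) = F_O$, saturating $B$. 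The maximality of $C$ then ensures that the number of bad components saturated by $C$ is at least the number saturated by $F_O$, so $b_s$ is at least the number of bad components meeting $V(OPT(G))$, leaving at least $b - b_s = b_u$ bad components disjoint from $V(OPT(G))$.

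Picking any $b_u$ of these disjoint bad components and letting $N$ be their $M$-edges (one per component), $V(N)$ is disjoint from $V(OPT(G)) \supseteq V(M_o)$, where $M_o$ is the odd-indexed alternating matching on $OPT(G)$ used in the proof of Lemma~\ref{lemma01}. Hence $M_o \cup N$ is a matching of $G$, and the maximality of $M$ yields $|M| \ge |M_o| + b_u$, i.e., $|V(M)| \ge |V(M_o)| + 2b_u \ge \frac{4}{5}\,opt(G) + 2b_u$; combining with the identity finishes the proof. The main obstacle I foresee is verifying the saturation claim cleanly: one must carefully exclude the possibility that the exit edge of an $OPT$ path leaves a bad component directly into $V(G)\setminus V(H)$, which relies on the specific non-adjacency property captured in Lemma~\ref{lemma03}(4).
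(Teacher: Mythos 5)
Your proof is correct and follows essentially the same route as the paper's: both compare the maximum-weight cover $C$ against the path-cycle cover $E(OPT(G))\cap E(G_1)$ to show that at least $b_u$ bad components avoid $V(OPT(G))$, and both then augment the matching from Lemma~\ref{lemma01} by the $M$-edges of those components and invoke the maximality of $M$. The only cosmetic difference is that the paper routes the last step through a maximum matching of the reduced graph $G_o$, whereas you augment the Lemma~\ref{lemma01} matching directly; your explicit use of Lemma~\ref{lemma03}(4) to rule out exit edges into $V(G)\setminus V(H)$ makes the saturation claim slightly more careful than the paper's.
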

\begin{proof}
Let $m_b$ be the total number of edges of $M$ contained in the bad components of $H$. 
Let $B_1, \ldots, B_h$ be the bad components such that no edge in $OPT(G)$ is incident to any vertex of $B_i$, $i = 1, \ldots, h$.
Note that $E(G_1) \cap E(OPT(G))$ is a path-cycle cover of $G_1$ with weight $m_b - h$.

Let $\ell$ be the number of bad components not saturated by $C$. 
Then, $|M| = |M_C| + \ell$ since each bad component has exactly one edge in $M$ by Lemma~\ref{lemma03}.
Moreover, the weight of the path-cycle cover $C$ is $m_b -\ell$.
Since we compute a maximum-weight path-cycle cover of $G_1$ by Lemma \ref{lemma04},
$m_b - h \le m_b - \ell$ and in turn $h \ge \ell$. 

A crucial point is that for each bad component $B_{i}$ with $1\le i\le h$, 
no vertex of $B_{i}$ can appear in $OPT(G)$ because $|opt(B_{i})| = 0$ and 
$OPT(G)$ has no edge connecting $B_i$ to the outside of $B_i$. 
By this point, $OPT(G)$ is actually an optimal solution for the graph $G_o$ obtained from $G$ by removing the vertices of $B_i$ for every $i\in\{1,\ldots, h\}$. 
So, by Lemma~\ref{lemma01}, $|V(M_o)| \ge \frac 45 opt(G)$, where $M_o$ is a maximum matching in $G_o$. 

Note that $M_o\bigcup \left(\bigcup^{h}_{i=1} E(B_i) \cap M\right)$ is a matching of $G$ and 
its size is $|M_o| + h$ because $|E(B_i) \cap M|=1$ by Lemma~\ref{lemma03}.
Since $M$ is a maximum matching of $G$, $|M_o| + h \le |M|$. 
Recall that $h \ge \ell$ and $|M| = |M_C| + \ell$. 
Hence, $|M_o| \le |M_C|$ and $|V(M_C)| \ge \frac 45 opt(G)$.
\end{proof}

\subsection{Structure of composite components of $H+C$}
By Lemma~\ref{lemma05}, $|V(M_C)|$ is relatively large compared to $opt(G)$. 
Intuitively speaking, in order to obtain a good approximate solution for $G$, it suffices to focus on $M_C$ {\color{black}instead of its superset $M$}.
That is, we may ignore the edges of $M$ in the bad components {\em not} saturated by $C$. 

\begin{notation}
\label{nota05}
\ \\
\vspace{-5mm}
\begin{itemize}
\parskip=0pt
\item
	$H+C$ denotes the spanning subgraph $(V(G), E(H)\cup C)$.
	In the sequel, we use $K$ to refer to a component in $H+C$.
\item
	$(H+C)_m$ denotes the graph obtained from $H+C$ by contracting each component of $H$ into a single node. 
	In other words, the nodes of $(H+C)_m$ one-to-one correspond to the components of $H$ and
	two nodes are adjacent in $(H+C)_m$ if and only if $C$ contains an edge between the two corresponding components.

	We use $(K)_m$ to refer to the component of $(H+C)_m$ corresponding to the component $K$ in $H+C$.
\end{itemize}
\end{notation}

\begin{definition}
\label{def05}
A {\em composite component} $K$ of $H+C$ is one that contains two or more components of $H$, which are connected through the edges of $C$.

In contrast, an {\em isolated component} $K$ of $H+C$ is one that contains exactly one component of $H$.
\end{definition}

\begin{lemma}
\label{lemma06}
For each component $(K)_m$ of $(H+C)_m$ (see Notation~\ref{nota05}), the following statements hold:
\begin{enumerate}
\parskip=0pt
\item $(K)_m$ is an isolated node, an edge, or a star.

\item If $(K)_m$ is an edge, then at least one endpoint of $(K)_m$ corresponds to a bad component of $H$.

\item If $(K)_m$ is a star, then each satellite of $(K)_m$ corresponds to a bad component of $H$.
\end{enumerate}
\end{lemma}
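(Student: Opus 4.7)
The plan is to lean on two ingredients already established: (i) by the construction of $G_1$, every edge of $C$ has at least one endpoint in a bad component of $H$; and (ii) by Step 2.3, $C$ is edge-minimal in the sense that removing any single edge strictly decreases its weight, i.e., desaturates some bad component of $H$. All three claims will be derived by combining (i) and (ii) to forbid unwanted configurations in the contracted graph $(H+C)_m$.

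For claim 1, I would first argue that $(H+C)_m$ has no parallel edges and no cycles. If $C$ contained two edges between the same pair of $H$-components, then discarding either one would keep both components saturated through the other, contradicting (ii). If $(K)_m$ contained a cycle, then discarding any single edge on the cycle would leave each incident $H$-component incident to another edge of $C$ along the cycle, hence still saturated if bad, again contradicting (ii). Thus $(K)_m$ is a simple tree. Next I would show that $(K)_m$ has at most one node of degree $\ge 2$: otherwise, two such nodes chosen at minimum tree-distance must be adjacent (any internal node on the path between them would itself have degree $\ge 2$ and be strictly closer to both), giving adjacent nodes $A_1, A_2$ of degree $\ge 2$. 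Removing the edge of $C$ realizing $\{A_1,A_2\}$ still leaves each $A_j$ incident to at least one remaining edge of $C$; this matters only when $A_j$ is bad, in which case $A_j$ stays saturated, contradicting (ii). Hence $(K)_m$ is an isolated node, an edge, or a star.

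Claim 2 is then immediate from (i): an edge of $(K)_m$ corresponds to an edge of $C \subseteq E(G_1)$, and the definition of $G_1$ forces at least one endpoint to correspond to a bad component. For claim 3, let $A_0$ be the center of a star $(K)_m$ and $A_i$ any satellite, realized by an edge $e \in C$. If $A_i$ corresponded to a good component, then removing $e$ from $C$ would preserve the weight: the center $A_0$ has degree $\ge 2$ in $(K)_m$ and therefore retains at least one other incident edge of $C$, so it remains saturated if bad; and since $A_i$ is good, its saturation status is irrelevant to the weight. This again contradicts (ii), so every satellite of a star $(K)_m$ corresponds to a bad component.

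The hard part will be the structural reduction in claim 1, specifically ruling out tree components with two nodes of degree $\ge 2$; the delicate content is reducing this to adjacent such nodes and then carefully verifying that every bad component possibly affected by the edge removal retains a witness edge in $C$. Once that reduction is in place, claims 2 and 3 follow from short case checks using essentially the same removal mechanism.
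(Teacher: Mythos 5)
Your proposal is correct and follows essentially the same route as the paper: the paper also deduces from the minimality of $C$ after Step~2.3 that $(K)_m$ can contain no $4^+$-path and no cycle (removing a middle edge isolates no node, hence preserves the weight), which is exactly your ``tree with at most one node of degree $\ge 2$'' formulation, and it likewise derives statements~2 and~3 from the construction of $G_1$ and the same edge-removal mechanism. Your write-up is merely more explicit about the parallel-edge case and the short case checks for statements~2 and~3, which the paper leaves implicit.
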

\begin{proof}
If $K$ is isolated in $H+C$, then $(K)_m$ is an isolated node in $(H+C)_m$.
Otherwise, $K$ is a composite component of $H+C$.
Suppose $(K)_m$ contains a $4^+$-path.
Let $v_1, v_2, v_3, v_4$ be the first four nodes of such a $4^+$-path from one endpoint to the other.
By Step~2.3, we can remove the edge of $C$ corresponding to $\{v_2, v_3\}$ such that the weight of $C$ is unchanged since no node of $(K)_m$ becomes isolated.
Such a contradiction shows that there is no $4^+$-path, and similarly no cycle, in $(K)_m$.
It follows that $(K)_m$ is either an edge or a star.
This proves the first statement.

The other two statements follow from the construction of $G_1$ in Step 2.1, the computation of $C$ in Steps 2.2 and 2.3, and the definition of $(H+C)_m$.
\end{proof}

By the second statement in Lemma~\ref{lemma06},
when $(K)_m$ is an edge, we choose an endpoint corresponding to a bad component of $H$ as the satellite, while the other endpoint as the center.
This way, an edge becomes a star.

\begin{definition}
\label{def06}
For each composite component $K$ of $H+C$, its {\em center element} is the component of $H$ corresponding to the center of $(K)_m$,
and it is denoted as $K_c$ in the sequel;
the other components of $H$ contained in $K$ are the {\em satellite elements} of $K$.

A {\em center} ({\em satellite}, respectively) {\em element} of $H+C$ is a center (satellite, respectively) element of some composite component of $H+C$;
an isolated $5$-path of $H+C$ is also called a center element.
\end{definition}

We remark that an isolated bad component of $H+C$ is not saturated by $C$, it contains an edge of $M \setminus M_C$,
and it is ignored from further discussion.

\begin{lemma}
\label{lemma07} 
The following statements hold:
\begin{enumerate}
\parskip=0pt
\item
	Each center element $K_c$ of $H+C$ is a $5$-path, an edge or a star but not a triangle of $H$;
	each satellite element $S$ of $H+C$ is an edge, a star or a triangle but not a $5$-path of $H$.
\item
	Suppose $v$ is a vertex of $K_c$ and some satellite element $S$ is adjacent to $v$ in $H+C$.
	If $K_c$ is a star, then $v$ is the center vertex of $K_c$ and thus $v\in V(M)$;
	if $S$ is a triangle component of $H$, then $K_c$ is a $5$-path and $v$ is an internal but not the middle vertex of $K_c$. 
\end{enumerate}
\end{lemma}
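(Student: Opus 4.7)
The plan is to deduce both parts of the lemma from Lemma~\ref{lemma03}.3, which restricts what a star or triangle component of $H$ can be adjacent to in $G$, together with Lemma~\ref{lemma06}, which forces every satellite of $(K)_m$ to correspond to a bad component of $H$.

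For statement~1, I would first handle the satellite side. By Lemma~\ref{lemma06}.2--3 and the convention introduced right after Lemma~\ref{lemma06} (whereby the bad-component endpoint of an edge-shaped $(K)_m$ is designated as the satellite), every satellite element of $H+C$ corresponds to a bad component of $H$; Lemma~\ref{lemma03}.2 then restricts bad components to edges, triangles, and stars, ruling out 5-paths. For the center side, an isolated 5-path is a center element by definition, and for a composite $K$ I would rule out $K_c$ being a triangle by contradiction: if $K_c$ were a triangle, pick any satellite element $S$ of $K$ (one exists since $K$ is composite) and the edge of $C$ witnessing that $S$ is adjacent to $K_c$. Applying Lemma~\ref{lemma03}.3 with $K_1=K_c$ and $K_2=S$ forces $S$ to be a 5-path, contradicting the bad-component characterization of satellites just established. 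Hence $K_c$ must be a 5-path, an edge, or a star.

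For statement~2, both halves are obtained by applying Lemma~\ref{lemma03}.3 with the roles of $K_1$ and $K_2$ chosen appropriately. When $K_c$ is a star, I take $K_1 = K_c$ and $K_2 = S$: since $S$ is bad (by the satellite part of statement~1), the conclusion ``$K_2$ is a 5-path'' fails, so the hypothesis that $v$ is a satellite of $K_c$ cannot hold; thus $v$ is the center vertex of $K_c$. By tracing Step~1.2, the center of any star component of $H$ is an endpoint of the unique $M$-edge inside that star (the satellites consist of the other $M$-endpoint together with all outside vertices attached during Step~1.2), so $v\in V(M)$. When $S$ is a triangle, taking $K_1 = S$ and $K_2 = K_c$ immediately yields that $K_c$ is a 5-path and that $v$ is an internal but not the middle vertex of $K_c$.

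The main obstacle I anticipate is verifying the parenthetical structural claim that the center of a star component of $H$ is an endpoint of its $M$-edge. This requires inspecting Step~1.2 and using Lemma~\ref{lemma02}.3 to see that if both endpoints of an $M$-edge component had outside neighbors in $G$ they would share a common one, producing a triangle rather than a star; so at most one endpoint gathers outside neighbors in Step~1.2, and that endpoint becomes the star's center. Once this is established, the remainder of the proof is pure bookkeeping of which component plays the role of $K_1$ versus $K_2$ when invoking Lemma~\ref{lemma03}.3.
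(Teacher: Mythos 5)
Your proposal is correct and follows essentially the same route as the paper's proof: both reduce everything to Lemma~\ref{lemma03}(3) applied with the appropriate choice of $K_1,K_2$, combined with Lemma~\ref{lemma06} and the convention that satellites of $(K)_m$ correspond to bad components. The one point you flag as a potential obstacle --- that the center of a star component of $H$ is an $M$-endpoint --- is already established in the proof of Lemma~\ref{lemma03}(2), so your Step~1.2 argument is sound but not strictly needed.
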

\begin{proof}
Suppose $K_c$ is a triangle of $H$. Then by the third statement in Lemma~\ref{lemma03}, 
$K_c$ is the center of a composite component $K$ of which each satellite element is a $5$-path.
The third statement in Lemma~\ref{lemma06} says that $(K)_m$ is an edge, and thus the $5$-path should be the center element, a contradiction.
Next, suppose $S$ is a satellite element of $K$, and suppose to the contrary $S$ is a $5$-path.
Then, $(K)_m$ is an edge by the third statement of Lemma~\ref{lemma06}, and thus $S$ should be the center element, again a contradiction.
This proves the first statement.

For the second statement, if $K_c$ is a star and $v$ is a satellite of $K_c$,
then the third statement in Lemma~\ref{lemma03} implies that $S$ is a $5$-path, and subsequently 
the third statement in Lemma~\ref{lemma06} implies that $(K)_m$ is an edge,
again leading to a contradiction that $S$ should be the center element.
If $S$ is a triangle of $H$, then by the third statement of Lemma~\ref{lemma03} $K_c$ is a $5$-path and 
$v$ is an internal but not the middle vertex of $K_c$. 
This proves the lemma.
\end{proof}

We define the following for the vertices of a center element $K_c$.

\begin{definition}
\label{def07}
A vertex $v$ of a center element $K_c$ is an {\em anchor} of $H+C$ if $K_c$ is a $5$-path or an edge, or $K_c$ is a star and $v$ is the center vertex of $K_c$.
The edge connecting $v$ to a satellite element $S$ in $C$ is called the {\em rescue-edge} for $S$ and $v$ is called the {\em supporting} anchor for $S$. 
For a nonnegative integer $j$, an anchor $v$ is a {\em $j$-anchor} if $v$ is the supporting anchor for exactly $j$ satellite elements of $H+C$.
\end{definition}

We note that, if $K_c$ is a star component of $H$, then the second statement in Lemma~\ref{lemma07} implies that 
each satellite of $K_c$ cannot be adjacent to any satellite element of $H+C$ and thus is excluded from the above definition of anchors.
Since $C$ is a path-cycle cover of $G_1$ obtained in Step 2.3, each satellite element $S$ of $H+C$ is adjacent to a unique anchor,
and each anchor is a $0$-, $1$-, or $2$-anchor.

\begin{notation}
\label{nota06}
For each component $K$ of $H+C$, let $s(K)$ denote the number of vertices in both $K$ and $V(M_C)$, i.e., $s(K) = |V(K)\cap V(M_C)|$.

If the center $K_c$ of $K$ is a $5$-path, then let $v_1$, \ldots, $v_5$ be the anchors of $K$ ordered from one endpoint to the other on $K_c$;
if $K_c$ is an edge, then let $v_1, v_2$ be the anchors of $K$;
otherwise, $K_c$ is a star and let $v_1$ be the unique anchor (which is the center vertex of $K_c$) of $K$
and let $v_2$ be the satellite vertex of $K_c$ such that $\{v_1, v_2\}\in M$.
\end{notation}

\begin{lemma}
\label{lemma08}
For each component $K$ of $H+C$, an $OPT(K)$ can be computed in $O(1)$ time.
\end{lemma}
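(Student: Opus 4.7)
The plan is to observe that although a component $K$ of $H+C$ may have many vertices when its star elements are large, only a bounded number of vertices can participate in any feasible collection of vertex-disjoint $4^+$-paths; hence $K$ can be reduced to a graph $K'$ of constant size with $OPT(K')=OPT(K)$, on which brute-force enumeration runs in $O(1)$ time.

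I would first dispatch the isolated case. If $K$ is an isolated component of $H+C$ then by Lemma~\ref{lemma07} it is either a $5$-path (in which case $OPT(K)=\{K\}$) or a bad component of $H$ (i.e.\ an edge, triangle, or star), whose longest path has fewer than four vertices, so $OPT(K)=\emptyset$. Both subcases are returned in $O(1)$ time.

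Next, for a composite $K$ with center element $K_c$ and some satellite elements, I would use Definition~\ref{def07} together with Lemma~\ref{lemma07} to bound the number of satellites: $K_c$ has at most five, two, or one anchor(s) according as $K_c$ is a $5$-path, an edge, or a star, and each anchor supports at most two satellite elements, so $K$ has at most ten satellites, each of which is an edge, triangle, or star of $H$. I would then argue the key structural claim that at most three vertices of any star element can appear in a $4^+$-path, since every vertex in a path has degree at most two and the center of the star is the only high-degree vertex. Moreover, Lemma~\ref{lemma07} ensures that when $K_c$ is a star every rescue-edge attaches at the center $v_1$, so every leaf of $K_c$ has degree $1$ in $K$; and each satellite element $S$, being attached to $K_c$ by a single rescue-edge and containing no $4^+$-path internally, contributes to at most one path of any feasible solution. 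These observations let me form $K'$ from $K$ by retaining, for every star element, its center, the rescue-edge endpoint (if it is a leaf), and one extra leaf, while deleting all remaining leaves. The resulting $K'$ satisfies $|V(K')|\le 5+10\cdot 3=35$, and a leaf-swapping argument establishes $OPT(K')=OPT(K)$.

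Finally, I would compute $OPT(K')$ by enumerating every vertex-disjoint collection of $4^+$-paths of order at most seven (by Fact~\ref{fact01}) in $K'$; since $|V(K')|=O(1)$ the enumeration finishes in $O(1)$ time, and $OPT(K)$ is obtained by relabelling the retained leaves back into $K$. The main obstacle is verifying the leaf-swapping argument: one must show that any appearance of a deleted leaf in an optimal collection for $K$ can be rerouted through a retained leaf of the same star element without changing the total vertex count. This requires a brief case analysis over the possible types of $K_c$ and each satellite element, but each case is routine given the adjacency restrictions provided by Lemmas~\ref{lemma03} and~\ref{lemma07}.
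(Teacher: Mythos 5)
Your proposal is correct and follows essentially the same route as the paper's proof: bound the number of satellite elements via the anchors of $K_c$, trim each star element down to its center, the rescue-edge endpoint, and one spare leaf (justified by the degree-two constraint on path vertices), obtain a constant-size graph of at most $5+10\cdot 3=35$ vertices, and brute-force the optimum. The only cosmetic difference is that you use the uniform bound of ten satellites for all three shapes of $K_c$, whereas the paper tightens it to four or two when $K_c$ is an edge or a star, which does not affect the $O(1)$ conclusion.
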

\begin{proof}
Let ${\cal S}$ be the collection of the satellite elements of $K$.
By Lemma~\ref{lemma07}, the center element $K_c$ is a $5$-path, an edge or a star, and thus we distinguish three cases.

Firstly, if $K_c$ is a $5$-path, then by Lemma~\ref{lemma07} $S$ is a triangle, an edge or a star, for each $S\in {\cal S}$. 
If $S$ is a triangle or an edge, then $|V(S)|\le 3$.
If $S$ is a star and the center vertex of $S$ is incident with the rescue-edge for $S$,
then we can remove all but one satellite vertex of $S$ to keep $opt(K)$ unchanged, which leads to $|V(S)|\le 2$.
If one satellite vertex $v$ of $S$ is incident with the rescue-edge for $S$,
then we remove all satellites of $S$ except $v$ and one satellite not incident with the rescue-edge for $S$ to keep $opt(K)$ unchanged, which leads to $|V(S)|\le 3$.
In conclusion, $|V(S)|\le 3$ for each $S \in {\cal S}$.
Recall that each anchor is a $0$-, $1$-, or $2$-anchor.
It follows that $|{\cal S}|\le 10$ and after the vertex removal for each $S \in {\cal S}$,
$|V(K)| \le 10 \times 3 + 5 = 35$ and hence we can compute an $OPT(K)$ in constant time.

Secondly, if $K_c$ is an edge, then by Lemmas~\ref{lemma07} and \ref{lemma03},
each $S\in {\cal S}$ is an edge or a star and if $S$ is a star then the center vertex of $S$ must be in the rescue-edge of $S$.
Similarly as in the first case, if $S$ is an edge then $|V(S)| = 2$;
if $S$ is a star, then removing all but one satellite vertex of $S$ from $K$ does not decrease $opt(K)$, and so $|V(S)| = 2$ now.
Note that $|{\cal S}| \le 4$.
After the vertex removal for each $S$, we have $|V(K)| \le 4 \times 2 + 2 = 10$ and hence we can compute an $OPT(K)$ in constant time.

Lastly, $K_c$ is a star.
By Lemma~\ref{lemma07}, each $S \in {\cal S}$ must be adjacent to the center vertex of $K_c$, and thus $|{\cal S}| \le 2$.
Also, we can remove all but one satellite vertex of $K_c$ to keep $opt(K)$ unchanged, i.e., $|V(K_c)| = 2$.
Similarly to the second case above, for each $S \in {\cal S}$ we have $|V(S)| = 2$ after removing some vertices of $S$ if necessary.
It follows that $|V(K)|\le 2 \times 2 + 2 = 6$ and hence an $OPT(K)$ can be computed in constant time.
This completes the proof.
\end{proof}

Generally speaking, by computing an $OPT(K)$ for every $K$ of $H+C$ and outputting their union as an approximate solution for $G$,
we obtain an approximation algorithm for $MPC^{4+}_v$ achieving a ratio of $\frac 54 \max_{K} \frac {s(K)}{opt(K)}$ because of Lemma~\ref{lemma05}, 
unless $K$ is {\em critical} and {\em responsible}, to be defined later.
If $K$ is an isolated $5$-path, then by Lemma~\ref{lemma03} we have $\frac {s(K)}{opt(K)} = \frac 45$. 
But if $K$ is a composite component, $\frac {s(K)}{opt(K)}$ is not necessarily small (smaller than our target value which is about $1.4992$). 
This motivates the next definition of {\em critical component}.

\begin{definition}
\label{def08}
A {\em critical component} of $H+C$ is a component $K$ with $\frac {s(K)}{opt(K)} \ge \frac {14}{11}$. 
\end{definition}


\begin{notation}
\label{nota07}
Let $v$ be an anchor of $K$. 
\begin{itemize}
\parskip=0pt
\item If $v$ is a $0$-anchor, then let $Q_v$ be the vertex $v$;
	otherwise, $Q_v$ denotes the longest path among those paths in $K$ each starts with $v$ followed by an edge of $C$ incident to $v$. 
\item If $v$ is a $2$-anchor, then we use $P_v$ to denote the longest path among those paths in $K$ each contains $v$ and the two edges of $C$ incident to $v$. 
\end{itemize}
\end{notation}

\begin{remark}
\label{remark01}
When $v$ is a $2$-anchor, $Q_v$ can be a part of $P_v$.
By Lemma~\ref{lemma07}, each satellite element of $K$ is either a triangle, an edge or a star.
Therefore, if $v$ is not a $0$-anchor, then $Q_v$ is a $3$- or $4$-path; 
if $v$ is a $2$-anchor, then $P_v$ is a $5$-, $6$-, or $7$-path.
\end{remark}

\begin{lemma}
\label{lemma09}
Suppose $K_c$ is a $5$-path.
Then, the following statements hold:
\begin{enumerate}
\parskip=0pt
\item Suppose the total number of $1$- and $2$-anchors is $5$.
Then, $opt(K)\ge 17$ if one of $v_1,v_3,v_5$ is a $2$-anchor; otherwise, $opt(K)\ge 13$.

\item Suppose the total number of $1$- and $2$-anchors is $4$.
Then, $opt(K)\ge 15$ if one of $v_1,v_3,v_5$ is a $2$-anchor; otherwise, $opt(K)\ge 12$.
\end{enumerate}
\end{lemma}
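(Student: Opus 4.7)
I propose proving each of the four cases by exhibiting an explicit feasible $MPC^{4+}_v$ solution for $K$ whose total vertex count meets the stated lower bound. The building blocks are the paths $Q_{v_i}$ of order $q_i\in\{3,4\}$ attached to non-$0$-anchors, the paths $P_{v_i}$ of order in $\{5,6,7\}$ attached to $2$-anchors (see Remark~\ref{remark01}), and the edges $v_iv_{i+1}$ of $K_c$, which can concatenate two $Q$-paths. A concatenation of the form $Q_{v_i}^{\mathrm{rev}}\cdot v_iv_{i+1}\cdot Q_{v_{i+1}}$ contributes $q_i+q_{i+1}\in\{6,7,8\}$ vertices, while $Q_{v_i}^{\mathrm{rev}}\cdot v_iv_jv_k\cdot Q_{v_k}$, where $v_j$ is an intermediate anchor whose rescue edges are not used, contributes $q_i+q_k+1\in\{7,8,9\}$ vertices. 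Whenever such a path exceeds the $7$-vertex cap of Fact~\ref{fact01}, I would split it at the $K_c$-edge next to a $4$-vertex $Q$-end, obtaining two $4^+$-paths of orders in $\{4,5\}$ that together preserve the total count.

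For the first case of the first statement I would distinguish whether the prescribed $2$-anchor is $v_3$ or $v_1$ (the case $v_5$ being symmetric to $v_1$). When $v_3$ is a $2$-anchor, the three paths $Q_{v_1}^{\mathrm{rev}}\cdot v_1v_2\cdot Q_{v_2}$, $P_{v_3}$, and $Q_{v_4}^{\mathrm{rev}}\cdot v_4v_5\cdot Q_{v_5}$ have orders at least $6$, $5$, and $6$; when $v_1$ is a $2$-anchor, the three paths $P_{v_1}$, $Q_{v_2}^{\mathrm{rev}}\cdot v_2v_3\cdot Q_{v_3}$, and $Q_{v_4}^{\mathrm{rev}}\cdot v_4v_5\cdot Q_{v_5}$ have the same three lower bounds. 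Either way the total is at least $17$. For the second case, no $v_i\in\{v_1,v_3,v_5\}$ is a $2$-anchor, and I use the two paths $Q_{v_1}^{\mathrm{rev}}\cdot v_1v_2\cdot Q_{v_2}$ and $Q_{v_3}^{\mathrm{rev}}\cdot v_3v_4v_5\cdot Q_{v_5}$, of orders at least $6$ and $7$ respectively, totalling at least $13$; here $v_4$ appears only as an internal vertex on $K_c$ and its rescue edge is intentionally wasted.

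For the second statement, exactly one anchor is a $0$-anchor, contributing only its own vertex by hanging off a $K_c$-edge. In its first case I reuse the three-path template above, replacing the path adjacent to the $0$-anchor $v'$ by one of the form $Q_v^{\mathrm{rev}}\cdot vv'$, where $v$ is a non-$0$-anchor adjacent to $v'$; this path has order $\ge 4$, so the three paths sum to at least $5+6+4=15$. In its second case I case-split on the position $i_0$ of the unique $0$-anchor: if $i_0\in\{1,3,5\}$, the remaining four anchors form two adjacent pairs, each giving a concatenation of order $\ge 6$ and summing to $\ge 12$; if $i_0\in\{2,4\}$, I route the path $Q_{v_{i_0-1}}^{\mathrm{rev}}\cdot v_{i_0-1}v_{i_0}v_{i_0+1}\cdot Q_{v_{i_0+1}}$ through $v_{i_0}$ (order $\ge 7$) and take a separate concatenated path on the far side (order $\ge 6$), summing to $\ge 13$.

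The main obstacle is the bookkeeping required by the $7$-vertex cap: whenever a concatenated path has order $8$ or $9$, at least one endpoint $Q$-path must have order $4$, and splitting at the adjacent $K_c$-edge yields two paths of orders in $\{4,5\}$ without losing any vertex. A secondary nuisance is that a $2$-anchor at $v_2$ or $v_4$ in the second cases of either statement ends up contributing only one of its two satellite elements; since the claimed bounds can already be met using only one satellite per such anchor, this waste does not threaten the inequalities.
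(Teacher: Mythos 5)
Your proposal is correct and follows essentially the same route as the paper's proof: in each subcase it packs $K$ with the same explicit combinations of $P_{v_i}$, $Q_{v_i}$, and subpaths of $K_c$ (e.g.\ $P$ at the designated $2$-anchor plus two concatenated adjacent pairs for the bound $17$, and the $Q_{v_3}$--$v_3v_4v_5$--$Q_{v_5}$ path for the bound $13$), yielding identical lower bounds. Your extra bookkeeping for the $7$-vertex cap is harmless but unnecessary, since an $8^+$-path can always be split into $4^+$-pieces without losing vertices, as noted in the introduction.
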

\begin{proof}
Suppose the total number of $1$- and $2$-anchors is $5$.
So, there is no $0$-anchor in $K$.
The first case is that one of $v_1,v_3, v_5$ is a $2$-anchor.
We assume $v_1$ is a $2$-anchor and the case when $v_3$ or $v_5$ is a $2$-anchor can be discussed similarly.
Then, we can construct two vertex-disjoint $6^+$-paths by connecting $\{v_j,v_{j+1}\}$ and $Q_{v_j},Q_{v_{j+1}}$, $j=2,4$ and a $5^+$-path $P_{v_1}$.
So, $opt(K)\ge 2\times 6+5=17$.
We can assume $v_1, v_3, v_5$ are not $2$-anchors. 
Then, we can construct a $6^+$-path by using $\{v_1,v_2\}$ to connect $Q_{v_1},Q_{v_2}$
and a $7^+$-path by connecting $Q_{v_3}, Q_{v_5}$ with the $3$-path $v_3$-$v_4$-$v_5$.
So, $opt(K)\ge 13$.

Suppose the total number of $1$- and $2$-anchors is $4$ and one of $v_1, v_3, v_5$ is a $2$-anchor.
We assume $v_1$ is a $2$-anchor and the case when $v_3$ or $v_5$ is a $2$-anchor can be discussed similarly. 
Then, we can construct a $6^+$-path and a $4^+$-path by connecting $Q_{v_j},Q_{v_{j+1}}$ with $\{v_j,v_{j+1}\}$, $j=2,4$ and a $5^+$-path $P_{v_1}$.
So, $opt(K)\ge 15$.

Now, we can assume $v_1, v_3, v_5$ are not $2$-anchors.
Recall that $K$ has exactly one $0$-anchor.
Suppose one of $v_2, v_4$ is a $0$-anchor.
Without loss of generality, we assume $v_2$ is a $0$-anchor.
Then, we can construct a $6^+$-path by using $\{v_4,v_5\}$ to connect $Q_{v_4},Q_{v_5}$ 
and a $7^+$-path by using the $3$-path $v_1$-$v_2$-$v_3$ to connect $Q_{v_1},Q_{v_3}$.
So, $opt(K)\ge 13$.
Then, we can assume one of $v_1, v_3, v_5$ is the unique $0$-anchor in $K$.
We can only discuss the case when $v_1$ is the unique $0$-anchor 
and we can analyze the case when one of $v_3$ or $v_5$ is the $0$-anchor similarly.
Then, we can construct two vertex-disjoint $6^+$-path by using $\{v_j, v_{j+1}\}$ to connect $Q_{v_j},Q_{v_{j+1}}$, $j=2,4$.
So, in this case, $opt(K)\ge 12$, which completes the proof.
\end{proof}

\begin{lemma}
\label{lemma10} 
Suppose that $K$ has no $2$-anchor. 
Then, $\frac {s(K)}{opt(K)} < \frac {14}{11}$ and hence $K$ is not critical.
\end{lemma}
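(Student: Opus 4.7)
The plan is to split on the type of the center element $K_c$ of $K$ together with the number $j$ of $1$-anchors (which, since $K$ has no $2$-anchor, equals the number of satellite elements attached to $K_c$), and in each case bound $s(K)$ and $opt(K)$ and compare the ratio to $\frac{14}{11}$. A preliminary accounting: by Lemma~\ref{lemma07} each satellite element contributes exactly $2$ vertices to $V(M_C)\cap V(K)$ (the two endpoints of its unique $M$-edge), while $K_c$ contributes $4$, $2$, or $2$ vertices when it is a $5$-path, an edge, or a star respectively.

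First I would dispatch the easy configurations. If $K$ is an isolated $5$-path then $s(K)=4$ and $opt(K)=5$, giving ratio $\frac{4}{5}$. If $K_c$ is a star, Lemma~\ref{lemma07} forces $j=1$ and $s(K)=4$, and a path of the form $v_2$--$v_1$--(rescue-edge)--(remainder of $Q_{v_1}$) witnesses $opt(K)\ge 4$. If $K_c$ is an edge, so $j\in\{1,2\}$, attaching the reversed $Q_{v_i}$ at each $1$-anchor along the edge $K_c$ produces a single path of at least $2+2j=s(K)$ vertices, whence $opt(K)\ge s(K)$. Each of these three sub-cases yields a ratio of at most $1$.

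The substantive case is $K_c$ a $5$-path, where $s(K)=4+2j$ with $j\in\{1,2,3,4,5\}$. For $j\in\{4,5\}$ I would apply the ``otherwise'' branches of Lemma~\ref{lemma09} (applicable since no $v_1,v_3,v_5$ is a $2$-anchor) to obtain $opt(K)\ge 12$ and $opt(K)\ge 13$ respectively, giving ratios $1$ and $\frac{14}{13}$. For $j\in\{1,2,3\}$ the thresholds implied by $\frac{s(K)}{opt(K)}<\frac{14}{11}$ are $opt(K)\ge 5, 7, 8$ respectively. I would verify these by explicit construction: at each $1$-anchor $v_i$ one may prepend the reversed $Q_{v_i}$ (a $3$- or $4$-path by Remark~\ref{remark01}) to a traversal of $K_c$, assembling vertex-disjoint $4^+$-paths. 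For $j=1$, $K_c$ itself is already a $5$-path. For $j=2$, across all ten placements of the two $1$-anchors on $K_c$, a $7^+$-path covering $K_c$ together with one $Q_v$-extension at the $1$-anchor closer to an endpoint of $K_c$ can always be assembled. For $j=3$, I would partition the three $1$-anchors by the gaps of $K_c$ and build two vertex-disjoint $4^+$-paths whose total length is at least $8$; each of the ten placements admits such a pair.

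The main obstacle is the bookkeeping for $j\in\{2,3\}$: one must verify in every placement of the $1$-anchors on the $5$-path that the greedy assembly does not leave a stranded $2$- or $3$-vertex piece of $K_c$ that could drag $opt(K)$ below the threshold. This is a finite enumeration, and in every configuration the bound follows from $|Q_v|\ge 3$. Aggregating the case ratios $\frac{4}{5}, \frac{6}{5}, \frac{8}{7}, \frac{10}{8}, 1, \frac{14}{13}$ (and $1$ for the star and edge cases), the maximum is $\frac{10}{8}=\frac{5}{4}<\frac{14}{11}$, so $K$ is not critical.
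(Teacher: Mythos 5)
Your proof follows essentially the same route as the paper's: split on whether $K_c$ is an edge, a star, or a $5$-path, dispatch the first two by gluing the $Q_{v_i}$'s onto $K_c$, invoke Lemma~\ref{lemma09} when the number of $1$-anchors is at least four, and handle the remaining small cases by explicit path constructions. Your accounting $s(K)=4+2j$ (resp.\ $2+2j$) and the target thresholds $opt(K)\ge 5,7,8,12,13$ for $j=1,\dots,5$ are all correct, and the worst ratio $\frac{10}{8}<\frac{14}{11}$ matches the paper's.

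There is, however, one step that fails as literally described: in the case $j=2$ with $K_c$ a $5$-path, you claim that ``a $7^+$-path covering $K_c$ together with one $Q_v$-extension at the $1$-anchor closer to an endpoint of $K_c$ can always be assembled.'' This is false for the three placements in which both $1$-anchors are internal vertices of $K_c$ (i.e.\ among $v_2,v_3,v_4$): a path containing all five vertices of $K_c$ must traverse $K_c$ from $v_1$ to $v_5$, so any internal vertex already has degree $2$ on that path and cannot also carry a rescue-edge. In those placements one instead argues as the paper does, producing \emph{two} vertex-disjoint $4^+$-paths, each of the form $Q_{v_i}$ extended by a subpath of $K_c$ toward the nearer endpoint, which together cover at least $8\ge 7$ vertices (or, alternatively, a single path $Q_{v_i}$--(subpath of $K_c$)--$Q_{v_j}$ of order at least $7$). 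The bound $opt(K)\ge 7$ therefore still holds and your final conclusion is unaffected, but the construction needs this repair; your $j=3$ argument (two disjoint $4^+$-paths totaling at least $8$) is already the right shape and is exactly what the $j=2$ internal case requires as well.
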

\begin{proof}
First, consider the case where $K_c$ is an edge. 
Then, either both $v_1$ and $v_2$ are 1-anchors of $K$, or exactly one of $v_1$ and $v_2$ is a 1-anchor of $K$. 
In the former case, $Q_{v_1}$ and $Q_{v_2}$ are connected with $\{v_1,v_2\}$ into a $6^+$-path in $K$.
In the latter case, without loss of generality, we assume $v_1$ is a $1$-anchor.
So, $Q_{v_1}$ can be extended to a $4^+$-path with $\{v_1,v_2\}$. 
In conclusion, in either case, $\frac{s(K)}{opt(K)} \le 1$ and hence $K$ is not critical. 

If $K_c$ is a star, then $v_1$ is the unique $1$-anchor where $v_1$ is the center vertex of $K_c$.
So, $s(K)=4$ and we can construct a $4^+$-path by connecting $Q_{v_1}$ with $\{v_1,v_2\}$ where $v_2$ is the vertex in $V(M)$.
It follows that $opt(K)\ge 4$ and thus $\frac {s(K)}{opt(K)}\le 1$ and $K$ is not critical.

We next consider the case where $K_c$ is a $5$-path. 
Then, $opt(K) \ge 5$ because of the $5$-path. 
So, we may assume that $s(K) > 6$ because otherwise $\frac{s(K)}{opt(K)} \le \frac{6}{5} < \frac {14}{11}$ and we are done. 
Since $K$ has no $2$-anchor, $s(K) \le 14$. 
Thus, $8\le s(K)\le 14$ because $s(K)$ is even. 
If $K$ has at least four $1$-anchors, then by Lemma~\ref{lemma09}, $opt(K) \ge 12$,
implying that $\frac{s(K)}{opt(K)} \le \frac{14}{12}< \frac {14}{11}$ and we are done. 
Hence, we may assume that $K$ has at most three $1$-anchors, i.e., $s(K) \le 10$. 
It remains to distinguish two cases as follows.

\medskip
{\em Case 1:} $s(K)=8$. 
In this case, $K$ has exactly two $1$-anchors. 
Either at least one endpoint of $K_c$ is a $1$-anchor, or at least two internal vertices of $K_c$ are $1$-anchors. 
In the former case, without loss of generality, we assume $v_1$ is a $1$-anchor.
Then, $opt(K) \ge 7$ because $K_c$ and $Q_{v_1}$ are connected into a $7^+$-path in $K$. 
In the latter case, $opt(K)\ge 8$ because we can construct two vertex-disjoint $4^+$-paths in $K$, 
each of which is obtained by choosing two $1$-anchors $v_j, v_k$ of $K$ and connecting $Q_{v_j}, Q_{v_k}$ with a subpath of $K_c$, respectively. 
So, in both cases, $\frac{s(K)}{opt(K)} \le \frac{8}{7}< \frac {14}{11}$. 

\medskip
{\em Case 2:} $s(K)=10$. 
In this case, $K$ has exactly three $1$-anchors. 
Either both endpoints of $K_c$ are $1$-anchors, or at least two internal vertices of $K$ are $1$-anchors. 
In the former case, $opt(K) \ge 9$ 
because $K_c$, $Q_{v_1}$, and $Q_{v_5}$ are connected into a $9^+$-path in $K$.
In the latter case, similarly to Case 1, $opt(K)\ge 8$.
So, in both cases, $\frac{s(K)}{opt(K)} \le \frac{10}{8}< \frac {14}{11}$. 
\end{proof}

\begin{lemma}
\label{lemma11}
Suppose that $K$ has at least three $2$-anchors. 
Then, the following statements hold and hence $K$ is not critical.
\begin{enumerate}
\parskip=0pt
\item If $K$ has five $2$-anchors, then $s(K) = 24$ and $opt(K)\ge 25$.
\item If $K$ has four $2$-anchors, then $s(K) \le 22$ and $opt(K)\ge 20$.
\item If $K$ has three $2$-anchors, then either $s(K)=20$ and $opt(K) \ge 17$, 
or $s(K)\le 18$ and $opt(K) \ge {15}$.
\end{enumerate}
\end{lemma}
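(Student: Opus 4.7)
The plan is to first observe that $K_c$ must be a $5$-path: by Notation~\ref{nota06} an edge-type center supplies only two anchors and a star-type center only one, so the presence of three or more $2$-anchors forces the $5$-path case. Writing $K_c = v_1 v_2 v_3 v_4 v_5$ and letting $a_2, a_1, a_0$ denote the numbers of $2$-, $1$-, and $0$-anchors, a direct count using Lemma~\ref{lemma03} (namely, $K_c$ is a $5$-path of $H$ containing two $M$-edges and so contributes $4$ vertices to $V(M_C)$, while each satellite element is a triangle, edge, or star of $H$ containing exactly one $M$-edge and so contributes exactly $2$) yields $s(K) = 4 + 2a_1 + 4a_2$, from which the $s(K)$ values asserted in all three statements follow immediately.

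The backbone of the $opt(K)$-bounds is the observation that, for distinct $2$-anchors $v_i,v_j$, the paths $P_{v_i}$ and $P_{v_j}$ are vertex-disjoint, because each $P_v$ lives on $v$ together with its two (private) satellite elements and uses no edge of $K_c$. This immediately handles the first two statements: if $a_2=5$, the five pairwise-disjoint $5^+$-paths $P_{v_1},\ldots,P_{v_5}$ total at least $25$ vertices; if $a_2=4$, the four $P$-paths total at least $20$. It also disposes of the easy branch of statement~3, where $a_2=3$ and $a_1\le 1$: taking $P_v$ at each of the three $2$-anchors already yields $opt(K)\ge 15$.

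The hard branch is $a_2=3$, $a_1=2$, where $s(K)=20$ and I must certify $opt(K)\ge 17$. I would split this according to whether the two $1$-anchors are adjacent on $K_c$. If they are adjacent, say $v_i$ and $v_{i+1}$, I would concatenate $Q_{v_i}$ and $Q_{v_{i+1}}$ through the $K_c$-edge $\{v_i,v_{i+1}\}$ into a $6^+$-path and then take $P_v$ at each of the three $2$-anchors, totalling $6+3\cdot 5=21$. If they are non-adjacent, every $K_c$-neighbour of each $1$-anchor is a $2$-anchor, so I would pair each $1$-anchor $v_i$ with a distinct adjacent $2$-anchor $v_j$ and build the bridged path formed by $Q_{v_i}$, the edge $\{v_i,v_j\}$, and the $Q$-path into one of $v_j$'s two satellites; this is a $6^+$-path. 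Taking $P_v$ at the single leftover $2$-anchor then gives $6+6+5=17$ vertex-disjoint vertices.

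The main obstacle is the non-adjacent sub-case: one must argue that a conflict-free pairing always exists. This is a finite check over the six placements of two non-adjacent elements in $\{v_1,\ldots,v_5\}$; the only potentially troublesome configuration is $\{v_2,v_4\}$, whose unique common $K_c$-neighbour is $v_3$, and there I would simply pair them with the outer neighbours $v_1$ and $v_5$, while in every other configuration the greedy pairing already works. To conclude that $K$ is not critical I would then verify that $\frac{24}{25}, \frac{22}{20}, \frac{20}{17}, \frac{18}{15}$ are each strictly smaller than $\frac{14}{11}$.
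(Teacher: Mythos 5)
Your proof is correct and follows essentially the same approach as the paper: disjoint $P_v$-paths at the $2$-anchors handle the first two statements and the easy branch of the third, and the $s(K)=20$ branch is settled by bridging $Q$-paths through edges of $K_c$. The only cosmetic difference is that for that last branch the paper observes by pigeonhole that one of $v_1,v_3,v_5$ is a $2$-anchor and delegates to Lemma~\ref{lemma09}, whereas you re-derive the bound $opt(K)\ge 17$ with an explicit (and equally valid) pairing construction.
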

\begin{proof}
Since $K$ has at least three $2$-anchors, $K_c$ is a $5$-path. 
Obviously, $s(K)$ is even and $s(K) \le 24$.
If $K$ has at least four $2$-anchors, then because of the vertex-disjoint $5^+$-paths $P_{v_j}$ for each $2$-anchor $v_j$, the first two statements in the lemma hold. 
Thus, we may assume that $K$ has exactly three $2$-anchors. 
Then, $opt(K) \ge 3\times 5 = 15$ and $s(K) \le 20$. 

If $s(K) \le 18$, then $\frac {s(K)}{opt(K)} \le \frac {18}{15}$ and we are done. 
So, we may assume $s(K) = 20$.
Then, three vertices of $K_c$ are $2$-anchors and the other two vertices of $K_c$ are 1-anchors. 
It follows that at least one of $v_1, v_3, v_5$ is a $2$-anchor.
By the first statement of Lemma~\ref{lemma09}, $opt(K) \ge 17$ and $K$ is not critical, which completes the proof.
\end{proof}

\begin{lemma}
\label{lemma12}
Suppose that $K$ has exactly two $2$-anchors. Then, $K_c$ is a $5$-path or an edge and the following statements hold:
\begin{enumerate}
\parskip=0pt
\item If $K_c$ is an edge, then $s(K) = 10$ and $opt(K)\ge 10$, and hence $K$ is not critical.

\item If $K_c$ is a $5$-path and $s(K) \le 12$, then $opt(K) \ge 10$ 
and hence $K$ is not critical.

\item Suppose that $K_c$ is a $5$-path and $s(K) = 18$.
 If $K$ is critical, then $opt(K) \in \{13, 14\}$ 
and the two non-middle internal vertices of $K_c$ are 2-anchors in $K$; 
otherwise, $opt(K) \ge 15$.

\item Suppose that $K_c$ is a $5$-path and $s(K) = 16$. 
If $K$ is critical, then $opt(K) = 12$ 
and the two non-middle internal vertices of $K_c$ are 2-anchors in $K$; 
otherwise, $opt(K) \ge 13$.

\item
Suppose that $K_c$ is a $5$-path and $s(K) = 14$. 
If $K$ is critical, then $opt(K) = 11$ 
and $K$ has only five different structures as shown in Figure~\ref{fig01}; 
otherwise, $opt(K) \ge 12$. 
\end{enumerate}
\end{lemma}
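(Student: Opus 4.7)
The plan is to case-split on the type of $K_c$ and the value of $s(K)$, and in each case either construct a vertex-disjoint collection of $4^+$-paths whose total size beats $\frac{11}{14}s(K)$ (witnessing non-criticality) or determine $opt(K)$ exactly. Since $K$ contains two $2$-anchors, $K_c$ has at least two anchor positions; by Definition~\ref{def07} a star contributes only one anchor, and by the first part of Lemma~\ref{lemma07} $K_c$ cannot be a triangle. Hence $K_c$ is an edge or a $5$-path. Combined with Lemma~\ref{lemma03} (each satellite contributes exactly $2$ to $s(K)$, while $K_c$ contributes $2$ if an edge and $4$ if a $5$-path), the value of $s(K)$ pins down the exact number of $1$-anchors in $K$.

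Statements~1 and~2 follow immediately: the vertex-disjoint $5^+$-paths $P_{v_a}$ and $P_{v_b}$ from Remark~\ref{remark01} give $opt(K)\ge 10$, which beats the criticality threshold whenever $s(K)\le 12$. Statements~3 and~4 are direct corollaries of Lemma~\ref{lemma09}: with $5$ (resp.\ $4$) anchors in total, Lemma~\ref{lemma09} yields $opt(K)\ge 17$ (resp.\ $\ge 15$) whenever one of $v_1,v_3,v_5$ is a $2$-anchor, which beats the threshold and hence precludes criticality; in the only remaining configuration (both $2$-anchors placed at $v_2,v_4$), Lemma~\ref{lemma09} gives $opt(K)\ge 13$ (resp.\ $\ge 12$), exactly the stated bounds.

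The main effort is Statement~5. With $s(K)=14$ there is exactly one $1$-anchor and two $0$-anchors on $K_c$, so I would enumerate the ${5 \choose 2}\cdot 3 = 30$ placements of the two $2$-anchors and the $1$-anchor on $K_c$, reduced by the mirror symmetry $v_i\leftrightarrow v_{6-i}$. For each placement I would try two families of covers of total size at least~$12$: (a) $P_{v_a}\cup P_{v_b}$ together with a third $4^+$-path obtained by extending the $1$-anchor's $Q$-path through a free $K_c$-segment up to a $K_c$-endpoint; (b) a single $P$-path $P_{v_b}$ combined with a $6^+$-path that splices $P_{v_a}$ open at $v_a$, walks along $K_c$ to the $1$-anchor, and ends at its $Q$-path. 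I expect these two families to dispose of every placement except the five depicted in Figure~\ref{fig01}, in each of which every candidate cover is forced either to waste a $2$-anchor's satellite or to miss a $K_c$-endpoint, capping the cover at $11$ vertices; in those five cases an explicit $11$-vertex cover witnesses $opt(K)\ge 11$.

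The main obstacle will be the matching upper bound $opt(K)\le 11$ for those five structures, since one must rule out every conceivable $4^+$-path cover. My plan is to use Lemma~\ref{lemma07} to tightly bound the possible adjacencies in $K$ and then split cases on how a hypothetical $12$-vertex cover partitions $K_c$ (kept intact, split at one internal vertex, or split at two), arguing that every such partition either forces a $2$-anchor's second satellite into a $2$-path (losing two vertices) or strands a $K_c$-endpoint, so that no cover can reach $12$ vertices. Combining this upper bound with the explicit $11$-vertex cover yields $opt(K)=11$, completing the lemma.
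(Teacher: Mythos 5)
Your handling of the center-element type, of Statements~1--4, and of the lower-bound enumeration in Statement~5 matches the paper's proof: the paper likewise excludes stars and triangles as centers via Definition~\ref{def07} and Lemma~\ref{lemma07}, gets $opt(K)\ge 10$ from the two disjoint $P_v$'s, invokes Lemma~\ref{lemma09} for $s(K)\in\{16,18\}$, and for $s(K)=14$ runs exactly your two families of covers (three disjoint $4^+$-paths, or one $P_v$ plus a $6^+$-path spliced through $K_c$) over the anchor placements to isolate the five surviving configurations.

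The genuine problem is your final paragraph. You have misread the logical structure of the claim: by Definition~\ref{def08}, with $s(K)=14$ the hypothesis ``$K$ is critical'' \emph{is} the inequality $opt(K)\le \frac{11}{14}\cdot 14=11$, so the upper bound you call the main obstacle is supplied for free by the hypothesis, and the conclusion $opt(K)=11$ follows from your lower bound $opt(K)\ge 11$ alone. (Likewise the ``otherwise $opt(K)\ge 12$'' clause is just the negation of criticality plus integrality, and the upper ends of the ranges in Statements~3--4, e.g.\ $opt(K)\le 14$ when $s(K)=18$, come from the same definitional inequality, not from Lemma~\ref{lemma09}.) Worse, the unconditional bound $opt(K)\le 11$ you plan to prove for the five depicted structures is false in general: by Remark~\ref{remark02} the satellites drawn as single edges may actually be triangles or stars, whose extra (non-$M$) vertices do not count toward $s(K)$ but do lengthen paths, so such a $K$ can admit a $12$-vertex cover --- it is then simply not critical and falls into the ``otherwise'' branch. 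The five structures are necessary, not sufficient, conditions for criticality, so the exhaustive cover-elimination argument you anticipate is both unnecessary and unprovable as stated; dropping it and citing the definition of criticality closes the proof.
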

\begin{proof}
Note that if $K_c$ is a star, then it has exactly one anchor.
So, by the first statement of Lemma~\ref{lemma07}, $K_c$ is a $5$-path or an edge.
If $K_c$ is an edge, then clearly both $v_1,v_2$ of $K_c$ are $2$-anchors and 
hence $opt(K) \ge 2\times 5=10$ because of the two vertex-disjoint $5^+$-paths $P_{v_1}$ and $P_{v_2}$.

Suppose $K_c$ is a $5$-path. 
Since $K$ has exactly two $2$-anchors, $opt(K) \ge 2\times 5=10$ and $s(K) \le 18$. 
Now, if $s(K) \le 12$, then we are done. 
Thus, we can further assume that $s(K) > 12$ and in turn $s(K) \ge 14$ because $s(K)$ is even. 
Therefore, $s(K) \in \{14, 16, 18\}$. 

\medskip
{\em Case 1:} $s(K) = 18$. 
In this case, $K$ has exactly seven satellite elements in total. 
Since $K$ has only two $2$-anchors, each vertex of $K_c$ is a $1$- or $2$-anchor in $K$. 
So, the total number of $1$- and $2$-anchors is $5$.
By Lemma~\ref{lemma09}, if one of $v_1,v_3,v_5$ is a $2$-anchor, then $opt(K)\ge 17$ and $\frac {s(K)}{opt(K)}\le \frac {18}{17}$.
Hence $K$ is not critical and then we can assume none of $v_1,v_3,v_5$ is a $2$-anchor.
That is, $v_2,v_4$ are both $2$-anchors and $v_1,v_2,v_3$ are both $1$-anchors.
By Lemma~\ref{lemma09} again, $opt(K)\ge 13$.
Obviously, if $opt(K) \ge 15$, then $\frac {s(K)}{opt(K)} \le \frac {18}{15} < \frac{14}{11}$ and hence $K$ is not critical. 
Otherwise,  $opt(K) = 13$ or $14$, and $\frac {s(K)}{opt(K)} = \frac {18}{13}$ or $\frac {18}{14}$. 

\medskip
{\em Case 2:} $s(K) = 16$. 
In this case, $K$ has exactly six satellite elements in total. 
Since $K$ has exactly two $2$-anchors, the total number of $1$- and $2$-anchors is $4$.
By Lemma~\ref{lemma09}, if one of $v_1,v_3,v_5$ is a $2$-anchor, then $opt(K)\ge 15$ and $K$ is not critical.
Similarly to Case~1, we can assume $v_2,v_4$ are both $2$-anchors and two of $v_1,v_2,v_3$ are $1$-anchors.
By Lemma~\ref{lemma09} again, $opt(K)\ge 12$.
Obviously, if $opt(K) \ge 13$, then $\frac {s(K)}{opt(K)} \le \frac {16}{13} < \frac{14}{11}$ and hence $K$ is not critical. 
Otherwise,  $opt(K) = 12$ and $\frac {s(K)}{opt(K)} = \frac {16}{12}> \frac {14}{11}$.  

\medskip
{\em Case 3:} $s(K) = 14$. 
In this case, $K$ has exactly five satellite elements in total. 
If both $v_1$ and $v_2$ are $2$-anchors in $K$, then $K$ contains three vertex-disjoint $4^+$-paths 
(namely, $P_{v_1}$, $P_{v_2}$, and a $4^+$-path obtained by connecting an edge of $K_c$ to $Q_{v_i}$ for some $i\in\{3,4,5\}$ where $v_i$ is a $1$-anchor).
Hence $opt(K) \ge 2\times 5+4 = 14$, implying that $K$ is not critical. 
Similarly, if both $v_4,v_5$ or $v_1,v_5$ are $2$-anchors in $K$, then $opt(K) \ge 14$, implying that $K$ is not critical. 
So, it remains to consider the following three subcases.

\medskip
{\em Case 3.1:} $v_3$ is a $2$-anchor in $K$. 
In this case, either some $v_i\in\{v_1,v_2\}$ or some $v_i\in\{v_4,v_5\}$ is the other $2$-anchor of $K$. 
Without loss of generality, we assume the former case. 
If some $v_j\in\{v_4,v_5\}$ is a $1$-anchor in $K$, then $K$ contains three vertex-disjoint $4^+$-paths 
(namely, $P_{v_3}$, $P_{v_i}$, and a $4^+$-path obtained by connecting the edge $\{v_4,v_5\}$ to $P_{v_j}$), 
and hence $opt(K) \ge 2\times 5+4 = 12$, implying that $K$ is not critical. 
Thus, we may assume that one of $v_1$ and $v_2$ is a $2$-anchor and the other is a $1$-anchor in $K$. 
Then, $K$ contains a $6^+$-path obtained by using the edge $\{v_1,v_2\}$ to connect $Q_{v_1}$ and $Q_{v_2}$. 
Since this $6^+$-path and $P_{v_3}$ are vertex-disjoint, $opt(K)\ge 5+6 = 11$ and hence $\frac {s(K)}{opt(K)} \le \frac {14}{11}$. 
If $opt(K)\ge 12$, then $\frac {s(K)}{opt(K)} \le \frac {14}{12} < \frac {14}{11}$ and therefore $K$ is not critical. 
Otherwise, $opt(K)=11$ and $K$ is critical. 

\medskip
{\em Case 3.2:} Both $v_2$ and $v_4$ are $2$-anchors in $K$. 
In this case, $v_1$, $v_3$, or $v_5$ is a $1$-anchor in $K$. 
We assume that $v_1$ is a $1$-anchor in $K$; the other two cases can be similarly discussed. 
Then, besides $P_{v_4}$, $K$ contains a $6^+$-path obtained by using the edge $\{v_1,v_2\}$ to connect $Q_{v_1}$ and $Q_{v_2}$. 
So, $opt(K)\ge 5+6=11$. If $opt(K) \ge 12$, then $\frac {s(K)}{opt(K)} \le \frac {14}{12}$ and 
$K$ is not critical. 
Thus, $K$ can be critical only when $opt(K) = 11$. 

\medskip
{\em Case 3.3:} 
Either both $v_1$ and $v_4$ are $2$-anchors in $K$, or both $v_2$ and $v_5$ are $2$-anchors in $K$. 
By symmetry, we may assume the former case. 
If $v_2$ is a $1$-anchor in $K$, then besides $P_{v_1}$ and $P_{v_4}$, $K$ contains a $4^+$-path obtained by using the path $\{v_2,v_3\}$ to connect $Q_{v_2}$, 
implying that $opt(K) \ge 2\times 5+4 = 14$ and hence $K$ is not critical. 
Similarly, if $v_3$ is a $1$-anchor in $K$, then $opt(K) \ge 14$ and hence $K$ is not critical. 
Hence, we may assume that $v_5$ is a $1$-anchor in $K$. 
Then, similarly to Case~3.2, $K$ can be critical only when $opt(K) = 11$. 
\end{proof}

\begin{lemma}
\label{lemma13}
Suppose that $K$ has exactly one $2$-anchor.  
Then, the following statements hold:
\begin{enumerate}
\parskip=0pt
\item
Suppose $K_c$ is an edge. 
If $K$ is critical, then $s(K)=8$ and $opt(K) = 6$; 
otherwise, either $s(K)=6$ and $opt(K)\ge 5$, or $s(K)=8$ and $opt(K) \ge 7$. 

\item 
Suppose $K_c$ is a star. 
Then, $s(K)=6$, $opt(K)\ge 5$ and hence $K$ is not critical.

\item Suppose $K_c$ is a $5$-path and $s(K) \ge 12$.
Then, $s(K)=12$ and $opt(K) \ge 10$; 
$s(K)=14$ and $opt(K) \ge 12$; 
$s(K)=16$ and $opt(K) \ge 13$. 
Hence $K$ is not critical. 

\item Suppose that $K_c$ is a $5$-path and $s(K) = 10$. 
If $K$ is not critical, then $opt(K) \ge 8$; 
otherwise, $opt(K) = 7$ and there is a pair $(v_i,v_j)\in\{(v_2,v_1), (v_4,v_5)\}$ 
such that $v_i$ is a $2$-anchor and $v_j$ is a $1$-anchor in $K$. 

\item Suppose that $K_c$ is a $5$-path and $s(K) = 8$. 
If $K$ is not critical, then $opt(K) \ge 7$; otherwise, either $opt(K)=5$ or $opt(K)=6$. 
Moreover, if $opt(K) = 6$, then one of the non-middle internal vertex of $K$ is a 2-anchor in $K$;
if $opt(K) = 5$, then the middle vertex of $K_c$ is a 2-anchor in $K$. 
\end{enumerate}
\end{lemma}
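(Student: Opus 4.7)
The plan is to prove the five statements by a case analysis organised first by the structure of $K_c$ (edge, star, or $5$-path) and then by the position of the unique $2$-anchor and (when relevant) the $1$-anchor within $K_c$. For each configuration I will compute $s(K)$ as $|V(K_c)\cap V(M_C)|$ plus twice the number of satellite elements of $K$, using Lemma~\ref{lemma03} to note that a $5$-path center contributes exactly $4$ vertices to $V(M_C)$, an edge or star center contributes $2$, and each bad satellite contributes~$2$; and I will lower bound $opt(K)$ by exhibiting explicit vertex-disjoint $4^+$-paths built from $P_v$, $Q_v$, and appropriate subpaths of $K_c$. Throughout I will invoke Lemma~\ref{lemma03} to restrict which satellite types may be adjacent to which vertex of $K_c$: a triangle satellite, or a star satellite whose rescue-edge is incident to a satellite vertex of the star, can rescue only to an internal non-middle vertex of a $5$-path.

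For Statement~1, with $K_c$ an edge and (say) $v_1$ the unique $2$-anchor, the vertex $v_2$ is either a $0$-anchor (so $K$ has two satellites, $s(K)=6$, and $opt(K)\ge 5$ from $P_{v_1}$) or a $1$-anchor (so $K$ has three satellites, $s(K)=8$, and $opt(K)\ge 6$ from the $6^+$-path obtained by chaining $Q_{v_1}$, the $K_c$-edge $\{v_1,v_2\}$, and one rescue side of $v_2$); comparing $s(K)/opt(K)$ against $\frac{14}{11}$ then separates critical ($opt(K)=6$) from non-critical ($opt(K)\ge 7$). Statement~2 is immediate: the only anchor of a star $K_c$ is its center, which must be the unique $2$-anchor, giving $s(K)=6$ and $opt(K)\ge 5$ from $P_{v_1}$, hence $s(K)/opt(K)\le 6/5<\frac{14}{11}$.

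For Statement~3, writing $s(K)=8+2j$ with $j$ the number of $1$-anchors, the hypothesis $s(K)\ge 12$ forces $j\in\{2,3,4\}$. When $j\in\{3,4\}$ (so $s(K)\in\{14,16\}$), Lemma~\ref{lemma09} directly supplies the required lower bounds $opt(K)\ge 12$ and $opt(K)\ge 13$ (in the worst placement of the $2$-anchor, namely at an internal non-middle vertex of $K_c$). When $j=2$ (so $s(K)=12$), I will enumerate---up to the left-right symmetry of $K_c$---the placements of the $2$-anchor and the two $1$-anchors, and in each configuration exhibit two vertex-disjoint $4^+$-paths summing to at least $10$ vertices; typically $P_{v_{p_2}}$ (contributing $5$) together with a $5^+$-path obtained by chaining the two $Q$-segments through the appropriate subpath of $K_c$.

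The main obstacle lies in Statements~4 and~5, where the threshold $\frac{14}{11}$ is delicate and the critical/non-critical verdict turns on the exact anchor positions and satellite structure. For Statement~4 I will enumerate the pairs $(p_2,p_1)$ of positions of the $2$-anchor and the $1$-anchor up to left-right symmetry of $K_c$: in every configuration outside $\{(2,1),(4,5)\}$ I will display two disjoint $4^+$-paths totalling at least $9$ vertices, which implies $opt(K)\ge 8$; in the two exceptional pairs, removing $v_{p_2}$ disconnects $K$ into the pieces $\{v_{p_1}\}\cup V(Q_{v_{p_1}})$, the short $K_c$-tail (a $3$-path), and the two satellites of $v_{p_2}$, none of which contains a $4^+$-path internally, so every $4^+$-path must bridge two such pieces through $v_{p_2}$ and therefore has at most $7$ vertices, with the maximum realised by $Q_{v_{p_1}}$-$v_{p_1}$-$v_{p_2}$-$\cdots$-(other endpoint of $K_c$), giving $opt(K)=7$. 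For Statement~5 I enumerate over $p_2\in\{1,2,3\}$ using mirror symmetry: when $p_2\in\{1,5\}$, $P_{v_{p_2}}$ and the $4$-path on the remaining four consecutive $K_c$-vertices are disjoint and give $opt(K)\ge 9$; when $p_2\in\{2,4\}$, the best configuration is a single $6$-vertex path combining one rescue side of $v_{p_2}$ with the three $K_c$-vertices on the long side, yielding $opt(K)=6$ whenever both satellites of $v_{p_2}$ are edges or stars with center-rescue (and $opt(K)\ge 7$ otherwise); and when $p_2=3$, Lemma~\ref{lemma03} forces both satellites of $v_3$ to have rescue-edges at their centers, so $P_{v_3}$ is exactly a $5$-path and the fragments $v_1$-$v_2$ and $v_4$-$v_5$ cannot be combined into any $4^+$-path, giving $opt(K)=5$.
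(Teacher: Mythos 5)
Your overall strategy -- case analysis on the type of $K_c$, then on the positions of the $2$-anchor and $1$-anchors, exhibiting explicit disjoint $4^+$-paths and invoking Lemma~\ref{lemma09} for $s(K)\in\{14,16\}$ -- is exactly the paper's. Statements~1--3 are handled essentially as in the paper (with one small caveat for $s(K)=12$: when the two $1$-anchors lie on opposite sides of the $2$-anchor, you cannot chain their $Q$-segments through a subpath of $K_c$ disjoint from $P_{v_{p_2}}$; the paper instead builds two separate $4^+$-paths, one on each side, so your enumeration must cover that configuration differently).

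There is, however, a genuine gap in your treatment of Statements~4 and~5. For the exceptional pairs $(v_2,v_1)$ and $(v_4,v_5)$ you try to \emph{prove} $opt(K)=7$ by a connectivity/upper-bound argument on the component $K$, and similarly you assert $opt(K)=6$ or $opt(K)=5$ outright in Statement~5. This is both unnecessary and incorrect. The implications of the form ``if $K$ is critical then $opt(K)=7$'' are not structural upper bounds at all: they follow from the \emph{lower} bound $opt(K)\ge 7$ (the $7^+$-path $Q_{v_1}$--$v_1$--$v_2$--$v_3$--$v_4$--$v_5$) combined with Definition~\ref{def08}, since $s(K)=10$ and $\frac{10}{opt(K)}\ge\frac{14}{11}$ forces $opt(K)\le 7$. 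Conversely, when $K$ is not critical the same definition gives $opt(K)\ge 8$, which your upper-bound argument would wrongly exclude: the lemma explicitly allows $opt(K)\ge 8$ in the very configurations $(v_2,v_1),(v_4,v_5)$. Your argument fails for two concrete reasons. First, $opt(K)$ is not determined by the edges of $H+C$ alone -- the graph on $V(K)$ may contain further edges of $G$ (the thin edges in Figure~\ref{fig01}), so bounding paths by ``bridging pieces through $v_{p_2}$'' in $K$ does not bound $opt(K)$. Second, even restricted to $H+C$ your arithmetic is off: if $Q_{v_1}$ were a $4$-path, the path $Q_{v_1}$--$v_2$--$v_3$--$v_4$--$v_5$ would have $8$ vertices, not at most $7$. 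The same criticism applies to your claims ``$opt(K)=6$ whenever both satellites are edges or center-rescued stars'' and ``$opt(K)=5$'' for $p_2=3$ in Statement~5: the paper only establishes the lower bounds $opt(K)\ge 6$ (resp.\ $\ge 5$) and then reads off the exact value from the assumption that $K$ is critical. Replace your upper-bound arguments by this criticality computation and the proof goes through.
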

\begin{proof}
We first prove the first statement.
Without loss of generality, let $v_1$ be the unique $2$-anchor in $K$. 
If $K$ contains no $1$-anchor, then $s(K)=6$ and $opt(K)\ge 5$ because of $P_{v_1}$, implying that $K$ is not critical. 
So, we may assume that $v_2$ is a $1$-anchor.
Thus, $s(K)=8$, and $opt(K)\ge 6$ because we can obtain a $6^+$-path by using the edge $\{v_1,v_2\}$ to connect $Q_{v_1}$ and $Q_{v_2}$. 
If $opt(K) \ge 7$, $\frac {s(K)}{opt(K)}\le\frac 87$ and hence $K$ is not critical. 
Otherwise, $opt(K) = 6$, and $K$ is critical and has the first structure in Figure~\ref{fig01}.  

If $K_c$ is a star, then $v_1$ is a $2$-anchor and $s(K)=6$.
We can obtain a $5^+$-path by choosing $P_{v_1}$.
So, $opt(K)\ge 5$ and $K$ is not critical.

To prove the other statements, we assume that $K_c$ is a $5$-path. 
Since $K$ has exactly one $2$-anchors and $s(K)$ is even, $s(K) \in \{8, 10, 12, 14, 16\}$.  

\medskip
{\em Case 1:} $s(K) = 16$. 
In this case, one vertex of $K_c$ is a $2$-anchor in $K$ and each other vertex of $K_c$ is a $1$-anchor in $K$. 
So, by the first statement of Lemma~\ref{lemma09}, $opt(K)\ge 13$ and $\frac{s(K)}{opt(K)} \le \frac{16}{13}$. 
Therefore, $K$ is not critical. 

\medskip
{\em Case 2:} $s(K) = 14$. 
In this case, $K$ has exactly three $1$-anchors and one $2$-anchor.
So, by the second statement of Lemma~\ref{lemma09}, $opt(K)\ge 12$ and $K$ is not critical.

\medskip
{\em Case 3:} $s(K) = 12$. In this case, we distinguish three subcases as follows.

\medskip
{\em Case 3.1:} $v_3$ is a $2$-anchor in $K$. 
In this case, if both $v_i$ and $v_{i+1}$ are $1$-anchors in $K$ for some $i\in\{1,4\}$, 
then besides $P_{v_3}$, $K$ contains a $6^+$-path obtained by using the edge $\{v_i,v_{i+1}\}$ to connect $Q_{v_i}$ and $Q_{v_{i+1}}$.
So, $opt(K) \ge 5+6=11$ and $K$ is not critical. 
Otherwise, exactly one of $v_1$ and $v_2$ is a $1$-anchor in $K$ and so is exactly one of $v_4$ and $v_5$, 
and besides $P_{v_3}$, $K$ contains two $4^+$-paths obtained by connecting the edge $\{v_j,v_{j+1}\}$ to $Q_{v_j}$ or $Q_{v_{j+1}}$ for each $j\in\{1,4\}$.
Therefore $opt(K) \ge 2\times 4 + 5 = 13$ and $K$ is not critical. 

\medskip
{\em Case 3.2:} $v_1$ or $v_5$ is a $2$-anchor in $K$. 
By symmetry, we may assume $v_1$ is a $2$-anchor in $K$. 
Then, two vertices $v_i$ and $v_j$ in $\{v_2,\ldots,v_5\}$ are $1$-anchors in $K$. 
So, besides $P_{v_1}$, $K$ has a $6^+$-path (vertex-disjoint from $P_{v_1}$) obtained by using the subpath of $K_c$ between $v_i$ and $v_j$ to connect $Q_{v_i}$ and $Q_{v_j}$. 
Thus, $opt(K) \ge 5+6=11$ and hence $K$ is not critical. 

\medskip
{\em Case 3.3:} $v_2$ or $v_4$ is a $2$-anchor in $K$. 
By symmetry, we may assume $v_2$ is a 2-anchor in $K$. 
If two vertices in $\{v_3, v_4, v_5\}$ are $1$-anchors in $K$, then as in Case~3.2, $opt(K) \ge 11$. 
So, we may assume that $v_1$ is a $1$-anchor in $K$. 
Then, $K$ has a $6^+$-path obtained by using the edge $\{v_1,v_2\}$ to connect $Q_{v_1}$ and $Q_{v_2}$. 
Moreover, $K$ has a $4^+$-path obtained by connecting the edge $\{v_i,v_{i+1}\}$ to $Q_{v_i}$ or $Q_{v_{i+1}}$ for some $i\in\{3,4\}$
since one of $v_3, v_4, v_5$ is a $1$-anchor.
Thus, $opt(K)\ge 6 + 4 = 10$ and $K$ is not critical.

\medskip
{\em Case 4:} $s(K) = 10$. 
In this case, $K$ has exactly one $2$-anchor and exactly one $1$-anchor. 

\medskip
{\em Case 4.1:} $v_i$ is a 2-anchor in $K$ for some $i\in\{1,3,5\}$. 
In this case, besides $P_{v_i}$, $K$ has a $4^+$-path (vertex-disjoint from $P_{v_i}$) obtained by
connecting the edge $\{v_j,v_{j+1}\}$ to $Q_{v_j}$ or $Q_{v_{j+1}}$ for some $j\in\{1,\ldots,5\}$. 
So, $opt(K)\ge 5+4 = 9$ and $K$ is not critical. 

\medskip
{\em Case 4.2:} $v_i$ is a $2$-anchor in $K$ for some $i\in\{2,4\}$. 
By symmetry, we may assume $v_2$ is the $2$-anchor. 
If some $v_j\in\{v_3,v_4,v_5\}$ is a $1$-anchor in $K$, then besides $P_{v_2}$, $K$ has a $4^+$-path 
(vertex-disjoint from $P_{v_i}$) obtained by connecting the edge $\{v_j,v_{j+1}\}$ to $Q_{v_j}$ or 
$Q_{v_{j+1}}$ for some $j\in\{3,4\}$; 
so, $opt(K)\ge 5 + 4 = 9$ and $K$ is not critical. 
Thus, we may further assume that $v_1$ is a $1$-anchor in $K$. 
Then, $K$ has a $7^+$-path obtained by connecting $K_c$ to $Q_{v_1}$, and hence $opt(K) \ge 7$. 
If $opt(K) \ge 8$, then $K$ is not critical. 
Otherwise, $opt(K) = 7$ and $K$ becomes critical. 

\medskip
{\em Case 5:} $s(K) = 8$. 
In this case, one vertex of $K$ is a $2$-anchor in $K$ and the other vertices of $K$ are $0$-anchors. 
If $v_i$ is a $2$-anchor in $K$ for some $i\in\{1,5\}$, then besides $P_{v_i}$, $K$ has a $4^+$-path 
(indeed, a subpath of $K_c$) disjoint from $P_{v_i}$; 
so, $opt(K)\ge 5+4 = 9$ and $K$ is not critical. 
Otherwise, either $v_3$ or some $v_i\in\{v_2,v_4\}$ is a $2$-anchor in $K$. 
In the former case, $opt(K) \ge 5$ by choosing $K_c$ and hence $\frac{s(K)}{opt(K)} \le \frac{8}{5}$.
If $opt(K)\ge 7$, then $\frac{s(K)}{opt(K)} \le \frac{8}{7} < \frac{14}{11}$ and $K$ is not critical. 
By the third statement of Lemma~\ref{lemma03}, for each satellite element of $v_3$, 
if it is a star, then the center vertex must be in its rescue-edge.
So, $P_{v_3}$ is a $5$-path and hence if $K$ is critical, then $opt(K)=5$.

In the latter case, $K$ has a $6^+$-path obtained by connecting $Q_{v_i}$ to a subpath of $K_c$ with four vertices, 
and hence $opt(K) \ge 6$ and $\frac{s(K)}{opt(K)} \le \frac 86$. 
Obviously, if $opt(K)\ge 7$, then $\frac{s(K)}{opt(K)} \le \frac 87 < \frac {14}{11}$ and $K$ is not critical. 
In summary, if $K$ is critical, then either $opt(K)=5$ or $opt(K)=6$. 

By the discussion for the case when $s(K)=8$, $v_3$ is a $2$-anchor in $K$ if $opt(K) = 5$, 
while either $v_2$ or $v_4$ is a $2$-anchor in $K$ if $opt(K) = 6$.
Then, the lemma follows.
\end{proof}

Now, Lemmas~\ref{lemma10}--\ref{lemma13} imply that a critical component $K$ of $H+C$ has one $2$-anchor or two $2$-anchors.
Furthermore, if $K$ has one $2$-anchor, then by Lemma~\ref{lemma13},
$K_c$ is an edge or a $5$-path and the possible structures for $K$ are shown in the first {\color{black}row} of Figure~\ref{fig01}.
Otherwise, $K$ has two $2$-anchors and $K_c$ is a $5$-path.
By Lemma~\ref{lemma12}, $s(K)\in \{14, 16, 18\}$ and Figure~\ref{fig01} except the first row shows all possible structures for $K$.

\begin{remark}
\label{remark02}
Even if a satellite element $S$ of $K$ can be a star or triangle, 
we almost always draw only one edge of $S$ in Figure \ref{fig01} for simplicity. 
We will keep this convention in the subsequent figures.
\end{remark}

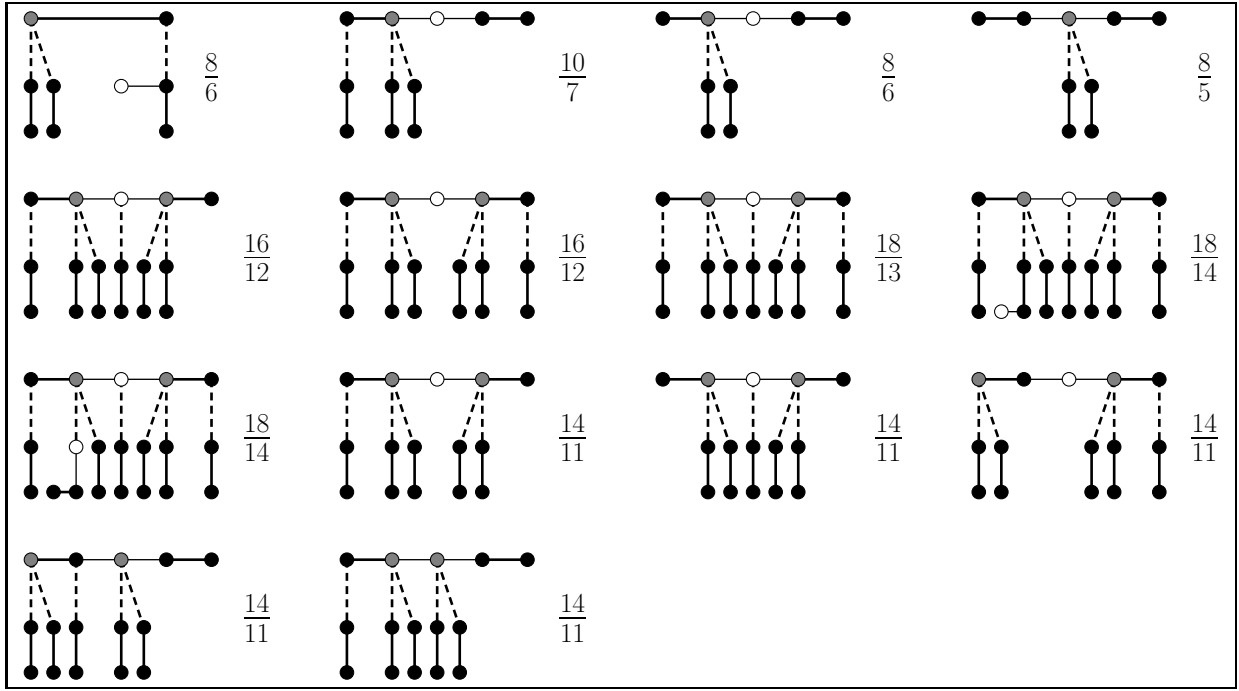
\begin{figure}[thb]
\begin{center}
\framebox{
\begin{minipage}{6.3in}
\begin{tikzpicture}[scale=0.6,transform shape]

\draw [thick, line width = 1pt] (-8, 2) -- (-5, 2);
\draw [densely dashed, line width = 1pt] (-8, 2) -- (-8, 0.5);
\draw [thick, line width = 1pt] (-8, 0.5) -- (-8, -0.5);
\draw [densely dashed, line width = 1pt] (-8, 2) -- (-7.5, 0.5);
\draw [thick, line width = 1pt] (-7.5, 0.5) -- (-7.5, -0.5);
\draw [densely dashed, line width = 1pt] (-5, 2) -- (-5, 0.5);
\draw [thick, line width = 1pt] (-5, 0.5) -- (-5, -0.5);
\draw [thin, line width = 0.5pt] (-5, 0.5) -- (-6, 0.5);
\filldraw[fill = gray] (-8, 2) circle(.15);
\filldraw (-5,2) circle(.15);
\filldraw (-8,0.5) circle(.15);
\filldraw (-8,-0.5) circle(.15);
\filldraw (-7.5,0.5) circle(.15);
\filldraw (-7.5,-0.5) circle(.15);
\filldraw (-5,0.5) circle(.15);
\filldraw (-5,-0.5) circle(.15);
\filldraw[fill = white] (-6, 0.5) circle(.15);
\node[font=\fontsize{27}{6}\selectfont] at (-4,0.7) {$\frac 86$};

\draw [thick, line width = 1pt] (-1, 2) -- (0, 2);
\draw [thick, line width = 1pt] (2, 2) -- (3, 2);
\draw [thin, line width = 0.5pt] (0, 2) -- (1, 2);
\draw [thin, line width = 0.5pt] (1, 2) -- (2, 2);
\draw [densely dashed, line width = 1pt] (-1, 2) -- (-1,0.5);
\draw [thick, line width = 1pt] (-1, 0.5) -- (-1, -0.5);
\draw [densely dashed, line width = 1pt] (0, 2) -- (0, 0.5);
\draw [thick, line width = 1pt] (0, 0.5) -- (0, -0.5);
\draw [densely dashed, line width = 1pt] (0, 2) -- (0.5,0.5);
\draw [thick, line width = 1pt] (0.5, 0.5) -- (0.5, -0.5);
\filldraw (-1, 2) circle(.15);
\filldraw[fill = gray] (0, 2) circle(.15);
\filldraw[fill = white] (1, 2) circle(.15);
\filldraw (2, 2) circle(.15);
\filldraw (3, 2) circle(.15);
\filldraw (-1,0.5) circle(.15);
\filldraw (-1,-0.5) circle(.15);
\filldraw (0,0.5) circle(.15);
\filldraw (0,-0.5) circle(.15);
\filldraw (0.5,0.5) circle(.15);
\filldraw (0.5,-0.5) circle(.15);
\node[font=\fontsize{25}{6}\selectfont] at (4,0.7) {$\frac {10}{7}$};

\draw [thick, line width = 1pt] (6, 2) -- (7, 2);
\draw [thick, line width = 1pt] (9, 2) -- (10, 2);
\draw [thin, line width = 0.5pt] (7, 2) -- (8, 2);
\draw [thin, line width = 0.5pt] (8, 2) -- (9, 2);
\draw [densely dashed, line width = 1pt] (7, 2) -- (7, 0.5);
\draw [thick, line width = 1pt] (7, 0.5) -- (7, -0.5);
\draw [densely dashed, line width = 1pt] (7, 2) -- (7.5, 0.5);
\draw [thick, line width = 1pt] (7.5, 0.5) -- (7.5, -0.5);
\filldraw (6, 2) circle(.15);
\filldraw[fill = gray] (7, 2) circle(.15);
\filldraw[fill = white] (8, 2) circle(.15);
\filldraw (9, 2) circle(.15);
\filldraw (10, 2) circle(.15);
\filldraw (7, 0.5) circle(.15);
\filldraw (7, -0.5) circle(.15);
\filldraw (7.5, 0.5) circle(.15);
\filldraw (7.5, -0.5) circle(.15);
\node[font=\fontsize{25}{6}\selectfont] at (11,0.7) {$\frac 86$};

\draw [thick, line width = 1pt] (13, 2) -- (14, 2);
\draw [thick, line width = 1pt] (16, 2) -- (17, 2);
\draw [thin, line width = 0.5pt] (14, 2) -- (15, 2);
\draw [thin, line width = 0.5pt] (15, 2) -- (16, 2);
\draw [densely dashed, line width = 1pt] (15, 2) -- (15, 0.5);
\draw [thick, line width = 1pt] (15, 0.5) -- (15, -0.5);
\draw [densely dashed, line width = 1pt] (15, 2) -- (15.5, 0.5);
\draw [thick, line width = 1pt] (15.5, 0.5) -- (15.5, -0.5);
\filldraw (13, 2) circle(.15);
\filldraw (14, 2) circle(.15);
\filldraw[fill = gray] (15, 2) circle(.15);
\filldraw (16, 2) circle(.15);
\filldraw (17, 2) circle(.15);
\filldraw (15, 0.5) circle(.15);
\filldraw (15, -0.5) circle(.15);
\filldraw (15.5, 0.5) circle(.15);
\filldraw (15.5, -0.5) circle(.15);
\node[font=\fontsize{25}{6}\selectfont] at (18,0.7) {$\frac 85$};


\draw [thick, line width = 1pt] (-8, -2) -- (-7, -2);
\draw [thick, line width = 1pt] (-5, -2) -- (-4, -2);
\draw [thin, line width = 0.5pt] (-7, -2) -- (-6, -2);
\draw [thin, line width = 0.5pt] (-6, -2) -- (-5, -2);
\draw [densely dashed, line width = 1pt] (-8, -2) -- (-8,-3.5);
\draw [thick, line width = 1pt] (-8, -3.5) -- (-8, -4.5);
\draw [densely dashed, line width = 1pt] (-7, -2) -- (-7, -3.5);
\draw [thick, line width = 1pt] (-7, -3.5) -- (-7, -4.5);
\draw [densely dashed, line width = 1pt] (-7, -2) -- (-6.5,-3.5);
\draw [thick, line width = 1pt] (-6.5, -3.5) -- (-6.5, -4.5);
\draw [densely dashed, line width = 1pt] (-6, -2) -- (-6,-3.5);
\draw [thick, line width = 1pt] (-6, -3.5) -- (-6, -4.5);
\draw [densely dashed, line width = 1pt] (-5, -2) -- (-5.5,-3.5);
\draw [thick, line width = 1pt] (-5.5, -3.5) -- (-5.5, -4.5);
\draw [densely dashed, line width = 1pt] (-5, -2) -- (-5,-3.5);
\draw [thick, line width = 1pt] (-5, -3.5) -- (-5, -4.5);
\filldraw (-8, -2) circle(.15);
\filldraw[fill = gray] (-7, -2) circle(.15);
\filldraw[fill = white] (-6, -2) circle(.15);
\filldraw[fill = gray] (-5, -2) circle(.15);
\filldraw (-4, -2) circle(.15);
\filldraw (-8,-3.5) circle(.15);
\filldraw (-8,-4.5) circle(.15);
\filldraw (-7,-3.5) circle(.15);
\filldraw (-7,-4.5) circle(.15);
\filldraw (-6.5,-3.5) circle(.15);
\filldraw (-6.5,-4.5) circle(.15);
\filldraw (-5.5,-3.5) circle(.15);
\filldraw (-5.5,-4.5) circle(.15);
\filldraw (-5,-3.5) circle(.15);
\filldraw (-5,-4.5) circle(.15);
\filldraw (-6,-3.5) circle(.15);
\filldraw (-6,-4.5) circle(.15);
\node[font=\fontsize{27}{6}\selectfont] at (-3,-3.3) {$\frac {16}{12}$};

\draw [thick, line width = 1pt] (-1, -2) -- (0, -2);
\draw [thick, line width = 1pt] (2, -2) -- (3, -2);
\draw [thin, line width = 0.5pt] (0, -2) -- (1, -2);
\draw [thin, line width = 0.5pt] (1, -2) -- (2, -2);
\draw [densely dashed, line width = 1pt] (-1, -2) -- (-1,-3.5);
\draw [thick, line width = 1pt] (-1, -3.5) -- (-1, -4.5);
\draw [densely dashed, line width = 1pt] (0, -2) -- (0, -3.5);
\draw [thick, line width = 1pt] (0, -3.5) -- (0, -4.5);
\draw [densely dashed, line width = 1pt] (0, -2) -- (0.5,-3.5);
\draw [thick, line width = 1pt] (0.5, -3.5) -- (0.5, -4.5);
\draw [densely dashed, line width = 1pt] (2, -2) -- (1.5,-3.5);
\draw [thick, line width = 1pt] (1.5, -3.5) -- (1.5, -4.5);
\draw [densely dashed, line width = 1pt] (2, -2) -- (2,-3.5);
\draw [thick, line width = 1pt] (2, -3.5) -- (2, -4.5);
\draw [densely dashed, line width = 1pt] (3, -2) -- (3,-3.5);
\draw [thick, line width = 1pt] (3, -3.5) -- (3, -4.5);
\filldraw (-1, -2) circle(.15);
\filldraw[fill = gray] (0, -2) circle(.15);
\filldraw[fill = white] (1, -2) circle(.15);
\filldraw[fill = gray] (2, -2) circle(.15);
\filldraw (3, -2) circle(.15);
\filldraw (-1,-3.5) circle(.15);
\filldraw (-1,-4.5) circle(.15);
\filldraw (0,-3.5) circle(.15);
\filldraw (0,-4.5) circle(.15);
\filldraw (0.5,-3.5) circle(.15);
\filldraw (0.5,-4.5) circle(.15);
\filldraw (1.5,-3.5) circle(.15);
\filldraw (1.5,-4.5) circle(.15);
\filldraw (2,-3.5) circle(.15);
\filldraw (2,-4.5) circle(.15);
\filldraw (3,-3.5) circle(.15);
\filldraw (3,-4.5) circle(.15);
\node[font=\fontsize{27}{6}\selectfont] at (4,-3.3) {$\frac {16}{12}$};

\draw [thick, line width = 1pt] (6, -2) -- (7, -2);
\draw [thick, line width = 1pt] (9, -2) -- (10, -2);
\draw [thin, line width = 0.5pt] (7, -2) -- (8, -2);
\draw [thin, line width = 0.5pt] (8, -2) -- (9, -2);
\draw [densely dashed, line width = 1pt] (6, -2) -- (6,-3.5);
\draw [thick, line width = 1pt] (6, -3.5) -- (6, -4.5);
\draw [densely dashed, line width = 1pt] (7, -2) -- (7, -3.5);
\draw [thick, line width = 1pt] (7, -3.5) -- (7, -4.5);
\draw [densely dashed, line width = 1pt] (7, -2) -- (7.5,-3.5);
\draw [thick, line width = 1pt] (7.5, -3.5) -- (7.5, -4.5);
\draw [densely dashed, line width = 1pt] (9, -2) -- (8.5,-3.5);
\draw [thick, line width = 1pt] (8.5, -3.5) -- (8.5, -4.5);
\draw [densely dashed, line width = 1pt] (9, -2) -- (9,-3.5);
\draw [thick, line width = 1pt] (9, -3.5) -- (9, -4.5);
\draw [densely dashed, line width = 1pt] (10, -2) -- (10,-3.5);
\draw [thick, line width = 1pt] (10, -3.5) -- (10, -4.5);
\draw [densely dashed, line width = 1pt] (8, -2) -- (8,-3.5);
\draw [thick, line width = 1pt] (8, -3.5) -- (8, -4.5);
\filldraw (6, -2) circle(.15);
\filldraw[fill = gray] (7, -2) circle(.15);
\filldraw[fill = white] (8, -2) circle(.15);
\filldraw[fill = gray] (9, -2) circle(.15);
\filldraw (10, -2) circle(.15);
\filldraw (6,-3.5) circle(.15);
\filldraw (6,-4.5) circle(.15);
\filldraw (7,-3.5) circle(.15);
\filldraw (7,-4.5) circle(.15);
\filldraw (7.5,-3.5) circle(.15);
\filldraw (7.5,-4.5) circle(.15);
\filldraw (8,-3.5) circle(.15);
\filldraw (8,-4.5) circle(.15);
\filldraw (8.5,-3.5) circle(.15);
\filldraw (8.5,-4.5) circle(.15);
\filldraw (9,-3.5) circle(.15);
\filldraw (9,-4.5) circle(.15);
\filldraw (10,-3.5) circle(.15);
\filldraw (10,-4.5) circle(.15);
\node[font=\fontsize{27}{6}\selectfont] at (11,-3.3) {$\frac {18}{13}$};

\draw [thick, line width = 1pt] (13, -2) -- (14, -2);
\draw [thick, line width = 1pt] (16, -2) -- (17, -2);
\draw [thin, line width = 0.5pt] (14, -2) -- (15, -2);
\draw [thin, line width = 0.5pt] (15, -2) -- (16, -2);
\draw [densely dashed, line width = 1pt] (13, -2) -- (13,-3.5);
\draw [thick, line width = 1pt] (13, -3.5) -- (13, -4.5);
\draw [densely dashed, line width = 1pt] (14, -2) -- (14, -3.5);
\draw [thick, line width = 1pt] (14, -3.5) -- (14, -4.5);
\draw [thin, line width = 0.5pt] (14, -4.5) -- (13.5, -4.5);
\draw [densely dashed, line width = 1pt] (14, -2) -- (14.5,-3.5);
\draw [thick, line width = 1pt] (14.5, -3.5) -- (14.5, -4.5);
\draw [densely dashed, line width = 1pt] (15, -2) -- (15, -3.5);
\draw [thick, line width = 1pt] (15, -3.5) -- (15, -4.5);
\draw [densely dashed, line width = 1pt] (16, -2) -- (15.5,-3.5);
\draw [thick, line width = 1pt] (15.5, -3.5) -- (15.5, -4.5);
\draw [densely dashed, line width = 1pt] (16, -2) -- (16,-3.5);
\draw [thick, line width = 1pt] (16, -3.5) -- (16, -4.5);
\draw [densely dashed, line width = 1pt] (17, -2) -- (17, -3.5);
\draw [thick, line width = 1pt] (17, -3.5) -- (17, -4.5);
\filldraw (13, -2) circle(.15);
\filldraw[fill = gray] (14, -2) circle(.15);
\filldraw[fill = white] (15, -2) circle(.15);
\filldraw[fill = gray] (16, -2) circle(.15);
\filldraw (17, -2) circle(.15);
\filldraw (13,-3.5) circle(.15);
\filldraw (13,-4.5) circle(.15);
\filldraw (14,-3.5) circle(.15);
\filldraw (14,-4.5) circle(.15);
\filldraw (14.5,-3.5) circle(.15);
\filldraw (14.5,-4.5) circle(.15);
\filldraw (15.5,-3.5) circle(.15);
\filldraw (15.5,-4.5) circle(.15);
\filldraw (16,-3.5) circle(.15);
\filldraw (16,-4.5) circle(.15);
\filldraw (17,-3.5) circle(.15);
\filldraw (17,-4.5) circle(.15);
\filldraw (15,-3.5) circle(.15);
\filldraw (15,-4.5) circle(.15);
\filldraw[fill = white] (13.5,-4.5) circle(.15);
\node[font=\fontsize{27}{6}\selectfont] at (18,-3.3) {$\frac {18}{14}$};


\draw [thick, line width = 1pt] (-8, -6) -- (-7, -6);
\draw [thick, line width = 1pt] (-5, -6) -- (-4, -6);
\draw [thin, line width = 0.5pt] (-7, -6) -- (-6, -6);
\draw [thin, line width = 0.5pt] (-6, -6) -- (-5, -6);
\draw [densely dashed, line width = 1pt] (-8, -6) -- (-8,-7.5);
\draw [thick, line width = 1pt] (-8, -7.5) -- (-8, -8.5);
\draw [densely dashed, line width = 1pt] (-7, -6) -- (-7, -7.5);
\draw [thin, line width = 0.5pt] (-7, -7.5) -- (-7, -8.5);
\draw [thick, line width = 1pt] (-7, -8.5) -- (-7.5, -8.5);
\draw [densely dashed, line width = 1pt] (-7, -6) -- (-6.5,-7.5);
\draw [thick, line width = 1pt] (-6.5, -7.5) -- (-6.5, -8.5);
\draw [densely dashed, line width = 1pt] (-6, -6) -- (-6,-7.5);
\draw [thick, line width = 1pt] (-6, -7.5) -- (-6, -8.5);
\draw [densely dashed, line width = 1pt] (-5, -6) -- (-5.5,-7.5);
\draw [thick, line width = 1pt] (-5.5, -7.5) -- (-5.5, -8.5);
\draw [densely dashed, line width = 1pt] (-5, -6) -- (-5,-7.5);
\draw [thick, line width = 1pt] (-5, -7.5) -- (-5, -8.5);
\draw [densely dashed, line width = 1pt] (-4, -6) -- (-4,-7.5);
\draw [thick, line width = 1pt] (-4, -7.5) -- (-4, -8.5);
\filldraw (-8, -6) circle(.15);
\filldraw[fill = gray] (-7, -6) circle(.15);
\filldraw[fill = white] (-6, -6) circle(.15);
\filldraw[fill = gray] (-5, -6) circle(.15);
\filldraw (-4, -6) circle(.15);
\filldraw (-8,-7.5) circle(.15);
\filldraw (-8,-8.5) circle(.15);
\filldraw[fill = white] (-7,-7.5) circle(.15);
\filldraw (-7,-8.5) circle(.15);
\filldraw (-7.5,-8.5) circle(.15);
\filldraw (-6.5,-7.5) circle(.15);
\filldraw (-6.5,-8.5) circle(.15);
\filldraw (-5.5,-7.5) circle(.15);
\filldraw (-5.5,-8.5) circle(.15);
\filldraw (-5,-7.5) circle(.15);
\filldraw (-5,-8.5) circle(.15);
\filldraw (-6,-7.5) circle(.15);
\filldraw (-6,-8.5) circle(.15);
\filldraw (-4,-7.5) circle(.15);
\filldraw (-4,-8.5) circle(.15);
\node[font=\fontsize{27}{6}\selectfont] at (-3,-7.3) {$\frac {18}{14}$};

\draw [thick, line width = 1pt] (-1, -6) -- (0, -6);
\draw [thick, line width = 1pt] (2, -6) -- (3, -6);
\draw [thin, line width = 0.5pt] (0, -6) -- (1, -6);
\draw [thin, line width = 0.5pt] (1, -6) -- (2, -6);
\draw [densely dashed, line width = 1pt] (-1, -6) -- (-1,-7.5);
\draw [thick, line width = 1pt] (-1, -7.5) -- (-1, -8.5);
\draw [densely dashed, line width = 1pt] (0, -6) -- (0, -7.5);
\draw [thick, line width = 1pt] (0, -7.5) -- (0, -8.5);
\draw [densely dashed, line width = 1pt] (0, -6) -- (0.5,-7.5);
\draw [thick, line width = 1pt] (0.5, -7.5) -- (0.5, -8.5);
\draw [densely dashed, line width = 1pt] (2, -6) -- (1.5,-7.5);
\draw [thick, line width = 1pt] (1.5, -7.5) -- (1.5, -8.5);
\draw [densely dashed, line width = 1pt] (2, -6) -- (2,-7.5);
\draw [thick, line width = 1pt] (2, -7.5) -- (2, -8.5);
\filldraw (-1, -6) circle(.15);
\filldraw[fill = gray] (0, -6) circle(.15);
\filldraw[fill = white] (1, -6) circle(.15);
\filldraw[fill = gray] (2, -6) circle(.15);
\filldraw (3, -6) circle(.15);
\filldraw (-1,-7.5) circle(.15);
\filldraw (-1,-8.5) circle(.15);
\filldraw (0,-7.5) circle(.15);
\filldraw (0,-8.5) circle(.15);
\filldraw (0.5,-7.5) circle(.15);
\filldraw (0.5,-8.5) circle(.15);
\filldraw (1.5,-7.5) circle(.15);
\filldraw (1.5,-8.5) circle(.15);
\filldraw (2,-7.5) circle(.15);
\filldraw (2,-8.5) circle(.15);
\node[font=\fontsize{27}{6}\selectfont] at (4,-7.3) {$\frac {14}{11}$};

\draw [thick, line width = 1pt] (6, -6) -- (7, -6);
\draw [thick, line width = 1pt] (9, -6) -- (10, -6);
\draw [thin, line width = 0.5pt] (7, -6) -- (8, -6);
\draw [thin, line width = 0.5pt] (8, -6) -- (9, -6);
\draw [densely dashed, line width = 1pt] (7, -6) -- (7, -7.5);
\draw [thick, line width = 1pt] (7, -7.5) -- (7, -8.5);
\draw [densely dashed, line width = 1pt] (7, -6) -- (7.5,-7.5);
\draw [thick, line width = 1pt] (7.5, -7.5) -- (7.5, -8.5);
\draw [densely dashed, line width = 1pt] (8, -6) -- (8,-7.5);
\draw [thick, line width = 1pt] (8, -7.5) -- (8, -8.5);
\draw [densely dashed, line width = 1pt] (9, -6) -- (8.5,-7.5);
\draw [thick, line width = 1pt] (8.5, -7.5) -- (8.5, -8.5);
\draw [densely dashed, line width = 1pt] (9, -6) -- (9,-7.5);
\draw [thick, line width = 1pt] (9, -7.5) -- (9, -8.5);
\filldraw (6, -6) circle(.15);
\filldraw[fill = gray] (7, -6) circle(.15);
\filldraw[fill = white] (8, -6) circle(.15);
\filldraw[fill = gray] (9, -6) circle(.15);
\filldraw (10, -6) circle(.15);
\filldraw (7,-8.5) circle(.15);
\filldraw (7,-7.5) circle(.15);
\filldraw (7.5,-7.5) circle(.15);
\filldraw (7.5,-8.5) circle(.15);
\filldraw (8.5,-7.5) circle(.15);
\filldraw (8.5,-8.5) circle(.15);
\filldraw (9,-7.5) circle(.15);
\filldraw (9,-8.5) circle(.15);
\filldraw (8,-7.5) circle(.15);
\filldraw (8,-8.5) circle(.15);
\node[font=\fontsize{27}{6}\selectfont] at (11,-7.3) {$\frac {14}{11}$};

\draw [thick, line width = 1pt] (13, -6) -- (14, -6);
\draw [thick, line width = 1pt] (16, -6) -- (17, -6);
\draw [thin, line width = 0.5pt] (14, -6) -- (15, -6);
\draw [thin, line width = 0.5pt] (15, -6) -- (16, -6);
\draw [densely dashed, line width = 1pt] (13, -6) -- (13, -7.5);
\draw [thick, line width = 1pt] (13, -7.5) -- (13, -8.5);
\draw [densely dashed, line width = 1pt] (13, -6) -- (13.5,-7.5);
\draw [thick, line width = 1pt] (13.5, -7.5) -- (13.5, -8.5);
\draw [densely dashed, line width = 1pt] (17, -6) -- (17,-7.5);
\draw [thick, line width = 1pt] (17, -7.5) -- (17, -8.5);
\draw [densely dashed, line width = 1pt] (16, -6) -- (15.5,-7.5);
\draw [thick, line width = 1pt] (15.5, -7.5) -- (15.5, -8.5);
\draw [densely dashed, line width = 1pt] (16, -6) -- (16,-7.5);
\draw [thick, line width = 1pt] (16, -7.5) -- (16, -8.5);
\filldraw[fill = gray] (13, -6) circle(.15);
\filldraw (14, -6) circle(.15);
\filldraw[fill = white] (15, -6) circle(.15);
\filldraw[fill = gray] (16, -6) circle(.15);
\filldraw (17, -6) circle(.15);
\filldraw (13,-7.5) circle(.15);
\filldraw (13,-8.5) circle(.15);
\filldraw (13.5,-7.5) circle(.15);
\filldraw (13.5,-8.5) circle(.15);
\filldraw (15.5,-7.5) circle(.15);
\filldraw (15.5,-8.5) circle(.15);
\filldraw (16,-7.5) circle(.15);
\filldraw (16,-8.5) circle(.15);
\filldraw (17,-7.5) circle(.15);
\filldraw (17,-8.5) circle(.15);
\node[font=\fontsize{27}{6}\selectfont] at (18,-7.3) {$\frac {14}{11}$};


\draw [thick, line width = 1pt] (-8, -10) -- (-7, -10);
\draw [thick, line width = 1pt] (-5, -10) -- (-4, -10);
\draw [thin, line width = 0.5pt] (-7, -10) -- (-6, -10);
\draw [thin, line width = 0.5pt] (-6, -10) -- (-5, -10);
\draw [densely dashed, line width = 1pt] (-8, -10) -- (-8, -11.5);
\draw [thick, line width = 1pt] (-8, -11.5) -- (-8, -12.5);
\draw [densely dashed, line width = 1pt] (-8, -10) -- (-7.5,-11.5);
\draw [thick, line width = 1pt] (-7.5, -11.5) -- (-7.5, -12.5);
\draw [densely dashed, line width = 1pt] (-6, -10) -- (-6,-11.5);
\draw [thick, line width = 1pt] (-6, -11.5) -- (-6, -12.5);
\draw [densely dashed, line width = 1pt] (-6, -10) -- (-5.5,-11.5);
\draw [thick, line width = 1pt] (-5.5, -11.5) -- (-5.5, -12.5);
\draw [densely dashed, line width = 1pt] (-7, -10) -- (-7,-11.5);
\draw [thick, line width = 1pt] (-7, -11.5) -- (-7, -12.5);
\filldraw[fill = gray] (-8, -10) circle(.15);
\filldraw (-7, -10) circle(.15);
\filldraw[fill = gray] (-6, -10) circle(.15);
\filldraw (-5, -10) circle(.15);
\filldraw (-4, -10) circle(.15);
\filldraw (-8,-11.5) circle(.15);
\filldraw (-8,-12.5) circle(.15);
\filldraw (-7.5,-11.5) circle(.15);
\filldraw (-7.5,-12.5) circle(.15);
\filldraw (-5.5,-11.5) circle(.15);
\filldraw (-5.5,-12.5) circle(.15);
\filldraw (-7,-11.5) circle(.15);
\filldraw (-7,-12.5) circle(.15);
\filldraw (-6,-11.5) circle(.15);
\filldraw (-6,-12.5) circle(.15);
\node[font=\fontsize{27}{6}\selectfont] at (-3,-11.3) {$\frac {14}{11}$};

\draw [thick, line width = 1pt] (-1, -10) -- (0, -10);
\draw [thick, line width = 1pt] (2, -10) -- (3, -10);
\draw [thin, line width = 0.5pt] (0, -10) -- (1, -10);
\draw [thin, line width = 0.5pt] (1, -10) -- (2, -10);
\draw [densely dashed, line width = 1pt] (-1, -10) -- (-1, -11.5);
\draw [thick, line width = 1pt] (-1, -11.5) -- (-1, -12.5);
\draw [densely dashed, line width = 1pt] (0, -10) -- (0,-11.5);
\draw [thick, line width = 1pt] (0, -11.5) -- (0, -12.5);
\draw [densely dashed, line width = 1pt] (0, -10) -- (0.5,-11.5);
\draw [thick, line width = 1pt] (0.5, -11.5) -- (0.5, -12.5);
\draw [densely dashed, line width = 1pt] (1, -10) -- (1,-11.5);
\draw [thick, line width = 1pt] (1, -11.5) -- (1, -12.5);
\draw [densely dashed, line width = 1pt] (1, -10) -- (1.5,-11.5);
\draw [thick, line width = 1pt] (1.5, -11.5) -- (1.5, -12.5);
\filldraw (-1, -10) circle(.15);
\filldraw[fill = gray] (0, -10) circle(.15);
\filldraw[fill = gray] (1, -10) circle(.15);
\filldraw (2, -10) circle(.15);
\filldraw (3, -10) circle(.15);
\filldraw (-1,-11.5) circle(.15);
\filldraw (-1,-12.5) circle(.15);
\filldraw (0,-11.5) circle(.15);
\filldraw (0,-12.5) circle(.15);
\filldraw (0.5,-11.5) circle(.15);
\filldraw (0.5,-12.5) circle(.15);
\filldraw (1,-11.5) circle(.15);
\filldraw (1,-12.5) circle(.15);
\filldraw (1.5,-11.5) circle(.15);
\filldraw (1.5,-12.5) circle(.15);
\node[font=\fontsize{27}{6}\selectfont] at (4,-11.3) {$\frac {14}{11}$};

\end{tikzpicture}
\end{minipage}}
\end{center}
\captionsetup{width=1.0\linewidth}
\caption{The possible structures for a critical component $K$ of $H+C$, 
where thick (respectively, dashed) edges are in the matching $M$ (respectively, the path-cycle cover $C$), 
thin edges are not in $M \cup C$, 
the filled (respectively, blank) vertices are in (respectively, not in) $V(M)$, gray vertices are $2$-anchors, 
and the fraction on the right side of each structure is $\frac{s(K)}{opt(K)}$. 
\label{fig01}}
\end{figure}

Recall that every critical component $K$ has one or two $2$-anchors.
We introduce the following definition for such two $2$-anchors.

\begin{definition}
\label{def09}
A $2$-anchor of $H+C$ is {\em critical} if it appears in a critical component of $H+C$. 
A satellite element of $H+C$ is {\em critical} if its rescue-anchor is critical in $H+C$. 
\end{definition}

\begin{definition}
\label{def10}
Suppose that $v$ is a $0$- or $1$-anchor in $H+C$ and $S$ is a satellite element in $H+C$ 
such that $S$ has a vertex $w$ with $\{v,w\}\in E(G)$. 
Then, {\em moving $S$ to $v$ in $H+C$} is the operation of modifying $C$ by replacing the rescue-edge of $S$ with the edge $\{v,w\}$. 
\end{definition}
By Figure~\ref{fig01}, we have the next fact. 

\begin{fact}
\label{fact02}
For each critical component $K$ of $H+C$ and its critical satellite element $S$, the following statements hold:
\begin{enumerate}
\parskip=0pt
\item
If we delete $S$ from $K$, then $K$ is no longer critical and will not become isolated.

\item
If $v$ is a $0$-anchor of $K$ such that $v$ is adjacent to $S$ in $G$,
then moving $S$ to $v$ in $H+C$ makes $K$ no longer critical.

\item
If $v$ is a $1$-anchor in $K$, then moving $S$ to $v$ makes $K$ remain critical only if one of the following two cases happens:

\begin{enumerate}
\parskip=0pt
\item $K$ has the first structure in Figure~\ref{fig01} and the rescue-anchor of $S$ is the unique $2$-anchor in $K$.
$G$ contains an edge $\{v, x\}$, where $x$ appears in $S$ and $v$ is the unique $1$-anchor in $K$. 

\item $K$ has the last or the second last structure in Figure~\ref{fig01} and the rescue-anchor of $S$ is the leftmost $2$-anchor in $K$.
$G$ contains an edge $\{v, x\}$, where $x$ appears in $S$ and $v$ is the unique $1$-anchor in $K$. 
\end{enumerate}
\end{enumerate}
\end{fact}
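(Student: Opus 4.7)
The plan is to verify the three statements by exhaustive case analysis on the $13$ possible critical structures depicted in Figure~\ref{fig01}. Two uniform observations drive the analysis. First, by Lemma~\ref{lemma03}, every satellite element $S$ contains exactly one edge of $M$, so $|V(S)\cap V(M_C)|=2$; hence deleting $S$ decreases $s(K)$ by exactly $2$, while moving $S$ leaves $s(K)$ unchanged. Second, let $u$ denote the rescue-anchor of $S$; since $S$ is critical, $u$ is a $2$-anchor of $K$ by Definition~\ref{def09}. Removing $S$ from $u$ demotes $u$ to a $1$-anchor, while attaching $S$ to any other anchor $v$ promotes $v$ by one degree.

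For the first statement, Figure~\ref{fig01} shows that every critical component contains at least two satellite elements besides $K_c$, so after deleting $S$ the remaining component still contains $K_c$ connected via an edge of $C$ to some surviving satellite element, and thus remains composite. To show that the resulting $K'$ is no longer critical, I would walk through each of the $13$ structures and exhibit a collection of vertex-disjoint $4^+$-paths whose total length $opt(K')$ satisfies $s(K')/opt(K') < \frac{14}{11}$. For example, in the first structure we have $s(K)=8$ and $opt(K)=6$; after deletion $s(K')=6$, while the $6^+$-path obtained by linking $Q_{v_1}$ and $Q_{v_2}$ through $\{v_1,v_2\}$ gives $opt(K')\ge 6$, yielding ratio $1 < \frac{14}{11}$. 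All remaining structures are handled by analogous path constructions, reusing the arguments already recorded in Lemmas~\ref{lemma12} and~\ref{lemma13}.

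For the second statement, moving $S$ to a $0$-anchor $v$ promotes $v$ to a $1$-anchor while demoting $u$ to a $1$-anchor, so the count of $1$-anchors increases by two and the count of $2$-anchors drops by one. The new supporting structure at $v$ yields an additional vertex-disjoint $4^+$-path: $Q_v$ can be joined to an incident edge of $K_c$ that was previously wasted, producing a $4^+$-path vertex-disjoint from the $5^+$-path at $u$ extracted via $Q_u$ and a remaining $K_c$-edge. Running this construction through every structure in Figure~\ref{fig01} that actually contains a $0$-anchor (only the cases with $s(K)\le 10$ arising from Lemma~\ref{lemma13}), one verifies that $opt(K')$ strictly exceeds $\frac{11}{14} s(K)$, so $K$ ceases to be critical.

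The third statement is the most delicate and constitutes the main obstacle. Here the move turns a $1$-anchor $v$ into a $2$-anchor and demotes $u$ to a $1$-anchor, so the coarse anchor statistics are preserved; the question is whether the new spatial arrangement permits a strictly better $4^+$-path packing. The plan is to check, structure by structure and for every eligible pair $(u,v)$, whether the $P_u$-style arguments of Lemmas~\ref{lemma12} and~\ref{lemma13} can be replayed at $v$ to produce an extra vertex. In almost every configuration an appropriately chosen $K_c$-edge adjacent to $v$ supplies the gain, because after the move the new $2$-anchor has a neighboring free edge that extends $Q_v$ into a $5^+$-path while leaving enough room on $K_c$ to recover a disjoint $4^+$-path through the demoted $u$. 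The only configurations where the move is exactly symmetric---and hence yields no improvement---turn out to be the one described in case~(a) (the first structure with $u$ its unique $2$-anchor and $v$ its unique $1$-anchor) and those in case~(b) (the last two structures with $u$ the leftmost $2$-anchor and $v$ the opposite $1$-anchor). Verifying by direct inspection that these are the only exceptional configurations is the bulk of the work.
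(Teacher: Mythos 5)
Your core bookkeeping observation---that deleting or relocating $S$ demotes its rescue-anchor from a $2$-anchor to a $1$-anchor, while moving $S$ to $v$ promotes $v$ by one degree---is exactly the engine of the paper's proof, and your first-statement argument (at least two satellite elements survive, so $K$ stays composite) is fine. But the paper then finishes statements~1 and~2 uniformly: after the operation the number of $2$-anchors drops by one while $s(K)$ drops by $2$ (deletion) or stays put (move), and the resulting configuration lands in a case already ruled non-critical by Lemma~\ref{lemma10} (zero $2$-anchors) or by the third statement of Lemma~\ref{lemma13} (one $2$-anchor with $s(K)\ge 12$). Your plan instead re-derives non-criticality by exhibiting fresh path packings in each of the thirteen structures, which is workable but duplicates effort the classification lemmas already paid for.

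There is one concrete error: in your treatment of statement~2 you assert that the only critical structures containing a $0$-anchor are those with $s(K)\le 10$ arising from Lemma~\ref{lemma13}. This is false. A critical component with two $2$-anchors and $s(K)=14$ has a $5$-path center with two $2$-anchors and one $1$-anchor, hence two $0$-anchors; the $s(K)=16$ structures likewise have one $0$-anchor (only $s(K)=18$ has none). So your case enumeration for statement~2 silently omits eight of the thirteen structures in Figure~\ref{fig01}. The claim is still true for them---after the move one $2$-anchor remains with $s(K)\in\{14,16\}$, and Lemma~\ref{lemma13}(3) gives non-criticality---but your proposal as written does not cover them. Separately, for statement~3 you correctly identify cases~(a) and~(b) as the exceptions but defer the entire verification to ``direct inspection''; the paper at least supplies the key reductions (for $s(K)\in\{16,18\}$ criticality forces both non-middle internal vertices to be $2$-anchors, a condition any move to a $1$-anchor destroys; for one $2$-anchor the edge-center case is symmetric and the $5$-path case is killed by Lemma~\ref{lemma13}(4)), leaving only the five $s(K)=14$ structures to check by hand. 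You would need to actually execute that inspection for the proposal to stand as a proof.
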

\begin{proof}
Recall that $K$ has one $2$-anchor or two $2$-anchors since $K$ is critical.
If we delete one critical satellite element $S$ of $K$, then the number of $2$-anchors is reduced by one.
If $K$ has one $2$-anchor, then after the removal of $S$, $K$ has no $2$-anchor.
So, by Lemma~\ref{lemma10}, $K$ is not critical.
If $K$ has two $2$-anchors, then $s(K)\in \{14, 16, 18\}$.
Thus, if we delete one critical satellite element of $K$, then $s(K) \in \{12, 14, 16\}$ and $K$ has one $2$-anchor.
By Lemma~\ref{lemma13}, $K$ is not critical now.
Since $K$ has at least one satellite element after the removal of $S$, then $K$ is not isolated.
So, the first statement holds.

Note that $v$ is a $0$-anchor.
So, moving $S$ to $v$ makes $K$ reduce the number of $2$-anchors by one.
By a similar proof of the first statement, the second statement holds.

If $K$ has one $2$-anchor, then by Lemma~\ref{lemma13}, either $K_c$ is an edge, $s(K)=8$ or $K_c$ is a $5$-path, $s(K)=10$.
In the former case, $K$ has a same structure by symmetry after moving $S$ to $v$. 
In the latter case, by Lemma~\ref{lemma13}, there is a pair $(v_i,v_j)\in\{(v_2,v_1), (v_4,v_5)\}$ 
such that $v_i$ is a $2$-anchor and $v_j$ is a $1$-anchor in $K$.
Without loss of generality, we assume $v_2$ is a $2$-anchor and $v_1$ is a $1$-anchor (the second structure in Figure~\ref{fig01}).
Then clearly, moving $S$ to $v_1$ makes $K$ no longer critical.

If $K$ has two $2$-anchors, $s(K)\in \{14, 16, 18\}$ and $K_c$ is a $5$-path by Lemma~\ref{lemma12}.
Furthermore, if $s(K)\in \{16, 18\}$, $v_2, v_4$ are both $2$-anchors.
Hence moving $S$ to $v$ makes $K$ no longer critical since at least one of $v_2, v_4$ is not a $2$-anchor.
The remaining case is $s(K)=14$. 
We can check the last five structures in Figure~\ref{fig01} and we find the third statement holds.
\end{proof}
 
\begin{definition}
\label{def11}
Let $K$ be a composite component of $H+C$. 
If $K$ has a $1$-anchor $v$ such that $G$ has an edge between $v$ and some critical satellite-element $S$ of $H+C$ in $G$ 
and moving $S$ to $v$ in $H+C$ makes $K$ critical in $H+C$, 
then we call $K$ a {\em responsible component} of $H+C$ and call $v$ a {\em responsible $1$-anchor} of $H+C$. 
\end{definition}
By the third statement in Fact~\ref{fact02},
a component of $H+C$ can be both critical and responsible only if it has the first or one of the last two structures in Figure~\ref{fig01}.

\begin{lemma}
\label{lemma14}
Suppose that a component $K$ of $H+C$ is both critical and responsible. 
If $K$ has the first structure in Figure~\ref{fig01}, then $s(K)=8$ and we find a feasible solution with at least $7$ vertices in $O(1)$ time; 
otherwise, $s(K)=14$ and we find a feasible solution with at least $12$ vertices in $O(1)$ time.
\end{lemma}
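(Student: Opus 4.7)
The plan is to explicitly construct the required feasible solutions by case-analyzing on the structure of $K$ allowed by Fact~\ref{fact02}(iii). The claimed values $s(K)\in\{8,14\}$ can be read directly from Figure~\ref{fig01}, and every construction runs in $O(1)$ time because $|V(K)|=O(1)$ (cf.\ Lemma~\ref{lemma08}).

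For the first structure ($s(K)=8$), label $K_c=\{v_1,v_2\}$ with $v_1$ the $2$-anchor supporting satellites $\{a_1,b_1\},\{a_2,b_2\}$ (via $\{v_1,a_i\}\in C$ and $\{a_i,b_i\}\in M$) and $v_2$ the $1$-anchor supporting $\{a_3,b_3\}$. By Fact~\ref{fact02}(iii-a) there is a responsibility edge $\{v_2,x\}\in E(G)$ with $x\in\{a_1,b_1\}$. When $x=a_1$, the $7$-path $b_3$-$a_3$-$v_2$-$a_1$-$v_1$-$a_2$-$b_2$ is a feasible solution with $7$ vertices. When $x=b_1$, the $8$-path $b_2$-$a_2$-$v_1$-$a_1$-$b_1$-$v_2$-$a_3$-$b_3$ can be split, as allowed by Fact~\ref{fact01}, into the two $4$-paths $b_2$-$a_2$-$v_1$-$a_1$ and $b_1$-$v_2$-$a_3$-$b_3$, totalling $8$ vertices.

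For the last two structures ($s(K)=14$), $K_c$ is a $5$-path with two $2$-anchors; the critical satellite $S=\{a_1,b_1\}$ is attached to the leftmost $2$-anchor, and the responsibility edge joins the unique $1$-anchor $v$ (whose own satellite is denoted $\{p,q\}$) to some $x\in\{a_1,b_1\}$. The plan is to build, in each subcase, (i) a long path that travels from $q$ through $p$, then through $v$, then across the responsibility edge into $S$, then across the leftmost $2$-anchor, and finally through its other satellite $\{a_2,b_2\}$ --- this has $7$ vertices when $x=a_1$ and $8$ vertices (hence two $4$-paths after a split at the $M$-edge inside $S$) when $x=b_1$; together with (ii) a $5$-path of the form (satellite)-(satellite)-(other $2$-anchor)-(satellite)-(satellite) through the two satellites of the other $2$-anchor. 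This yields at least $7+5=12$ or $4+4+5=13$ vertices, as required.

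The hard part will be the bookkeeping across the four subcases in the $s(K)=14$ case (the two structures and the two choices of $x$): in each subcase one must verify that every edge of the constructed paths lies in $M\cup C\cup\{\{v,x\}\}$, that the two paths are vertex-disjoint, and that they cover at least $12$ vertices. The strategy itself --- use the responsibility edge to cross-link the $1$-anchor's own satellite into the leftmost $2$-anchor's structure, and reuse the other $2$-anchor's two satellites as an independent $5$-path --- is identical across all subcases, so no new idea is required beyond this.
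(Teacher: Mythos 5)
Your construction is correct and essentially identical to the paper's proof: in both cases you route a long path from the $1$-anchor's own satellite, across the responsibility edge into the critical satellite $S$, through the (leftmost) $2$-anchor and its other satellite, and keep $P_{v}$ of the remaining $2$-anchor as a disjoint $5^+$-path, yielding $7$ resp.\ $12$ covered vertices. The only cosmetic difference is that you label satellites as single $M$-edges (the Figure~\ref{fig01} drawing convention of Remark~\ref{remark02}) where the paper uses the $Q_v$/$P_v$ abstraction to absorb star and triangle satellites, but the argument carries over verbatim since a longer detour inside a satellite only lengthens the paths.
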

\begin{proof}
First, consider the case where $K$ has the first structure in Figure~\ref{fig01}. 
Without loss of generality, we assume $v_1$ and $v_2$ are the unique $2$- and $1$-anchor, respectively.
Since $K$ is responsible, $G$ contains an edge $\{v_2, x\}$, where $x$ appears in a satellite element $S$ whose rescue-anchor is $v_1$. 
If $S$ and $Q_{v_1}$ are not vertex-disjoint, then we find the satellite element $S'$ other than $S$ whose rescue-anchor is $v_1$, 
and re-define $Q_{v_1}$ to be the longest path among those paths in $K$ which starts with $v_1$ and the rescue-edge of $S'$. 
Then,  $G[V(K)]$ contains 
a $7^+$-path in which $Q_{v_2}$, the edge $\{v,x\}$, a path from $x$ to $v_1$ in $K$, and $Q_{v_1}$ appear in this order. 

Next, consider the case where $K$ has one of the last two structures in Figure~\ref{fig01}. 
We assume the last structure in Figure~\ref{fig01}; 
the following discussion also applies to the other case similarly. 
Since $K$ is responsible, $G$ contains an edge $\{v_1, x\}$, where $x$ appears in 
a satellite element $S$ whose rescue-anchor is $v_2$. 
If $S$ and $Q_{v_2}$ are not vertex-disjoint, then we find the satellite element $S'$ other than $S$ whose rescue-anchor is $v_2$, 
and re-define $Q_{v_2}$ to be the longest path among those paths in $K$ which starts with $v_2$ and the rescue-edge of $S'$.
Besides $P_{v_3}$, $G$ contains a $7^+$-path obtained by connecting $Q_{v_1}$, the edge $\{v_1,x\}$, a path from $x$ to $v_2$ in $K$, and $Q_{v_2}$. 
So, we can find a solution with at least $12$ vertices, which completes the proof. 
\end{proof}

By the above lemma, we know for each critical and responsible component $K$,
we can find a feasible solution for $K$ in constant time, which is still denoted as $OPT(K)$ for ease of presentation,
with $\frac {s(K)}{opt(K)}< \frac {14}{11}$.
Now, we can regard each critical and responsible component $K$ as a non-critical component.
So, any critical component cannot be responsible or vice versa.
Hereafter, a critical component always refers to a critical but not responsible component
and a responsible component always refers to a responsible but not critical component.

By Definition~\ref{def11}, the structure for a responsible component of $H+C$ can only be obtained 
by deleting a critical satellite-element from one of the structures in Figure~\ref{fig01}. 
So, by Figure~\ref{fig01}, we can list all possible structures for responsible components of $H+C$,
which are shown in Figure~\ref{fig02}.

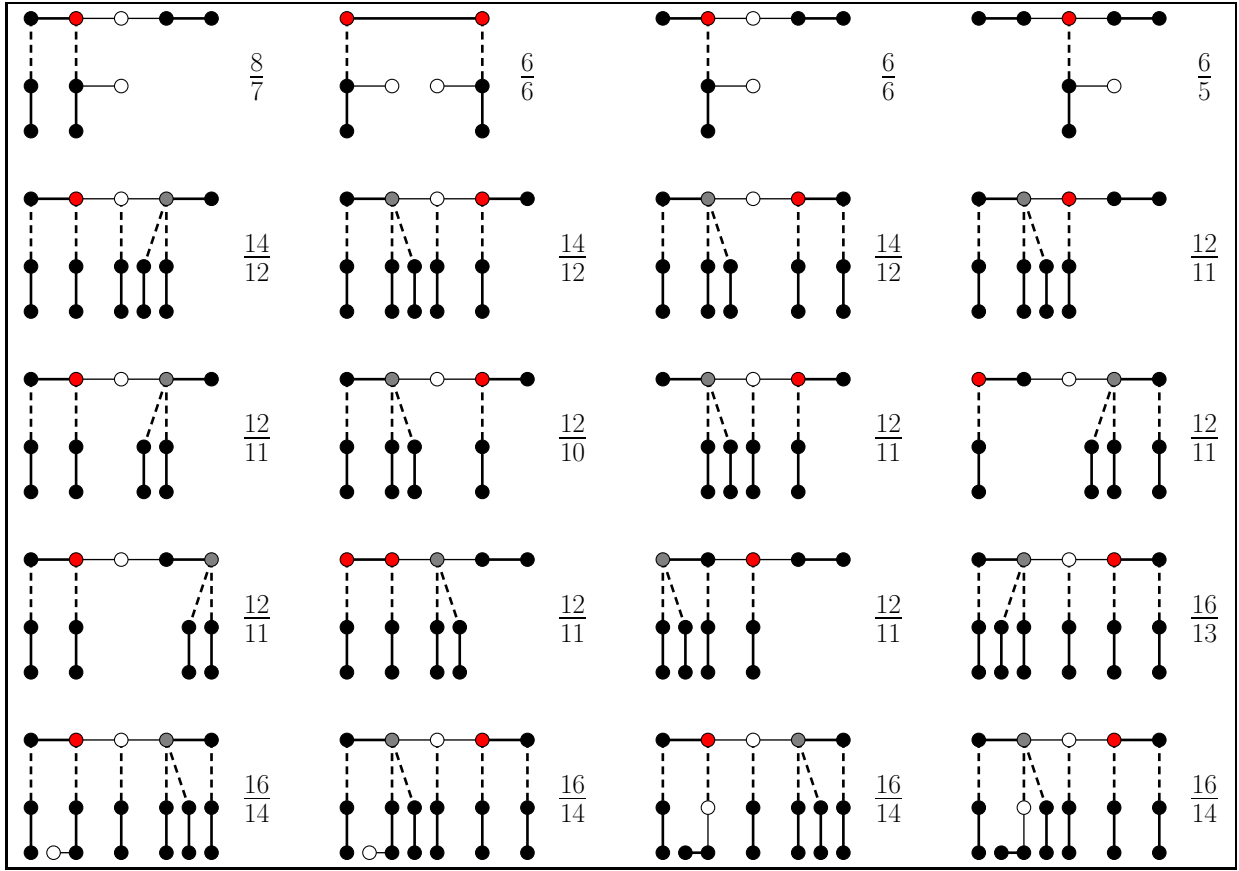
\begin{figure}[thb]
\begin{center}
\framebox{
\begin{minipage}{6.3in}
\begin{tikzpicture}[scale=0.6,transform shape]

\draw [thick, line width = 1pt] (-8, 8) -- (-7, 8);
\draw [thick, line width = 1pt] (-5, 8) -- (-4, 8);
\draw [thin, line width = 0.5pt] (-7, 8) -- (-6, 8);
\draw [thin, line width = 0.5pt] (-6,8) -- (-5,8);
\draw [densely dashed, line width = 1pt] (-8, 8) -- (-8, 6.5);
\draw [thick, line width = 1pt] (-8, 6.5) -- (-8, 5.5);
\draw [densely dashed, line width = 1pt] (-7, 8) -- (-7, 6.5);
\draw [thick, line width = 1pt] (-7, 6.5) -- (-7, 5.5);
\draw [thin, line width = 0.5pt] (-7, 6.5) -- (-6, 6.5);
\filldraw (-8, 8) circle(.15);
\filldraw[fill = red] (-7, 8) circle(.15);
\filldraw[fill = white] (-6, 8) circle(.15);
\filldraw (-5, 8) circle(.15);
\filldraw (-4, 8) circle(.15);
\filldraw (-8,6.5) circle(.15);
\filldraw (-8,5.5) circle(.15);
\filldraw (-7,6.5) circle(.15);
\filldraw (-7,5.5) circle(.15);
\filldraw[fill = white] (-6, 6.5) circle(.15);
\node[font=\fontsize{27}{6}\selectfont] at (-3,6.7) {$\frac 87$};

\draw [thick, line width = 1pt] (-1, 8) -- (2, 8);
\draw [densely dashed, line width = 1pt] (-1, 8) -- (-1, 6.5);
\draw [thick, line width = 1pt] (-1, 6.5) -- (-1, 5.5);
\draw [thin, line width = 0.5pt] (-1, 6.5) -- (0, 6.5);
\draw [densely dashed, line width = 1pt] (2, 8) -- (2, 6.5);
\draw [thick, line width = 1pt] (2, 6.5) -- (2, 5.5);
\draw [thin, line width = 0.5pt] (2, 6.5) -- (1, 6.5);
\filldraw[fill = red] (-1, 8) circle(.15);
\filldraw[fill = red] (2, 8) circle(.15);
\filldraw (-1, 6.5) circle(.15);
\filldraw (-1, 5.5) circle(.15);
\filldraw[fill = white] (0, 6.5) circle(.15);
\filldraw (2, 6.5) circle(.15);
\filldraw (2, 5.5) circle(.15);
\filldraw[fill = white] (1, 6.5) circle(.15);
\node[font=\fontsize{27}{6}\selectfont] at (3,6.7) {$\frac 66$};

\draw [thick, line width = 1pt] (6, 8) -- (7, 8);
\draw [thick, line width = 1pt] (9,8) -- (10,8);
\draw [thin, line width = 0.5pt] (7,8) -- (8,8);
\draw [thin, line width = 0.5pt] (8,8) -- (9,8);
\draw [densely dashed, line width = 1pt] (7, 8) -- (7, 6.5);
\draw [thick, line width = 1pt] (7, 6.5) -- (7, 5.5);
\draw [thin, line width = 0.5pt] (7, 6.5) -- (8, 6.5);
\filldraw (6, 8) circle(.15);
\filldraw[fill = red] (7, 8) circle(.15);
\filldraw[fill = white] (8, 8) circle(.15);
\filldraw (9, 8) circle(.15);
\filldraw (10, 8) circle(.15);
\filldraw (7,6.5) circle(.15);
\filldraw (7,5.5) circle(.15);
\filldraw[fill = white] (8, 6.5) circle(.15);
\node[font=\fontsize{27}{6}\selectfont] at (11,6.7) {$\frac 66$};

\draw [thick, line width = 1pt] (13, 8) -- (14, 8);
\draw [thick, line width = 1pt] (16,8) -- (17,8);
\draw [thin, line width = 0.5pt] (14, 8) -- (15,8);
\draw [thin, line width = 0.5pt] (15, 8) -- (16,8);
\draw [densely dashed, line width = 1pt] (15, 8) -- (15, 6.5);
\draw [thick, line width = 1pt] (15, 6.5) -- (15, 5.5);
\draw [thin, line width = 0.5pt] (15, 6.5) -- (16, 6.5);
\filldraw (13, 8) circle(.15);
\filldraw (14, 8) circle(.15);
\filldraw[fill = red] (15, 8) circle(.15);
\filldraw (16, 8) circle(.15);
\filldraw (17, 8) circle(.15);
\filldraw (15,6.5) circle(.15);
\filldraw (15,5.5) circle(.15);
\filldraw[fill = white] (16, 6.5) circle(.15);
\node[font=\fontsize{27}{6}\selectfont] at (18,6.7) {$\frac 65$};


\draw [thick, line width = 1pt] (-8, 4) -- (-7, 4);
\draw [thick, line width = 1pt] (-5, 4) -- (-4, 4);
\draw [thin, line width = 0.5pt] (-7, 4) -- (-6, 4);
\draw [thin, line width = 0.5pt] (-6, 4) -- (-5, 4);
\draw [densely dashed, line width = 1pt] (-8, 4) -- (-8,2.5);
\draw [thick, line width = 1pt] (-8, 2.5) -- (-8, 1.5);
\draw [densely dashed, line width = 1pt] (-7, 4) -- (-7, 2.5);
\draw [thick, line width = 1pt] (-7, 2.5) -- (-7, 1.5);
\draw [densely dashed, line width = 1pt] (-6, 4) -- (-6,2.5);
\draw [thick, line width = 1pt] (-6, 2.5) -- (-6, 1.5);
\draw [densely dashed, line width = 1pt] (-5, 4) -- (-5.5,2.5);
\draw [thick, line width = 1pt] (-5.5, 2.5) -- (-5.5, 1.5);
\draw [densely dashed, line width = 1pt] (-5, 4) -- (-5,2.5);
\draw [thick, line width = 1pt] (-5, 2.5) -- (-5, 1.5);
\filldraw (-8, 4) circle(.15);
\filldraw[fill = red] (-7, 4) circle(.15);
\filldraw[fill = white] (-6, 4) circle(.15);
\filldraw[fill = gray] (-5, 4) circle(.15);
\filldraw (-4, 4) circle(.15);
\filldraw (-8,2.5) circle(.15);
\filldraw (-8,1.5) circle(.15);
\filldraw (-7,2.5) circle(.15);
\filldraw (-7,1.5) circle(.15);
\filldraw (-5.5,2.5) circle(.15);
\filldraw (-5.5,1.5) circle(.15);
\filldraw (-5,2.5) circle(.15);
\filldraw (-5,1.5) circle(.15);
\filldraw (-6,2.5) circle(.15);
\filldraw (-6,1.5) circle(.15);
\node[font=\fontsize{25}{6}\selectfont] at (-3,2.7) {$\frac {14}{12}$};

\draw [thick, line width = 1pt] (-1, 4) -- (0, 4);
\draw [thick, line width = 1pt] (2, 4) -- (3, 4);
\draw [thin, line width = 0.5pt] (0, 4) -- (1, 4);
\draw [thin, line width = 0.5pt] (1, 4) -- (2, 4);
\draw [densely dashed, line width = 1pt] (-1, 4) -- (-1,2.5);
\draw [thick, line width = 1pt] (-1, 2.5) -- (-1, 1.5);
\draw [densely dashed, line width = 1pt] (0, 4) -- (0, 2.5);
\draw [thick, line width = 1pt] (0, 2.5) -- (0, 1.5);
\draw [densely dashed, line width = 1pt] (1, 4) -- (1,2.5);
\draw [thick, line width = 1pt] (1, 2.5) -- (1, 1.5);
\draw [densely dashed, line width = 1pt] (0, 4) -- (0.5,2.5);
\draw [thick, line width = 1pt] (0.5, 2.5) -- (0.5, 1.5);
\draw [densely dashed, line width = 1pt] (2, 4) -- (2,2.5);
\draw [thick, line width = 1pt] (2, 2.5) -- (2, 1.5);
\filldraw (-1, 4) circle(.15);
\filldraw[fill = gray] (0, 4) circle(.15);
\filldraw[fill = white] (1, 4) circle(.15);
\filldraw[fill = red] (2, 4) circle(.15);
\filldraw (3, 4) circle(.15);
\filldraw (-1,2.5) circle(.15);
\filldraw (-1,1.5) circle(.15);
\filldraw (0,2.5) circle(.15);
\filldraw (0,1.5) circle(.15);
\filldraw (2,2.5) circle(.15);
\filldraw (2,1.5) circle(.15);
\filldraw (0.5,2.5) circle(.15);
\filldraw (0.5,1.5) circle(.15);
\filldraw (1,2.5) circle(.15);
\filldraw (1,1.5) circle(.15);
\node[font=\fontsize{25}{6}\selectfont] at (4,2.7) {$\frac {14}{12}$};

\draw [thick, line width = 1pt] (6, 4) -- (7,4);
\draw [thick, line width = 1pt] (9, 4) -- (10, 4);
\draw [thin, line width = 0.5pt] (7, 4) -- (8, 4);
\draw [thin, line width = 0.5pt] (8, 4) -- (9, 4);
\draw [densely dashed, line width = 1pt] (6, 4) -- (6,2.5);
\draw [thick, line width = 1pt] (6, 2.5) -- (6, 1.5);
\draw [densely dashed, line width = 1pt] (7, 4) -- (7, 2.5);
\draw [thick, line width = 1pt] (7, 2.5) -- (7, 1.5);
\draw [densely dashed, line width = 1pt] (7, 4) -- (7.5,2.5);
\draw [thick, line width = 1pt] (7.5, 2.5) -- (7.5, 1.5);
\draw [densely dashed, line width = 1pt] (9, 4) -- (9,2.5);
\draw [thick, line width = 1pt] (9, 2.5) -- (9, 1.5);
\draw [densely dashed, line width = 1pt] (10, 4) -- (10,2.5);
\draw [thick, line width = 1pt] (10, 2.5) -- (10, 1.5);
\filldraw (6, 4) circle(.15);
\filldraw[fill = gray] (7, 4) circle(.15);
\filldraw[fill = white] (8, 4) circle(.15);
\filldraw[fill = red] (9, 4) circle(.15);
\filldraw (10, 4) circle(.15);
\filldraw (6,2.5) circle(.15);
\filldraw (6,1.5) circle(.15);
\filldraw (7,2.5) circle(.15);
\filldraw (7,1.5) circle(.15);
\filldraw (7.5,2.5) circle(.15);
\filldraw (7.5,1.5) circle(.15);
\filldraw (9,2.5) circle(.15);
\filldraw (9,1.5) circle(.15);
\filldraw (10,2.5) circle(.15);
\filldraw (10,1.5) circle(.15);
\node[font=\fontsize{25}{6}\selectfont] at (11,2.7) {$\frac {14}{12}$};

\draw [thick, line width = 1pt] (13, 4) -- (14,4);
\draw [thick, line width = 1pt] (16, 4) -- (17, 4);
\draw [thin, line width = 0.5pt] (14, 4) -- (15, 4);
\draw [thin, line width = 0.5pt] (15, 4) -- (16, 4);
\draw [densely dashed, line width = 1pt] (13, 4) -- (13,2.5);
\draw [thick, line width = 1pt] (13, 2.5) -- (13, 1.5);
\draw [densely dashed, line width = 1pt] (14, 4) -- (14, 2.5);
\draw [thick, line width = 1pt] (14, 2.5) -- (14, 1.5);
\draw [densely dashed, line width = 1pt] (14, 4) -- (14.5,2.5);
\draw [thick, line width = 1pt] (14.5, 2.5) -- (14.5, 1.5);
\draw [densely dashed, line width = 1pt] (15, 4) -- (15,2.5);
\draw [thick, line width = 1pt] (15, 2.5) -- (15, 1.5);
\filldraw (13, 4) circle(.15);
\filldraw[fill = gray] (14, 4) circle(.15);
\filldraw[fill = red] (15, 4) circle(.15);
\filldraw (16, 4) circle(.15);
\filldraw (17, 4) circle(.15);
\filldraw (13,2.5) circle(.15);
\filldraw (13,1.5) circle(.15);
\filldraw (14,2.5) circle(.15);
\filldraw (14,1.5) circle(.15);
\filldraw (14.5,2.5) circle(.15);
\filldraw (14.5,1.5) circle(.15);
\filldraw (15,2.5) circle(.15);
\filldraw (15,1.5) circle(.15);
\node[font=\fontsize{25}{6}\selectfont] at (18,2.7) {$\frac {12}{11}$};


\draw [thick, line width = 1pt] (-8, 0) -- (-7, 0);
\draw [thick, line width = 1pt] (-5, 0) -- (-4, 0);
\draw [thin, line width = 0.5pt] (-7, 0) -- (-6, 0);
\draw [thin, line width = 0.5pt] (-6, 0) -- (-5, 0);
\draw [densely dashed, line width = 1pt] (-8, 0) -- (-8,-1.5);
\draw [thick, line width = 1pt] (-8, -1.5) -- (-8, -2.5);
\draw [densely dashed, line width = 1pt] (-7, 0) -- (-7, -1.5);
\draw [thick, line width = 1pt] (-7, -1.5) -- (-7, -2.5);
\draw [densely dashed, line width = 1pt] (-5, 0) -- (-5.5,-1.5);
\draw [thick, line width = 1pt] (-5.5, -1.5) -- (-5.5, -2.5);
\draw [densely dashed, line width = 1pt] (-5, 0) -- (-5,-1.5);
\draw [thick, line width = 1pt] (-5, -1.5) -- (-5, -2.5);
\filldraw (-8, 0) circle(.15);
\filldraw[fill = red] (-7, 0) circle(.15);
\filldraw[fill = white] (-6, 0) circle(.15);
\filldraw[fill = gray] (-5, 0) circle(.15);
\filldraw (-4, 0) circle(.15);
\filldraw (-8,-1.5) circle(.15);
\filldraw (-8,-2.5) circle(.15);
\filldraw (-7,-1.5) circle(.15);
\filldraw (-7,-2.5) circle(.15);
\filldraw (-5.5,-1.5) circle(.15);
\filldraw (-5.5,-2.5) circle(.15);
\filldraw (-5,-1.5) circle(.15);
\filldraw (-5,-2.5) circle(.15);
\node[font=\fontsize{27}{6}\selectfont] at (-3,-1.3) {$\frac {12}{11}$};

\draw [thick, line width = 1pt] (-1, 0) -- (0, 0);
\draw [thick, line width = 1pt] (2, 0) -- (3, 0);
\draw [thin, line width = 0.5pt] (0, 0) -- (1, 0);
\draw [thin, line width = 0.5pt] (1, 0) -- (2, 0);
\draw [densely dashed, line width = 1pt] (-1, 0) -- (-1,-1.5);
\draw [thick, line width = 1pt] (-1, -1.5) -- (-1, -2.5);
\draw [densely dashed, line width = 1pt] (0, 0) -- (0, -1.5);
\draw [thick, line width = 1pt] (0, -1.5) -- (0, -2.5);
\draw [densely dashed, line width = 1pt] (0, 0) -- (0.5,-1.5);
\draw [thick, line width = 1pt] (0.5, -1.5) -- (0.5, -2.5);
\draw [densely dashed, line width = 1pt] (2, 0) -- (2,-1.5);
\draw [thick, line width = 1pt] (2, -1.5) -- (2, -2.5);
\filldraw (-1, 0) circle(.15);
\filldraw[fill = gray] (0, 0) circle(.15);
\filldraw[fill = white] (1, 0) circle(.15);
\filldraw[fill = red] (2, 0) circle(.15);
\filldraw (3, 0) circle(.15);
\filldraw (-1,-1.5) circle(.15);
\filldraw (-1,-2.5) circle(.15);
\filldraw (0,-1.5) circle(.15);
\filldraw (0,-2.5) circle(.15);
\filldraw (0.5,-1.5) circle(.15);
\filldraw (0.5,-2.5) circle(.15);
\filldraw (2,-1.5) circle(.15);
\filldraw (2,-2.5) circle(.15);
\node[font=\fontsize{27}{6}\selectfont] at (4,-1.3) {$\frac {12}{10}$};

\draw [thick, line width = 1pt] (6, 0) -- (7, 0);
\draw [thick, line width = 1pt] (9, 0) -- (10, 0);
\draw [thin, line width = 0.5pt] (7, 0) -- (8, 0);
\draw [thin, line width = 0.5pt] (8, 0) -- (9, 0);
\draw [densely dashed, line width = 1pt] (8, 0) -- (8,-1.5);
\draw [thick, line width = 1pt] (8, -1.5) -- (8, -2.5);
\draw [densely dashed, line width = 1pt] (7, 0) -- (7, -1.5);
\draw [thick, line width = 1pt] (7, -1.5) -- (7, -2.5);
\draw [densely dashed, line width = 1pt] (7, 0) -- (7.5,-1.5);
\draw [thick, line width = 1pt] (7.5, -1.5) -- (7.5, -2.5);
\draw [densely dashed, line width = 1pt] (9, 0) -- (9,-1.5);
\draw [thick, line width = 1pt] (9, -1.5) -- (9, -2.5);
\filldraw (6, 0) circle(.15);
\filldraw[fill = gray] (7, 0) circle(.15);
\filldraw[fill = white] (8, 0) circle(.15);
\filldraw[fill = red] (9, 0) circle(.15);
\filldraw (10, 0) circle(.15);
\filldraw (8,-1.5) circle(.15);
\filldraw (8,-2.5) circle(.15);
\filldraw (7,-1.5) circle(.15);
\filldraw (7,-2.5) circle(.15);
\filldraw (7.5,-1.5) circle(.15);
\filldraw (7.5,-2.5) circle(.15);
\filldraw (9,-1.5) circle(.15);
\filldraw (9,-2.5) circle(.15);
\node[font=\fontsize{27}{6}\selectfont] at (11,-1.3) {$\frac {12}{11}$};

\draw [thick, line width = 1pt] (13, 0) -- (14, 0);
\draw [thick, line width = 1pt] (16, 0) -- (17, 0);
\draw [thin, line width = 0.5pt] (14, 0) -- (15, 0);
\draw [thin, line width = 0.5pt] (15, 0) -- (16, 0);
\draw [densely dashed, line width = 1pt] (13, 0) -- (13,-1.5);
\draw [thick, line width = 1pt] (13, -1.5) -- (13, -2.5);
\draw [densely dashed, line width = 1pt] (16, 0) -- (16, -1.5);
\draw [thick, line width = 1pt] (16, -1.5) -- (16, -2.5);
\draw [densely dashed, line width = 1pt] (16, 0) -- (15.5,-1.5);
\draw [thick, line width = 1pt] (15.5, -1.5) -- (15.5, -2.5);
\draw [densely dashed, line width = 1pt] (17, 0) -- (17,-1.5);
\draw [thick, line width = 1pt] (17, -1.5) -- (17, -2.5);
\filldraw[fill = red] (13, 0) circle(.15);
\filldraw (14, 0) circle(.15);
\filldraw[fill = white] (15, 0) circle(.15);
\filldraw[fill = gray] (16, 0) circle(.15);
\filldraw (17, 0) circle(.15);
\filldraw (13,-1.5) circle(.15);
\filldraw (13,-2.5) circle(.15);
\filldraw (16,-1.5) circle(.15);
\filldraw (16,-2.5) circle(.15);
\filldraw (15.5,-1.5) circle(.15);
\filldraw (15.5,-2.5) circle(.15);
\filldraw (17,-1.5) circle(.15);
\filldraw (17,-2.5) circle(.15);
\node[font=\fontsize{27}{6}\selectfont] at (18,-1.3) {$\frac {12}{11}$};


\draw [thick, line width = 1pt] (-8, -4) -- (-7, -4);
\draw [thick, line width = 1pt] (-5, -4) -- (-4, -4);
\draw [thin, line width = 0.5pt] (-7, -4) -- (-6, -4);
\draw [thin, line width = 0.5pt] (-6, -4) -- (-5, -4);
\draw [densely dashed, line width = 1pt] (-8, -4) -- (-8,-5.5);
\draw [thick, line width = 1pt] (-8, -5.5) -- (-8, -6.5);
\draw [densely dashed, line width = 1pt] (-7, -4) -- (-7, -5.5);
\draw [thick, line width = 1pt] (-7, -5.5) -- (-7, -6.5);
\draw [densely dashed, line width = 1pt] (-4, -4) -- (-4.5,-5.5);
\draw [thick, line width = 1pt] (-4.5, -5.5) -- (-4.5, -6.5);
\draw [densely dashed, line width = 1pt] (-4, -4) -- (-4,-5.5);
\draw [thick, line width = 1pt] (-4, -5.5) -- (-4, -6.5);
\filldraw (-8, -4) circle(.15);
\filldraw[fill = red] (-7, -4) circle(.15);
\filldraw[fill = white] (-6, -4) circle(.15);
\filldraw (-5, -4) circle(.15);
\filldraw[fill = gray] (-4, -4) circle(.15);
\filldraw (-8,-5.5) circle(.15);
\filldraw (-8,-6.5) circle(.15);
\filldraw (-7,-5.5) circle(.15);
\filldraw (-7,-6.5) circle(.15);
\filldraw (-4.5,-5.5) circle(.15);
\filldraw (-4.5,-6.5) circle(.15);
\filldraw (-4,-5.5) circle(.15);
\filldraw (-4,-6.5) circle(.15);
\node[font=\fontsize{27}{6}\selectfont] at (-3,-5.3) {$\frac {12}{11}$};

\draw [thick, line width = 1pt] (-1, -4) -- (0, -4);
\draw [thick, line width = 1pt] (2, -4) -- (3, -4);
\draw [thin, line width = 0.5pt] (0, -4) -- (1, -4);
\draw [thin, line width = 0.5pt] (1, -4) -- (2, -4);
\draw [densely dashed, line width = 1pt] (-1, -4) -- (-1,-5.5);
\draw [thick, line width = 1pt] (-1, -5.5) -- (-1, -6.5);
\draw [densely dashed, line width = 1pt] (0, -4) -- (0, -5.5);
\draw [thick, line width = 1pt] (0, -5.5) -- (0, -6.5);
\draw [densely dashed, line width = 1pt] (1, -4) -- (1.5,-5.5);
\draw [thick, line width = 1pt] (1.5, -5.5) -- (1.5, -6.5);
\draw [densely dashed, line width = 1pt] (1, -4) -- (1,-5.5);
\draw [thick, line width = 1pt] (1, -5.5) -- (1, -6.5);
\filldraw[fill = red] (-1, -4) circle(.15);
\filldraw[fill = red] (0, -4) circle(.15);
\filldraw[fill = gray] (1, -4) circle(.15);
\filldraw (2, -4) circle(.15);
\filldraw (3, -4) circle(.15);
\filldraw (-1,-5.5) circle(.15);
\filldraw (-1,-6.5) circle(.15);
\filldraw (0,-5.5) circle(.15);
\filldraw (0,-6.5) circle(.15);
\filldraw (1.5,-5.5) circle(.15);
\filldraw (1.5,-6.5) circle(.15);
\filldraw (1,-5.5) circle(.15);
\filldraw (1,-6.5) circle(.15);
\node[font=\fontsize{27}{6}\selectfont] at (4,-5.3) {$\frac {12}{11}$};

\draw [thick, line width = 1pt] (6, -4) -- (7, -4);
\draw [thick, line width = 1pt] (9, -4) -- (10, -4);
\draw [thin, line width = 0.5pt] (7, -4) -- (8, -4);
\draw [thin, line width = 0.5pt] (8, -4) -- (9, -4);
\draw [densely dashed, line width = 1pt] (6, -4) -- (6,-5.5);
\draw [thick, line width = 1pt] (6, -5.5) -- (6, -6.5);
\draw [densely dashed, line width = 1pt] (6, -4) -- (6.5, -5.5);
\draw [thick, line width = 1pt] (6.5, -5.5) -- (6.5, -6.5);
\draw [densely dashed, line width = 1pt] (7, -4) -- (7,-5.5);
\draw [thick, line width = 1pt] (7, -5.5) -- (7, -6.5);
\draw [densely dashed, line width = 1pt] (8, -4) -- (8,-5.5);
\draw [thick, line width = 1pt] (8, -5.5) -- (8, -6.5);
\filldraw[fill = gray] (6, -4) circle(.15);
\filldraw (7, -4) circle(.15);
\filldraw[fill = red] (8, -4) circle(.15);
\filldraw (9, -4) circle(.15);
\filldraw (10, -4) circle(.15);
\filldraw (6,-5.5) circle(.15);
\filldraw (6,-6.5) circle(.15);
\filldraw (6.5,-5.5) circle(.15);
\filldraw (6.5,-6.5) circle(.15);
\filldraw (7,-5.5) circle(.15);
\filldraw (7,-6.5) circle(.15);
\filldraw (8,-5.5) circle(.15);
\filldraw (8,-6.5) circle(.15);
\node[font=\fontsize{27}{6}\selectfont] at (11,-5.3) {$\frac {12}{11}$};

\draw [thick, line width = 1pt] (13, -4) -- (14, -4);
\draw [thick, line width = 1pt] (16, -4) -- (17, -4);
\draw [thin, line width = 0.5pt] (14, -4) -- (15, -4);
\draw [thin, line width = 0.5pt] (15, -4) -- (16, -4);
\draw [densely dashed, line width = 1pt] (13, -4) -- (13,-5.5);
\draw [thick, line width = 1pt] (13, -5.5) -- (13, -6.5);
\draw [densely dashed, line width = 1pt] (14, -4) -- (13.5, -5.5);
\draw [thick, line width = 1pt] (13.5, -5.5) -- (13.5, -6.5);
\draw [densely dashed, line width = 1pt] (14, -4) -- (14,-5.5);
\draw [thick, line width = 1pt] (14, -5.5) -- (14, -6.5);
\draw [densely dashed, line width = 1pt] (15, -4) -- (15,-5.5);
\draw [thick, line width = 1pt] (15, -5.5) -- (15, -6.5);
\draw [densely dashed, line width = 1pt] (16, -4) -- (16,-5.5);
\draw [thick, line width = 1pt] (16, -5.5) -- (16, -6.5);
\draw [densely dashed, line width = 1pt] (17, -4) -- (17,-5.5);
\draw [thick, line width = 1pt] (17, -5.5) -- (17, -6.5);
\filldraw (13, -4) circle(.15);
\filldraw[fill = gray] (14, -4) circle(.15);
\filldraw[fill = white] (15, -4) circle(.15);
\filldraw[fill = red] (16, -4) circle(.15);
\filldraw (17, -4) circle(.15);
\filldraw (13,-5.5) circle(.15);
\filldraw (13,-6.5) circle(.15);
\filldraw (13.5,-5.5) circle(.15);
\filldraw (13.5,-6.5) circle(.15);
\filldraw (14,-5.5) circle(.15);
\filldraw (14,-6.5) circle(.15);
\filldraw (15,-5.5) circle(.15);
\filldraw (15,-6.5) circle(.15);
\filldraw (16,-5.5) circle(.15);
\filldraw (16,-6.5) circle(.15);
\filldraw (17,-5.5) circle(.15);
\filldraw (17,-6.5) circle(.15);
\node[font=\fontsize{27}{6}\selectfont] at (18,-5.3) {$\frac {16}{13}$};


\draw [thick, line width = 1pt] (-8, -8) -- (-7, -8);
\draw [thick, line width = 1pt] (-5, -8) -- (-4, -8);
\draw [thin, line width = 0.5pt] (-7, -8) -- (-6, -8);
\draw [thin, line width = 0.5pt] (-6, -8) -- (-5, -8);
\draw [densely dashed, line width = 1pt] (-8, -8) -- (-8,-9.5);
\draw [thick, line width = 1pt] (-8, -9.5) -- (-8, -10.5);
\draw [densely dashed, line width = 1pt] (-5, -8) -- (-4.5, -9.5);
\draw [thick, line width = 1pt] (-4.5, -9.5) -- (-4.5, -10.5);
\draw [densely dashed, line width = 1pt] (-7, -8) -- (-7,-9.5);
\draw [thick, line width = 1pt] (-7, -9.5) -- (-7, -10.5);
\draw [thin, line width = 0.5pt] (-7, -10.5) -- (-7.5, -10.5);
\draw [densely dashed, line width = 1pt] (-6, -8) -- (-6,-9.5);
\draw [thick, line width = 1pt] (-6, -9.5) -- (-6, -10.5);
\draw [densely dashed, line width = 1pt] (-5, -8) -- (-5,-9.5);
\draw [thick, line width = 1pt] (-5, -9.5) -- (-5, -10.5);
\draw [densely dashed, line width = 1pt] (-4, -8) -- (-4,-9.5);
\draw [thick, line width = 1pt] (-4, -9.5) -- (-4, -10.5);
\filldraw (-8, -8) circle(.15);
\filldraw[fill = red] (-7, -8) circle(.15);
\filldraw[fill = white] (-6, -8) circle(.15);
\filldraw[fill = gray] (-5, -8) circle(.15);
\filldraw (-4, -8) circle(.15);
\filldraw (-8,-9.5) circle(.15);
\filldraw (-8,-10.5) circle(.15);
\filldraw[fill = white] (-7.5,-10.5) circle(.15);
\filldraw (-4.5,-9.5) circle(.15);
\filldraw (-4.5,-10.5) circle(.15);
\filldraw (-7,-9.5) circle(.15);
\filldraw (-7,-10.5) circle(.15);
\filldraw (-6,-9.5) circle(.15);
\filldraw (-6,-10.5) circle(.15);
\filldraw (-5,-9.5) circle(.15);
\filldraw (-5,-10.5) circle(.15);
\filldraw (-4,-9.5) circle(.15);
\filldraw (-4,-10.5) circle(.15);
\node[font=\fontsize{27}{6}\selectfont] at (-3,-9.3) {$\frac {16}{14}$};

\draw [thick, line width = 1pt] (-1, -8) -- (0, -8);
\draw [thick, line width = 1pt] (2, -8) -- (3, -8);
\draw [thin, line width = 0.5pt] (0, -8) -- (1, -8);
\draw [thin, line width = 0.5pt] (1, -8) -- (2, -8);
\draw [densely dashed, line width = 1pt] (-1, -8) -- (-1,-9.5);
\draw [thick, line width = 1pt] (-1, -9.5) -- (-1, -10.5);
\draw [densely dashed, line width = 1pt] (0, -8) -- (0.5, -9.5);
\draw [thick, line width = 1pt] (0.5, -9.5) -- (0.5, -10.5);
\draw [densely dashed, line width = 1pt] (0, -8) -- (0,-9.5);
\draw [thick, line width = 1pt] (0, -9.5) -- (0, -10.5);
\draw [thin, line width = 0.5pt] (0, -10.5) -- (-0.5, -10.5);
\draw [densely dashed, line width = 1pt] (1, -8) -- (1,-9.5);
\draw [thick, line width = 1pt] (1, -9.5) -- (1, -10.5);
\draw [densely dashed, line width = 1pt] (2, -8) -- (2,-9.5);
\draw [thick, line width = 1pt] (2, -9.5) -- (2, -10.5);
\draw [densely dashed, line width = 1pt] (3, -8) -- (3,-9.5);
\draw [thick, line width = 1pt] (3, -9.5) -- (3, -10.5);
\filldraw (-1, -8) circle(.15);
\filldraw[fill = gray] (0, -8) circle(.15);
\filldraw[fill = white] (1, -8) circle(.15);
\filldraw[fill = red] (2, -8) circle(.15);
\filldraw (3, -8) circle(.15);
\filldraw (-1,-9.5) circle(.15);
\filldraw (-1,-10.5) circle(.15);
\filldraw[fill = white] (-0.5,-10.5) circle(.15);
\filldraw (0.5,-9.5) circle(.15);
\filldraw (0.5,-10.5) circle(.15);
\filldraw (0,-9.5) circle(.15);
\filldraw (0,-10.5) circle(.15);
\filldraw (1,-9.5) circle(.15);
\filldraw (1,-10.5) circle(.15);
\filldraw (2,-9.5) circle(.15);
\filldraw (2,-10.5) circle(.15);
\filldraw (3,-9.5) circle(.15);
\filldraw (3,-10.5) circle(.15);
\node[font=\fontsize{27}{6}\selectfont] at (4,-9.3) {$\frac {16}{14}$};

\draw [thick, line width = 1pt] (6, -8) -- (7, -8);
\draw [thick, line width = 1pt] (9, -8) -- (10, -8);
\draw [thin, line width = 0.5pt] (7, -8) -- (8, -8);
\draw [thin, line width = 0.5pt] (8, -8) -- (9, -8);
\draw [densely dashed, line width = 1pt] (6, -8) -- (6,-9.5);
\draw [thick, line width = 1pt] (6, -9.5) -- (6, -10.5);
\draw [densely dashed, line width = 1pt] (9, -8) -- (9.5, -9.5);
\draw [thick, line width = 1pt] (9.5, -9.5) -- (9.5, -10.5);
\draw [densely dashed, line width = 1pt] (7, -8) -- (7,-9.5);
\draw [thin, line width = 0.5pt] (7, -9.5) -- (7, -10.5);
\draw [thick, line width = 1pt] (7, -10.5) -- (6.5, -10.5);
\draw [densely dashed, line width = 1pt] (8, -8) -- (8,-9.5);
\draw [thick, line width = 1pt] (8, -9.5) -- (8, -10.5);
\draw [densely dashed, line width = 1pt] (9, -8) -- (9,-9.5);
\draw [thick, line width = 1pt] (9, -9.5) -- (9, -10.5);
\draw [densely dashed, line width = 1pt] (10, -8) -- (10,-9.5);
\draw [thick, line width = 1pt] (10, -9.5) -- (10, -10.5);
\filldraw (6, -8) circle(.15);
\filldraw[fill = red] (7, -8) circle(.15);
\filldraw[fill = white] (8, -8) circle(.15);
\filldraw[fill = gray] (9, -8) circle(.15);
\filldraw (10, -8) circle(.15);
\filldraw (6,-9.5) circle(.15);
\filldraw (6,-10.5) circle(.15);
\filldraw (6.5,-10.5) circle(.15);
\filldraw (9.5,-9.5) circle(.15);
\filldraw (9.5,-10.5) circle(.15);
\filldraw[fill = white] (7,-9.5) circle(.15);
\filldraw (7,-10.5) circle(.15);
\filldraw (8,-9.5) circle(.15);
\filldraw (8,-10.5) circle(.15);
\filldraw (9,-9.5) circle(.15);
\filldraw (9,-10.5) circle(.15);
\filldraw (10,-9.5) circle(.15);
\filldraw (10,-10.5) circle(.15);
\node[font=\fontsize{27}{6}\selectfont] at (11,-9.3) {$\frac {16}{14}$};

\draw [thick, line width = 1pt] (13, -8) -- (14, -8);
\draw [thick, line width = 1pt] (16, -8) -- (17, -8);
\draw [thin, line width = 0.5pt] (14, -8) -- (15, -8);
\draw [thin, line width = 0.5pt] (15, -8) -- (16, -8);
\draw [densely dashed, line width = 1pt] (13, -8) -- (13,-9.5);
\draw [thick, line width = 1pt] (13, -9.5) -- (13, -10.5);
\draw [densely dashed, line width = 1pt] (14, -8) -- (14.5, -9.5);
\draw [thick, line width = 1pt] (14.5, -9.5) -- (14.5, -10.5);
\draw [densely dashed, line width = 1pt] (14, -8) -- (14,-9.5);
\draw [thin, line width = 0.5pt] (14, -9.5) -- (14, -10.5);
\draw [thick, line width = 1pt] (14, -10.5) -- (13.5, -10.5);
\draw [densely dashed, line width = 1pt] (15, -8) -- (15,-9.5);
\draw [thick, line width = 1pt] (15, -9.5) -- (15, -10.5);
\draw [densely dashed, line width = 1pt] (16, -8) -- (16,-9.5);
\draw [thick, line width = 1pt] (16, -9.5) -- (16, -10.5);
\draw [densely dashed, line width = 1pt] (17, -8) -- (17,-9.5);
\draw [thick, line width = 1pt] (17, -9.5) -- (17, -10.5);
\filldraw (13, -8) circle(.15);
\filldraw[fill = gray] (14, -8) circle(.15);
\filldraw[fill = white] (15, -8) circle(.15);
\filldraw[fill = red] (16, -8) circle(.15);
\filldraw (17, -8) circle(.15);
\filldraw (13,-9.5) circle(.15);
\filldraw (13,-10.5) circle(.15);
\filldraw(13.5,-10.5) circle(.15);
\filldraw (14.5,-9.5) circle(.15);
\filldraw (14.5,-10.5) circle(.15);
\filldraw[fill = white] (14,-9.5) circle(.15);
\filldraw (14,-10.5) circle(.15);
\filldraw (15,-9.5) circle(.15);
\filldraw (15,-10.5) circle(.15);
\filldraw (16,-9.5) circle(.15);
\filldraw (16,-10.5) circle(.15);
\filldraw (17,-9.5) circle(.15);
\filldraw (17,-10.5) circle(.15);
\node[font=\fontsize{27}{6}\selectfont] at (18,-9.3) {$\frac {16}{14}$};

\end{tikzpicture}
\end{minipage}}
\end{center}
\captionsetup{width=.9\linewidth}
\caption{The possible structures for a responsible but not critical component $K$ of $H+C$, where thick 
(respectively, dashed) edges are in the matching $M$ (respectively, the path-cycle cover $C$), 
thin edges are not in $M \cup C$, 
the filled (respectively, blank) vertices are in (respectively, not in) $V(M)$, 
gray vertices are $2$-anchors, red vertices are responsible $1$-anchors (cf. Definition~\ref{def02}), and the fraction on the right side of
each structure is $\frac{s(K)}{opt(K)}$. 
\label{fig02}}
\end{figure}

\subsection{Operations for modifying critical components}\label{subsec:op}
In this subsection, we define three operations for modifying $C$ 
(and accordingly one or more critical components of $H+C$) 
so that after the modification, $H+C$ will hopefully have fewer critical components. 
Let $v$ be a vertex of a satellite element $S$ in a critical component $K$ and $v'$ be a vertex of $K'$ in $H+C$.
We remark that $K$ and $K'$ may be the same.
Suppose $\{v,v'\} \in E(G)\setminus C$ and we design the following three operations.

\begin{operation}   
\label{op01}     
Suppose one the following two conditions is satisfied:
\begin{itemize}
\parskip=0pt
\item $v'$ is a $0$-anchor of $K$ or

\item $v'$ is a $1$-anchor of $K$ and modifying $C$ by replacing the rescue-edge of $S$ with the edge $\{v,v'\}$ decreases the number of critical components in $H+C$. 
\end{itemize}
Then, the operation modifies $C$ by replacing the rescue-edges of $S$ with the edge $\{v,v'\}$.  
(cf. Figure~\ref{fig03})
\end{operation}

Clearly, Operation~\ref{op01} does not change the weight of $C$ by the first statements in Fact~\ref{fact02}. 
Suppose $v'$ is a $0$-anchor.
If $K=K'$, then after Operation~\ref{op01}, $K$ is no longer critical by the second statement of Fact~\ref{fact02}. 
Then, we suppose $K\ne K'$.
Obviously, $K$ is no longer critical but $K'$ may become critical after Operation~\ref{op01}.
So, Operation~\ref{op01} may not necessarily decrease but does not increase the number of critical components in $H+C$. 
Fortunately, Operation~\ref{op01} changes $v'$ from a $0$-anchor to a $1$-anchor.
So, Operation~\ref{op01} decreases the number of $0$-anchors in $H+C$ by~$1$ or the number of critical components in $H+C$ by~$1$.
Obviously, Operation~\ref{op01} does not change the number of components in $H+C$.

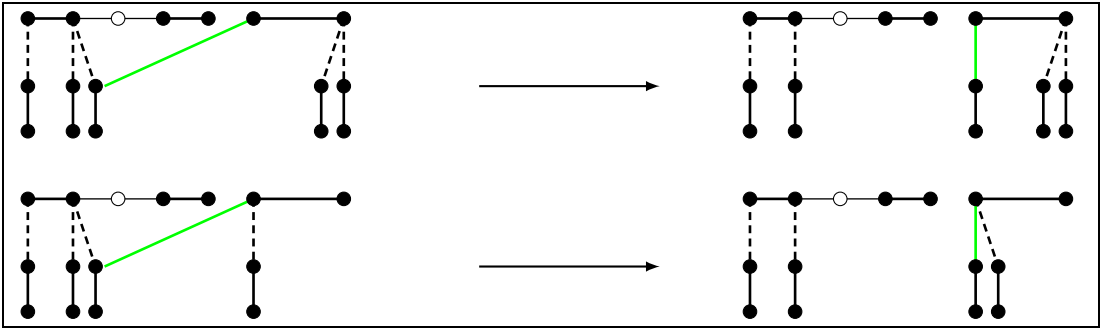
\begin{figure}[thb]
\begin{center}
\framebox{
\begin{minipage}{5.6in}
\begin{tikzpicture}[scale=0.6,transform shape]


\draw [thick, line width = 1pt] (-8, -6) -- (-7, -6);
\draw [thick, line width = 1pt] (-5, -6) -- (-4, -6);
\draw [thin, line width = 0.5pt] (-7, -6) -- (-6, -6);
\draw [thin, line width = 0.5pt] (-6, -6) -- (-5, -6);
\draw [densely dashed, line width = 1pt] (-8, -6) -- (-8,-7.5);
\draw [thick, line width = 1pt] (-8, -7.5) -- (-8, -8.5);
\draw [densely dashed, line width = 1pt] (-7, -6) -- (-7, -7.5);
\draw [thick, line width = 1pt] (-7, -7.5) -- (-7, -8.5);
\draw [densely dashed, line width = 1pt] (-7, -6) -- (-6.5,-7.5);
\draw [thick, line width = 1pt] (-6.5, -7.5) -- (-6.5, -8.5);
\filldraw (-8, -6) circle(.15);
\filldraw (-7, -6) circle(.15);
\filldraw[fill = white] (-6, -6) circle(.15);
\filldraw (-5, -6) circle(.15);
\filldraw (-4, -6) circle(.15);
\filldraw (-8,-7.5) circle(.15);
\filldraw (-8,-8.5) circle(.15);
\filldraw (-7,-7.5) circle(.15);
\filldraw (-7,-8.5) circle(.15);
\filldraw (-6.5,-7.5) circle(.15);
\filldraw (-6.5,-8.5) circle(.15);

\draw [thick, line width = 1pt, color=green] (-6.3, -7.5) -- (-3, -6);

\draw [thick, line width = 1pt] (-3, -6) -- (-1, -6);
\draw [densely dashed, line width = 1pt] (-1, -6) -- (-1,-7.5);
\draw [thick, line width = 1pt] (-1, -7.5) -- (-1, -8.5);
\draw [densely dashed, line width = 1pt] (-1, -6) -- (-1.5,-7.5);
\draw [thick, line width = 1pt] (-1.5, -7.5) -- (-1.5, -8.5);
\filldraw (-3, -6) circle(.15);
\filldraw (-1, -6) circle(.15);
\filldraw (-1,-7.5) circle(.15);
\filldraw (-1,-8.5) circle(.15);
\filldraw (-1.5,-7.5) circle(.15);
\filldraw (-1.5,-8.5) circle(.15);

\draw [-latex, thick] (2, -7.5) to (6, -7.5);

\draw [thick, line width = 1pt] (8, -6) -- (9, -6);
\draw [thick, line width = 1pt] (11, -6) -- (12,-6);
\draw [thin, line width = 0.5pt] (9, -6) -- (10, -6);
\draw [thin, line width = 0.5pt] (10, -6) -- (11, -6);
\draw [densely dashed, line width = 1pt] (8, -6) -- (8, -7.5);
\draw [thick, line width = 1pt] (8, -7.5) -- (8, -8.5);
\draw [densely dashed, line width = 1pt] (9, -6) -- (9, -7.5);
\draw [thick, line width = 1pt] (9, -7.5) -- (9, -8.5);
\filldraw (8, -6) circle(.15);
\filldraw (9, -6) circle(.15);
\filldraw[fill = white] (10, -6) circle(.15);
\filldraw (11, -6) circle(.15);
\filldraw (12, -6) circle(.15);
\filldraw (8, -7.5) circle(.15);
\filldraw (8, -8.5) circle(.15);
\filldraw (9, -7.5) circle(.15);
\filldraw (9, -8.5) circle(.15);

\draw [thick, line width = 1pt] (13, -6) -- (15, -6);
\draw [thick, line width = 1pt, color=green] (13, -6) -- (13,-7.5);
\draw [thick, line width = 1pt] (13, -7.5) -- (13, -8.5);
\draw [densely dashed, line width = 1pt] (15, -6) -- (15,-7.5);
\draw [thick, line width = 1pt] (15, -7.5) -- (15, -8.5);
\draw [densely dashed, line width = 1pt] (15, -6) -- (14.5,-7.5);
\draw [thick, line width = 1pt] (14.5, -7.5) -- (14.5, -8.5);
\filldraw (13, -6) circle(.15);
\filldraw (15, -6) circle(.15);
\filldraw (13,-7.5) circle(.15);
\filldraw (13,-8.5) circle(.15);
\filldraw (14.5,-7.5) circle(.15);
\filldraw (14.5,-8.5) circle(.15);
\filldraw (15,-7.5) circle(.15);
\filldraw (15,-8.5) circle(.15);


\draw [thick, line width = 1pt] (-8, -10) -- (-7, -10);
\draw [thick, line width = 1pt] (-5, -10) -- (-4, -10);
\draw [thin, line width = 0.5pt] (-7, -10) -- (-6, -10);
\draw [thin, line width = 0.5pt] (-6, -10) -- (-5, -10);
\draw [densely dashed, line width = 1pt] (-8, -10) -- (-8,-11.5);
\draw [thick, line width = 1pt] (-8, -11.5) -- (-8, -12.5);
\draw [densely dashed, line width = 1pt] (-7, -10) -- (-7, -11.5);
\draw [thick, line width = 1pt] (-7, -11.5) -- (-7, -12.5);
\draw [densely dashed, line width = 1pt] (-7, -10) -- (-6.5,-11.5);
\draw [thick, line width = 1pt] (-6.5, -11.5) -- (-6.5, -12.5);
\filldraw (-8, -10) circle(.15);
\filldraw (-7, -10) circle(.15);
\filldraw[fill = white] (-6, -10) circle(.15);
\filldraw (-5, -10) circle(.15);
\filldraw (-4, -10) circle(.15);
\filldraw (-8,-11.5) circle(.15);
\filldraw (-8,-12.5) circle(.15);
\filldraw (-7,-11.5) circle(.15);
\filldraw (-7,-12.5) circle(.15);
\filldraw (-6.5,-11.5) circle(.15);
\filldraw (-6.5,-12.5) circle(.15);

\draw [thick, line width = 1pt, color=green] (-6.3, -11.5) -- (-3, -10);

\draw [thick, line width = 1pt] (-3, -10) -- (-1, -10);
\draw [densely dashed, line width = 1pt] (-3, -10) -- (-3,-11.5);
\draw [thick, line width = 1pt] (-3, -11.5) -- (-3, -12.5);
\filldraw (-3, -10) circle(.15);
\filldraw (-1, -10) circle(.15);
\filldraw (-3,-11.5) circle(.15);
\filldraw (-3,-12.5) circle(.15);

\draw [-latex, thick] (2, -11.5) to (6, -11.5);

\draw [thick, line width = 1pt] (8, -10) -- (9, -10);
\draw [thick, line width = 1pt] (11, -10) -- (12,-10);
\draw [thin, line width = 0.5pt] (9, -10) -- (10, -10);
\draw [thin, line width = 0.5pt] (10, -10) -- (11, -10);
\draw [densely dashed, line width = 1pt] (8, -10) -- (8, -11.5);
\draw [thick, line width = 1pt] (8, -11.5) -- (8, -12.5);
\draw [densely dashed, line width = 1pt] (9, -10) -- (9, -11.5);
\draw [thick, line width = 1pt] (9, -11.5) -- (9, -12.5);
\filldraw (8, -10) circle(.15);
\filldraw (9, -10) circle(.15);
\filldraw[fill = white] (10, -10) circle(.15);
\filldraw (11, -10) circle(.15);
\filldraw (12, -10) circle(.15);
\filldraw (8, -11.5) circle(.15);
\filldraw (8, -12.5) circle(.15);
\filldraw (9, -11.5) circle(.15);
\filldraw (9, -12.5) circle(.15);

\draw [thick, line width = 1pt] (13, -10) -- (15, -10);
\draw [thick, line width = 1pt, color=green] (13, -10) -- (13,-11.5);
\draw [thick, line width = 1pt] (13, -11.5) -- (13, -12.5);
\draw [densely dashed, line width = 1pt] (13, -10) -- (13.5,-11.5);
\draw [thick, line width = 1pt] (13.5, -11.5) -- (13.5, -12.5);
\filldraw (13, -10) circle(.15);
\filldraw (15, -10) circle(.15);
\filldraw (13,-11.5) circle(.15);
\filldraw (13,-12.5) circle(.15);
\filldraw (13.5,-11.5) circle(.15);
\filldraw (13.5,-12.5) circle(.15);

\end{tikzpicture}
\end{minipage}}
\end{center}
\captionsetup{width=1.0\linewidth}
\caption{Two representative possible cases in Operation~\ref{op01}, 
where the first (respectively, second) satisfies $v$ is a $0$-anchor (respectively, $1$-anchor) and the green edge is the edge $\{v,v'\}$. 
\label{fig03}}
\end{figure}

\begin{operation}
\label{op02}         
Suppose $v'$ is in a satellite-element $S'$ of $K'$ and 
the center element $K'_c$ of $K'$ is an edge or a star to which no satellite element other than $S'$ is adjacent in $H+C$.
Then (cf. Figure~\ref{fig04}), the operation modifies $C$ by replacing the rescue-edge of $S$ with the edge $\{v,v'\}$. 
\end{operation}

Obviously, Operation~\ref{op02} does not change the weight of $C$ by the first statement of Fact~\ref{fact02}.
Note that $K'$ has no $2$-anchor and hence $K'$ is not critical by Lemma~\ref{lemma10}.
So, $K\ne K'$ since $K$ is critical.
Moreover, after Operation~\ref{op02}, $S'$ becomes the center element of $K'$ and hence Lemma~\ref{lemma06} still holds. 
Furthermore, by the first statement in Fact~\ref{fact02} and Lemma~\ref{lemma12}, 
$K, K'$ are not critical after Operation~\ref{op02} and thus Operation~\ref{op02} decreases the number of critical components in $H+C$ by~$1$. 
Clearly, Operation~\ref{op02} does not change the number of components in $H+C$. 
Before Operation~\ref{op02}, $K'$ may have one $0$-anchor $x$.
After Operation~\ref{op02}, $x$ will be in a satellite element of $H+C$ and hence will not be a $0$-anchor, 
but $S'$ will become a center element with two satellite elements adjacent to it in $H+C$, 
implying that one vertex of $S'$ may become a $0$-anchor in $H+C$ (or not an anchor, if $S'$ is a star).
In summary, Operation~\ref{op02} does not increase the number of $0$-anchors in $H+C$.

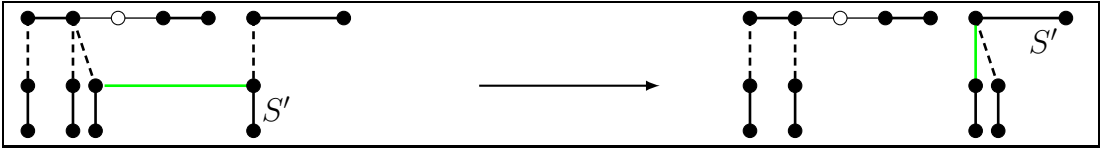
\begin{figure}[thb]
\begin{center}
\framebox{
\begin{minipage}{5.6in}
\begin{tikzpicture}[scale=0.6,transform shape]

\draw [thick, line width = 1pt] (-8, -2) -- (-7, -2);
\draw [thick, line width = 1pt] (-5, -2) -- (-4, -2);
\draw [thin, line width = 0.5pt] (-7, -2) -- (-6, -2);
\draw [thin, line width = 0.5pt] (-6, -2) -- (-5, -2);
\draw [densely dashed, line width = 1pt] (-8, -2) -- (-8,-3.5);
\draw [thick, line width = 1pt] (-8, -3.5) -- (-8, -4.5);
\draw [densely dashed, line width = 1pt] (-7, -2) -- (-7, -3.5);
\draw [thick, line width = 1pt] (-7, -3.5) -- (-7, -4.5);
\draw [densely dashed, line width = 1pt] (-7, -2) -- (-6.5,-3.5);
\draw [thick, line width = 1pt] (-6.5, -3.5) -- (-6.5, -4.5);
\filldraw (-8, -2) circle(.15);
\filldraw (-7, -2) circle(.15);
\filldraw[fill = white] (-6, -2) circle(.15);
\filldraw (-5, -2) circle(.15);
\filldraw (-4, -2) circle(.15);
\filldraw (-8,-3.5) circle(.15);
\filldraw (-8,-4.5) circle(.15);
\filldraw (-7,-3.5) circle(.15);
\filldraw (-7,-4.5) circle(.15);
\filldraw (-6.5,-3.5) circle(.15);
\filldraw (-6.5,-4.5) circle(.15);

\draw [thick, line width = 1pt, color=green] (-6.3, -3.5) -- (-3, -3.5);

\draw [thick, line width = 1pt] (-3, -2) -- (-1, -2);
\draw [densely dashed, line width = 1pt] (-3, -2) -- (-3,-3.5);
\draw [thick, line width = 1pt] (-3, -3.5) -- (-3, -4.5);
\filldraw (-3, -2) circle(.15);
\filldraw (-1, -2) circle(.15);
\filldraw (-3,-3.5) circle(.15);
\filldraw (-3,-4.5) circle(.15);
\node[font=\fontsize{20}{6}\selectfont] at (-2.5,-4) {$S'$};

\draw [-latex, thick] (2, -3.5) to (6, -3.5);

\draw [thick, line width = 1pt] (8, -2) -- (9, -2);
\draw [thick, line width = 1pt] (11, -2) -- (12,-2);
\draw [thin, line width = 0.5pt] (9, -2) -- (10, -2);
\draw [thin, line width = 0.5pt] (10, -2) -- (11, -2);
\draw [densely dashed, line width = 1pt] (8, -2) -- (8, -3.5);
\draw [thick, line width = 1pt] (8, -3.5) -- (8, -4.5);
\draw [densely dashed, line width = 1pt] (9, -2) -- (9, -3.5);
\draw [thick, line width = 1pt] (9, -3.5) -- (9, -4.5);
\filldraw (8, -2) circle(.15);
\filldraw (9, -2) circle(.15);
\filldraw[fill = white] (10, -2) circle(.15);
\filldraw (11, -2) circle(.15);
\filldraw (12, -2) circle(.15);
\filldraw (8, -3.5) circle(.15);
\filldraw (8, -4.5) circle(.15);
\filldraw (9, -3.5) circle(.15);
\filldraw (9, -4.5) circle(.15);

\draw [thick, line width = 1pt] (13, -2) -- (15, -2);
\draw [thick, line width = 1pt, color=green] (13, -2) -- (13,-3.5);
\draw [thick, line width = 1pt] (13, -3.5) -- (13, -4.5);
\draw [densely dashed, line width = 1pt] (13, -2) -- (13.5,-3.5);
\draw [thick, line width = 1pt] (13.5, -3.5) -- (13.5, -4.5);
\filldraw (13, -2) circle(.15);
\filldraw (15, -2) circle(.15);
\filldraw (13,-3.5) circle(.15);
\filldraw (13,-4.5) circle(.15);
\filldraw (13.5,-3.5) circle(.15);
\filldraw (13.5,-4.5) circle(.15);
\node[font=\fontsize{20}{6}\selectfont] at (14.5,-2.5) {$S'$};

\end{tikzpicture}
\end{minipage}}
\end{center}
\captionsetup{width=1.0\linewidth}
\caption{A representative possible case in Operation~\ref{op02} and the green edge is the edge $\{v,v'\}$. 
\label{fig04}}
\end{figure}

\begin{operation}   
\label{op03}     
Suppose $v'$ appears in a satellite-element $S'$ of $K'$ and $K'_c$ is a $5$-path or
$K'_c$ is an edge or a star to which 
at least one more satellite element other than $S'$ is adjacent in $H+C$.
Then (cf. Figure~\ref{fig05}), the operation modifies $C$ by replacing the rescue-edges of $S$ and $S'$ with the edge $\{v,v'\}$. 
\end{operation}

By the first statement of Fact~\ref{fact02}, Operation~\ref{op03} does not change the weight of $C$ since $K, K'$ will not be an isolated bad component of $H$.

Operation~\ref{op03} uses the edge $\{v,v'\}$ to connect $S$ and $S'$ into a new composite component $K_{new}$ of $H+C$. 
Since both $S$ and $S'$ are not $5$-paths by the first statement of Lemma~\ref{lemma07}, 
either of them can be treated as the center element of $K_{new}$ and the other becomes the satellite element of $K_{new}$. 
Note that $K_{new}$ has at most one $0$-anchor, and the rescue-anchor of $S'$ may be a $1$-anchor before Operation~\ref{op03}. 
So, Operation~\ref{op03} increases the number of $0$-anchors in $H+C$ by at most $2$. 

By Lemma~\ref{lemma10}, $K_{new}$ is not critical. 
If $K=K'$, then clearly Operation~\ref{op03} does not increase the number of critical components in $H+C$.
Otherwise, Operation~\ref{op03} makes $K$ not critical because of the first statement in  Fact~\ref{fact02}, 
but it is possible that Operation~\ref{op03} makes $K'$ critical. 
In any case, Operation~\ref{op03} does not increase the number of critical components in $H+C$. 
Luckily, Operation~\ref{op03} always increases the number of components in $H+C$ by~$1$.

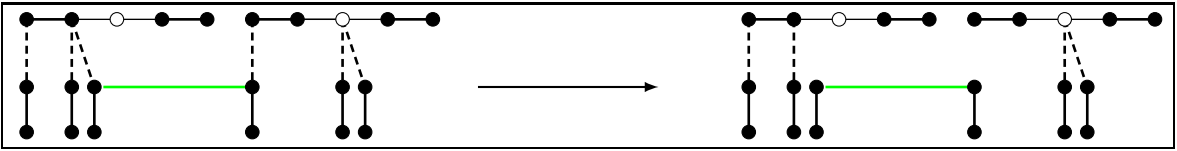
\begin{figure}[thb]
\begin{center}
\framebox{
\begin{minipage}{6in}
\begin{tikzpicture}[scale=0.6,transform shape]

\draw [thick, line width = 1pt] (-8, 2) -- (-7, 2);
\draw [thick, line width = 1pt] (-5, 2) -- (-4, 2);
\draw [thin, line width = 0.5pt] (-7, 2) -- (-6, 2);
\draw [thin, line width = 0.5pt] (-6, 2) -- (-5, 2);
\draw [densely dashed, line width = 1pt] (-8, 2) -- (-8,0.5);
\draw [thick, line width = 1pt] (-8, 0.5) -- (-8, -0.5);
\draw [densely dashed, line width = 1pt] (-7, 2) -- (-7, 0.5);
\draw [thick, line width = 1pt] (-7, 0.5) -- (-7, -0.5);
\draw [densely dashed, line width = 1pt] (-7, 2) -- (-6.5,0.5);
\draw [thick, line width = 1pt] (-6.5, 0.5) -- (-6.5, -0.5);
\filldraw (-8, 2) circle(.15);
\filldraw (-7, 2) circle(.15);
\filldraw[fill = white] (-6, 2) circle(.15);
\filldraw (-5, 2) circle(.15);
\filldraw (-4, 2) circle(.15);
\filldraw (-8,0.5) circle(.15);
\filldraw (-8,-0.5) circle(.15);
\filldraw (-7,0.5) circle(.15);
\filldraw (-7,-0.5) circle(.15);
\filldraw (-6.5,0.5) circle(.15);
\filldraw (-6.5,-0.5) circle(.15);

\draw [thick, line width = 1pt, color=green] (-6.3, 0.5) -- (-3, 0.5);

\draw [thick, line width = 1pt] (-3, 2) -- (-2, 2);
\draw [thick, line width = 1pt] (0, 2) -- (1, 2);
\draw [thin, line width = 0.5pt] (-2, 2) -- (-1, 2);
\draw [thin, line width = 0.5pt] (-2, 2) -- (0, 2);
\draw [densely dashed, line width = 1pt] (-3, 2) -- (-3,0.5);
\draw [thick, line width = 1pt] (-3, 0.5) -- (-3, -0.5);
\draw [densely dashed, line width = 1pt] (-1, 2) -- (-1, 0.5);
\draw [thick, line width = 1pt] (-1, 0.5) -- (-1, -0.5);
\draw [densely dashed, line width = 1pt] (-1, 2) -- (-0.5,0.5);
\draw [thick, line width = 1pt] (-0.5, 0.5) -- (-0.5, -0.5);
\filldraw (-3, 2) circle(.15);
\filldraw (-2, 2) circle(.15);
\filldraw[fill = white] (-1, 2) circle(.15);
\filldraw (0, 2) circle(.15);
\filldraw (1, 2) circle(.15);
\filldraw (-3,0.5) circle(.15);
\filldraw (-3,-0.5) circle(.15);
\filldraw (-1,0.5) circle(.15);
\filldraw (-1,-0.5) circle(.15);
\filldraw (-0.5,0.5) circle(.15);
\filldraw (-0.5,-0.5) circle(.15);

\draw [-latex, thick] (2, 0.5) to (6, 0.5);

\draw [thick, line width = 1pt] (8, 2) -- (9, 2);
\draw [thick, line width = 1pt] (11, 2) -- (12, 2);
\draw [thin, line width = 0.5pt] (9, 2) -- (10, 2);
\draw [thin, line width = 0.5pt] (10, 2) -- (11, 2);
\draw [densely dashed, line width = 1pt] (8, 2) -- (8, 0.5);
\draw [thick, line width = 1pt] (8, 0.5) -- (8, -0.5);
\draw [densely dashed, line width = 1pt] (9, 2) -- (9, 0.5);
\draw [thick, line width = 1pt] (9, 0.5) -- (9, -0.5);
\draw [thick, line width = 1pt] (9.5, 0.5) -- (9.5, -0.5);
\filldraw (8, 2) circle(.15);
\filldraw (9, 2) circle(.15);
\filldraw[fill = white] (10, 2) circle(.15);
\filldraw (11, 2) circle(.15);
\filldraw (12, 2) circle(.15);
\filldraw (8, 0.5) circle(.15);
\filldraw (8, -0.5) circle(.15);
\filldraw (9, 0.5) circle(.15);
\filldraw (9, -0.5) circle(.15);
\filldraw (9.5, 0.5) circle(.15);
\filldraw (9.5, -0.5) circle(.15);

\draw [thick, line width = 1pt, color=green] (9.7, 0.5) -- (13, 0.5);

\draw [thick, line width = 1pt] (13, 2) -- (14, 2);
\draw [thick, line width = 1pt] (16, 2) -- (17, 2);
\draw [thin, line width = 0.5pt] (14, 2) -- (15, 2);
\draw [thin, line width = 0.5pt] (15, 2) -- (16, 2);
\draw [densely dashed, line width = 1pt] (15, 2) -- (15, 0.5);
\draw [thick, line width = 1pt] (15, 0.5) -- (15, -0.5);
\draw [densely dashed, line width = 1pt] (15, 2) -- (15.5, 0.5);
\draw [thick, line width = 1pt] (15.5, 0.5) -- (15.5, -0.5);
\draw [thick, line width = 1pt] (13, 0.5) -- (13, -0.5);
\filldraw (13, 2) circle(.15);
\filldraw (14, 2) circle(.15);
\filldraw[fill = white] (15, 2) circle(.15);
\filldraw (16, 2) circle(.15);
\filldraw (17, 2) circle(.15);
\filldraw (15, 0.5) circle(.15);
\filldraw (15, -0.5) circle(.15);
\filldraw (15.5, 0.5) circle(.15);
\filldraw (15.5, -0.5) circle(.15);
\filldraw (13, 0.5) circle(.15);
\filldraw (13, -0.5) circle(.15);

\end{tikzpicture}
\end{minipage}}
\end{center}
\captionsetup{width=1.0\linewidth}
\caption{A representative possible case in Operation~\ref{op03} and the green edge is the edge $\{v,v'\}$. 
\label{fig05}}
\end{figure}

\begin{lemma}
\label{lemma15}
If $G$ has an edge $\{v,v'\}$ such that $v$ is in a critical satellite-element $S$ of $H+C$ and 
$v'\notin V(S)$ is neither a $2$-anchor nor a responsible $1$-anchor in $H+C$,
then one of Operations~\ref{op01}, \ref{op02}, and~\ref{op03} is applicable.
\end{lemma}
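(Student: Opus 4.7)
The plan is to proceed by case analysis on the role of $v'$ in $H+C$. Let $K$ denote the critical component of $H+C$ containing $S$ and let $K'$ denote the component of $H+C$ containing $v'$. Since $v'$ is by hypothesis not a $2$-anchor, it is either a $0$-anchor of $K'$, a non-responsible $1$-anchor of $K'$, or not an anchor at all; these three cases are treated in turn and aligned with Operations~\ref{op01}--\ref{op03}.

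If $v'$ is a $0$-anchor of $K'$, then the first condition of Operation~\ref{op01} is directly satisfied. If instead $v'$ is a $1$-anchor of $K'$, I would argue that replacing the rescue-edge of $S$ with $\{v,v'\}$ strictly decreases the number of critical components in $H+C$. By Fact~\ref{fact02}(1), deleting $S$ from $K$ turns $K$ into a non-critical composite component. The only component whose critical status can flip in the opposite direction is $K'$, and Definition~\ref{def11} guarantees that the move cannot make $K'$ critical, else $v'$ would be a responsible $1$-anchor, contradicting the hypothesis. Carefully separating the cases $K=K'$ and $K\ne K'$ then shows that the net change in the critical count is at most $-1$, so the second condition of Operation~\ref{op01} applies.

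Suppose finally that $v'$ is not an anchor. I would first rule out that $v'$ is a satellite vertex of a star center element: by Lemma~\ref{lemma03}(3), the edge $\{v,v'\}$ would then force $v$ to be a non-middle internal vertex of a $5$-path component of $H$, but Lemma~\ref{lemma07}(1) says the satellite element $S$ is never a $5$-path, a contradiction. Hence $v'$ lies in some satellite element $S'\ne S$ of $K'$. Let $K'_c$ be the center element of $K'$, which by Lemma~\ref{lemma07}(1) is a $5$-path, an edge, or a star. If $K'_c$ is an edge or star and $S'$ is its only adjacent satellite element in $H+C$, then Operation~\ref{op02} applies; in every remaining sub-case (namely $K'_c$ is a $5$-path, or $K'_c$ is an edge/star with at least one further adjacent satellite element), Operation~\ref{op03} applies. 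Together these sub-cases exhaust all possibilities.

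The main obstacle I expect is the $1$-anchor case, especially the subcase $K=K'$. In that subcase the rescue-edge of $S$ is rerouted within $K$ itself, so the balance between $K$ shedding one critical satellite and the possible reappearance of a critical structure at the new $2$-anchor $v'$ must be traced in detail. The essential leverage is the non-responsibility hypothesis as encoded by Definition~\ref{def11}, paired with Fact~\ref{fact02} to control what the critical structure of $K$ becomes after the move.
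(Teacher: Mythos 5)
Your proposal is correct and follows essentially the same route as the paper's proof: a case analysis on whether $v'$ is a $0$-anchor, a non-responsible $1$-anchor, or a non-anchor (equivalently, whether $v'$ lies in the center element or a satellite element of $K'$), mapping these cases to Operations~\ref{op01}, \ref{op01}, and \ref{op02}/\ref{op03} respectively via Fact~\ref{fact02}, Definition~\ref{def11}, and Lemmas~\ref{lemma03} and~\ref{lemma07}. Your treatment of the $1$-anchor case and the exclusion of star-satellite vertices of $K'_c$ is in fact slightly more explicit than the paper's, but the argument is the same.
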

\begin{proof} 
Suppose $v'$ belongs to a satellite element $S'$ of $K'$.
If $K'_c$ is an edge or a star and $S'$ is the unique satellite element of $K'$,
then Operation~\ref{op02} is applicable.
Otherwise, $K'_c$ is a $5$-path or $K'_c$ is an edge or a star and $K'$ has at least two satellite elements by Lemma~\ref{lemma07}.
So, Operation~\ref{op03} is applicable.

We next assume $v'$ belongs to the center element $K'_c$ of $K'$.
By the second statement of Lemma~\ref{lemma07} and Definition~\ref{def07}, $v'$ is an anchor of $K'_c$.
Note that $v'$ is neither a $2$-anchor nor a responsible $1$-anchor.
So, $v'$ is a $0$-anchor or $v'$ is a $1$-anchor such that moving $S$ to $K'$ will not make $K'$ critical.
It follows that Operation~\ref{op01} is applicable again, which completes the proof.
\end{proof}

\begin{lemma}
\label{lemma16}
If we repeatedly perform Operations~\ref{op01},~\ref{op02}, and~\ref{op03} until none of them is applicable, 
then the number of repetitions is bounded by $O(n)$. 
\end{lemma}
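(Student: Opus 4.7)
The plan is to use an amortized potential-function argument, exploiting the way the three operations interact with three quantities that the paper has already bounded: the number of components of $H+C$, the number of critical components of $H+C$, and the number of $0$-anchors of $H+C$. All three are obviously $O(n)$ at any point in the execution since every component contains at least one vertex of $V(G)$ and every $0$-anchor is a vertex of $V(G)$.

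First I would recall, from the discussion immediately following the definition of each operation, the invariants on these three quantities:
\begin{itemize}
\parskip=0pt
\item Operation~\ref{op01} keeps the number of components unchanged, and it either decreases the number of $0$-anchors by $1$ or decreases the number of critical components by $1$ (both by at least $1$ in either case);
\item Operation~\ref{op02} keeps the number of components unchanged, does not increase the number of $0$-anchors, and decreases the number of critical components by~$1$;
\item Operation~\ref{op03} increases the number of components by~$1$, does not increase the number of critical components, and increases the number of $0$-anchors by at most~$2$.
\end{itemize}

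Next, I would introduce the potential function
\[
\Phi \;=\; (\text{number of critical components of } H+C) \;+\; (\text{number of $0$-anchors of } H+C),
\]
which is always nonnegative and bounded by $O(n)$ at initialization. The three bullet points above translate directly into: Operation~\ref{op01} decreases $\Phi$ by at least~$1$; Operation~\ref{op02} decreases $\Phi$ by at least~$1$; Operation~\ref{op03} increases $\Phi$ by at most~$2$.

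Finally I would combine these via the standard charging argument. Since Operation~\ref{op03} strictly increases the number of components and that number is bounded by $n$, Operation~\ref{op03} can be applied at most $O(n)$ times throughout the whole process. The total increase of $\Phi$ caused by all applications of Operation~\ref{op03} is therefore at most $O(n)$; together with the initial $O(n)$ value of $\Phi$, this is the only source of positive contribution, so the sum of all decreases of $\Phi$ is also $O(n)$. Because every application of Operation~\ref{op01} or~\ref{op02} decreases $\Phi$ by at least $1$, these two operations together can be applied at most $O(n)$ times. Adding up the three counts yields the claimed $O(n)$ bound on the total number of repetitions. The main (and only) subtlety is checking in each case that the operation really does not increase $\Phi$ by more than the amount claimed; this is exactly what the paragraphs following Operations~\ref{op01}--\ref{op03} already verify, so once the potential is chosen the argument is essentially immediate.
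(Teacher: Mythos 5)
Your proposal is correct and rests on exactly the same per-operation bookkeeping (the effect of each operation on the number of components, the number of critical components, and the number of $0$-anchors) that the paper establishes in the paragraphs following Operations~\ref{op01}--\ref{op03}. The only difference is presentational: the paper folds all three quantities into a single potential $g = n_0 + n_{cc} - 3n_c$ that strictly decreases with every operation and ranges over an interval of width $O(n)$, whereas you handle Operation~\ref{op03} separately via the monotonicity of the component count and then apply the potential $\Phi = n_0 + n_{cc}$ to Operations~\ref{op01} and~\ref{op02}; both yield the same $O(n)$ bound.
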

\begin{proof} 
Let $n_0$ be the number of $0$-anchors in $H+C$, $n_c$ be the number of components in $H+C$, and 
$n_{cc}$ be the number of critical components in $H+C$. 
Let $g = n_0 + n_{cc} - 3n_c$. 
Recall that Operation~\ref{op01} decreases $n_0$ or $n_{cc}$ by~$1$ but does not change $n_c$ and 
Operation~\ref{op02} decreases $n_{cc}$ by~$1$, does not change $n_c$, and does not increase $n_0$. 
As for Operation~\ref{op03}, it increases $n_c$ by~$1$, increases $n_0$ at most $2$, but does not 
increase $n_{cc}$. 
So, each time we perform one of the operations, we decrease the value of $g$ by at least~$1$. 
Note that $2n \ge g \ge -3n$. 
Therefore, if we repeatedly perform the operations until none of them is applicable, 
then the number of repetitions is at most $O(n)$.
\end{proof}

\subsection{Bounding $opt(G)$}\label{subsec:bound}
\begin{notation}
\label{nota08}
Let $R$ denote the set of vertices $v \in V(H)$ such that $v$ is a $2$-anchor or a responsible $1$-anchor in $H+C$.
\end{notation}

Obviously, if $K$ is an isolated $5$-path of $H+C$, then each vertex of $K$ is a $0$-anchor and hence $|R \cap V(K)| = 0$. 
Moreover, for each composite component $K$ of $H+C$, $|R \cap V(K)|$ is bounded by the number 
of anchors in $K$. 
Thus, if the center element $K_c$ of $K$ is an edge, then $|R \cap V(K)| \in \{0, 1, 2\}$. 
If $K_c$ is a star, then $|R \cap V(K)| \in \{0, 1\}$.
Similarly, if $K_c$ is a $5$-path, then $|R \cap V(K)| \in \{0, 1, 2, 3, 4, 5\}$. 

\begin{notation}
\label{nota09}
For the components in $H+C$, we define the notations as follows.
\begin{itemize}
\parskip=0pt
\item Let ${\cal K}$ be the set of components of $H+C$ that are not isolated bad components of $H$.

\item
For each $i\in\{0,1,2,3,4,5\}$, let ${\cal K}_i \subseteq {\cal K}$ be a subset of ${\cal K}$
such that $|R \cap V(K)| = i$. 

\item
	For each $i\in\{1,2\}$, let ${\cal K}_{i,c}$ be the set of critical components in ${\cal K}_i$. 

	({\em Comment}: By Figure~\ref{fig01}, each critical component $K$ of $H+C$ has one or two $2$-anchors and no responsible $1$-anchor since $K$ is not responsible.) 

\item
Let $R_c$ be the set of $2$-anchors in the critical components of $H+C$. 

\item
$U_c = \bigcup_{v\in R_c} \{w\in V(H) \mid w$ appears in a critical satellite-element whose rescue-anchor is $v\}$. 

\item Let $G_c = G[V(G) \setminus (R_c\cup U_c)]$. 
\end{itemize}
\end{notation}

\begin{lemma}
\label{lemma17} 
$\displaystyle opt(G) \le opt(G_c) + 7\sum_{i=1}^5 i|{\cal K}_i|.$
\end{lemma}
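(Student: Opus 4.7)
The plan is to convert $OPT(G)$ into a feasible solution for $G_c$ and bound the number of vertices that are lost in the conversion.

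First, I would partition the paths of $OPT(G)$ into $\mathcal{P}_1$, those whose vertex set avoids $R_c\cup U_c$, and $\mathcal{P}_2=OPT(G)\setminus\mathcal{P}_1$. Since each path in $\mathcal{P}_1$ lies entirely in $V(G_c)=V(G)\setminus(R_c\cup U_c)$, the collection $\mathcal{P}_1$ is itself a feasible (though not necessarily optimal) solution for $G_c$, so
\[
\sum_{P\in\mathcal{P}_1}|V(P)|\ \le\ opt(G_c).
\]
By Fact~\ref{fact01} every path has at most seven vertices, so $\sum_{P\in\mathcal{P}_2}|V(P)|\le 7|\mathcal{P}_2|$; combining these two bounds gives $opt(G)\le opt(G_c)+7|\mathcal{P}_2|$, and it remains only to establish
\[
|\mathcal{P}_2|\ \le\ \sum_{i=1}^{5}i|\mathcal{K}_i|\ =\ |R|.
\]

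For this last inequality the plan is to exhibit an injection $\phi\colon\mathcal{P}_2\to R$. Given $P\in\mathcal{P}_2$, pick a vertex $w\in V(P)\cap(R_c\cup U_c)$. If $w\in R_c$ then $w$ itself is in $R$ and I take $\phi(P)=w$. Otherwise $w\in U_c$ lies in a critical satellite element $S$ whose rescue-anchor $v\in R_c$ is already in $R$, and I set $\phi(P)=v$ when $V(P)\ni v$. If $P$ does not pass through $v$, then $P$ must leave $S$ via an edge of $G$, and by Lemma~\ref{lemma03}~(3)--(4) this edge must end on an internal non-middle vertex of some $5$-path of $H$; such a vertex may be chosen as $\phi(P)$, and a further short case analysis guided by the catalog of critical and responsible structures in Figures~\ref{fig01}--\ref{fig02} shows that $\phi(P)$ lies in $R$.

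The hardest part will be verifying that $\phi$ is injective. A single critical $2$-anchor $v$ could a priori be ``attracted'' by multiple paths of $OPT(G)$ through the vertices of the (possibly large) star-type critical satellites of $v$; ruling out such collisions requires the vertex-disjointness of $OPT(G)$'s paths combined with the restricted adjacencies imposed by Lemma~\ref{lemma03} on satellites of star components and on vertices of triangle components, which drastically limits the ways in which distinct $OPT(G)$-paths can enter the same block $\{v\}\cup S_1\cup S_2$ around a critical $2$-anchor.
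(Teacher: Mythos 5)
Your overall strategy is the same as the paper's: every path of $OPT(G)$ that meets $R_c\cup U_c$ is charged to a vertex of $R$ lying on it, vertex-disjointness makes the charging injective, and each charged path costs at most $7$ vertices by Fact~\ref{fact01}. The easy halves of your argument (feasibility of $\mathcal{P}_1$ for $G_c$, the bound $7|\mathcal{P}_2|$, and $|R|=\sum_i i|\mathcal{K}_i|$) are fine. The gap is in the one step that carries the real content: showing that a path $P$ which enters a critical satellite element $S$ without using its rescue-anchor must still contain a vertex of $R$. You appeal to Lemma~\ref{lemma03}(3)--(4), but those statements do not deliver what you need: they only constrain neighbors of triangle components, of satellites of star components, and of vertices outside $V(H)$, so they say nothing when $S$ is an edge component of $H$, and even where they apply, ``internal non-middle vertex of a $5$-path'' is far from ``$2$-anchor or responsible $1$-anchor.'' The correct tool is Lemma~\ref{lemma15} together with the fact that Step~3 of the algorithm has terminated: since none of Operations~\ref{op01}--\ref{op03} is applicable, every vertex $v'\notin V(S)$ adjacent in $G$ to a critical satellite element $S$ is a $2$-anchor or a responsible $1$-anchor, i.e.\ $v'\in R$. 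Combined with the observation that $G[V(S)]$ contains no $4^+$-path (so $P$ must exit $S$ along one of its own edges), this closes the case immediately; your ``further short case analysis guided by Figures~\ref{fig01}--\ref{fig02}'' is not needed and, as sketched, would not go through from Lemma~\ref{lemma03} alone.

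Separately, your worry that injectivity of $\phi$ is ``the hardest part'' is unfounded: in every branch of your construction the chosen vertex $\phi(P)$ lies on $P$ itself (it is either $w$, or $v\in V(P)$, or the endpoint of an edge of $P$ leaving $S$), so distinct paths of $OPT(G)$, being vertex-disjoint, receive distinct images automatically. No analysis of how several paths could ``enter the same block around a critical $2$-anchor'' is required.
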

\begin{proof}
Consider a critical satellite-element $S$ whose rescue-anchor is in $R_c$. 
Since $S$ is a bad component of $H$ by Lemma~\ref{lemma07}, $G[V(S)]$ contains no $4^+$-path. 
Moreover, if a vertex $v\not\in V(S)$ is adjacent to $S$ in $G$, then $v \in R$ by Lemma~\ref{lemma15} and none of Operations~\ref{op01},~\ref{op02} and~\ref{op03} is applicable.
Thus, removing the vertices of $R_c\cup U_c$ from $G$ destroys at most $|R|$ paths of $OPT(G)$. 
By Fact \ref{fact01}, each path has at most seven vertices. 
Moreover, each un-destroyed path of $OPT(G)$ still has at least four vertices. 
Hence, $opt(G_c) \ge opt(G) - 7|R|$. 
Because $|R| = \sum_{i=1}^5 i|{\cal K}_i|$, the lemma holds. 
\end{proof}

\subsection{Summary of the algorithm}\label{subsec:algo}
Let $r = \frac {15 + \sqrt{505}}{20} \approx 1.874$ be the positive root to the quadratic equation $10r^2 - 15r - 7 = 0$.
Our algorithm proceeds as follows.

\begin{enumerate}
\parskip=0pt
\setcounter{enumi}{-1}
\item
If $|V(G)| \le 4$, then find an optimal solution by brute-force search, output it, and then halt.
\item Construct the graph $H$ as follows:
\begin{enumerate}
\parskip=0pt
\item
Compute a maximum matching $M$ in $G$ and initialize $H$ to be the graph $(V(M), M)$. 
\item
Modify $M$ and $H$ by performing Steps~1.1 and~1.2 in Section~\ref{subsec:k=4}. 
\end{enumerate}

\item Compute a maximum path-cycle cover $C$ and modify it as follows:
\begin{enumerate}
\parskip=0pt
\item 
Perform Steps~2.1, 2.2, and~2.3 in Section~\ref{subsec:rescue} to compute a maximum path-cycle cover $C$ of edges $\{v,w\}\in E(G)$ 
such that $v$ and $w$ are in different components of $H$ and at least one of them is a bad component.
\end{enumerate}

\item 
Repeatedly perform Operations~\ref{op01},~\ref{op02}, and~\ref{op03} in Section~\ref{subsec:op} to modify $C$, 
until none of them is applicable. 

\item
If no component of $H+C$ is critical, or 
$\frac {\sum_{i=1}^5 i|{\cal K}_i|}{|{\cal K}_{1,c}|+2|{\cal K}_{2,c}|} > \frac {5}7r$, then
\begin{enumerate}
\parskip=0pt
\item
	compute $OPT(K)$ for each component $K$ of $H+C$ that is not an isolated bad component of $H$;
\item
	output their union as a solution for $G$, and then halt. 
\end{enumerate}

\item Otherwise, there exists at least one critical component in $H+C$ and 
$\frac {\sum_{i=1}^5 i|{\cal K}_i|}{|{\cal K}_{1,c}|+2|{\cal K}_{2,c}|} \le \frac {5}7r$.
\begin{enumerate}
\parskip=0pt
\item 
Recursively call the algorithm on the graph $G_c$ to obtain a solution $ALG(G_c)$. 

\item 
For each $v \in R_c$, compute a $5^+$-path $P_v$ since $v$ is an $2$-anchor.

\item 
Output the union of $ALG(G_c)$ and $\cup_{v\in R_c} P_v$, and halt.
\end{enumerate}
\end{enumerate}

\section{Analyzing the performance}\label{sec:ana4}
In this section, we show that the approximation ratio achieved by our algorithm is at most 
$r = \frac {15 + \sqrt{505}}{20} \approx 1.874$. 
For brevity, we first define several notations. 

\begin{notation}
\label{nota10}
We define the notation $\preceq$ as follows:
\begin{itemize}
\parskip=0pt
\item
	For any $K\in{\cal K}$ and a rational fraction $\frac{a}{b}$, we write $\frac{s(K)}{opt(K)} \preceq \frac{a}{b}$,
	whenever $opt(K) \ge b$ and $s(K) \le a$.
	
	({\em Comment:} If $\frac {s(K)}{opt(K)}\preceq \frac ab$, then $\frac {s(K)}{opt(K)}\le \frac ab$, but not vice versa.)

\item For a $K\in{\cal K}$ and a set ${\cal F}$ of rational fractions, we write 
$\frac{s(K)}{opt(K)} \preceq {\cal F}$ if $\frac{s(K)}{opt(K)} \preceq \frac{a}{b}$ for all $\frac{a}{b} \in {\cal F}$.

\item For a rational fraction $\frac{a}{b}$ and a set ${\cal F}$ of rational fractions, we write 
${\cal F} \preceq \frac{a}{b}$ if $\frac{c}{d} \le \frac{a}{b}$ for all $\frac{c}{d}\in{\cal F}$.
\end{itemize}
\end{notation}
Obviously, ${\cal K} = \bigcup^5_{i=0}{\cal K}_i$. 
The next fact follows from Lemmas~\ref{lemma10}--\ref{lemma14}.

\begin{fact}
\label{fact03}
The following statements hold:
\begin{enumerate}
\parskip=0pt
\item
For each $K \in {\cal K}_0$, $\frac {s(K)}{opt(K)} < \frac {14}{11}$.

\item 
For each $K \in {\cal K}_{1,c}$, $\frac {s(K)}{opt(K)} \preceq \left\{\frac 85, \frac {10}7\right\}$;
while for each $K \in {\cal K}_1 \setminus {\cal K}_{1,c}$, 
$\frac {s(K)}{opt(K)} \preceq \left\{\frac 65, \frac 87, \frac {10}8, \frac {12}{10}, \frac {14}{12}, \frac {16}{13}\right\}$.

\item 
For each $K \in {\cal K}_{2,c}$, $\frac {s(K)}{opt(K)} \preceq \left\{\frac {16}{12}, \frac {18}{13}, \frac {14}{11}\right\}$;
while for each $K \in {\cal K}_2 \setminus {\cal K}_{2,c}$, 
$\frac {s(K)}{opt(K)} \preceq \left\{\frac 66, \frac {12}{10}, \frac {14}{12}, \frac {16}{13}, \frac {18}{15}\right\}$.

\item 
For each $K \in {\cal K}_3$, $\frac {s(K)}{opt(K)} \preceq \left\{\frac {18}{15}, \frac {20}{17}, \frac {12}{11}\right\}$.

\item 
For each $K \in {\cal K}_4$, $\frac {s(K)}{opt(K)} \preceq \frac {22}{20}$.
\item
For each $K \in {\cal K}_5$, $\frac {s(K)}{opt(K)} \preceq \frac {24}{25}$.
\end{enumerate}
\end{fact}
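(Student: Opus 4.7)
The plan is to prove the six statements by a case analysis driven by $|R \cap V(K)|$, distinguishing critical, responsible, and plain (non-critical, non-responsible) components. Under the convention adopted after Lemma~\ref{lemma14}, ``critical'' and ``responsible'' are mutually exclusive. Consequently, if $K$ is critical then $R \cap V(K)$ consists exactly of the $2$-anchors of $K$; if $K$ is responsible then it consists of the $2$-anchors together with the responsible $1$-anchors (all of which are displayed in Figure~\ref{fig02}); and if $K$ is plain then it consists of the $2$-anchors alone.

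Statement~(1) follows directly from Lemma~\ref{lemma10}, since $|R \cap V(K)| = 0$ forces $K$ to have no $2$-anchor. For statements~(2) and~(3), I would handle ${\cal K}_{1,c}$ and ${\cal K}_{2,c}$ by reading off the possible $(s(K), opt(K))$ pairs from the critical cases of Lemmas~\ref{lemma12} and~\ref{lemma13} (equivalently, Figure~\ref{fig01}) and verifying that each such pair is covered by some $\frac{a}{b}$ in the listed set in the sense $s(K) \le a$ and $opt(K) \ge b$. Each of ${\cal K}_1 \setminus {\cal K}_{1,c}$ and ${\cal K}_2 \setminus {\cal K}_{2,c}$ then splits further into: (i) plain components, handled by the non-critical cases of the same lemmas; and (ii) responsible components, handled by inspecting those structures of Figure~\ref{fig02} whose red/gray vertex count matches the index of the class.

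For statement~(4), Lemma~\ref{lemma11} shows that any $K$ with three $2$-anchors is non-critical, so a member of ${\cal K}_3$ is either plain with three $2$-anchors (covered by Lemma~\ref{lemma11}, giving $\preceq \frac{18}{15}$ or $\preceq \frac{20}{17}$) or responsible with $|R \cap V(K)| = 3$ (inspection of Figure~\ref{fig02} reveals a unique such structure, with $s(K) = 12$ and $opt(K) = 11$, giving $\preceq \frac{12}{11}$). For statements~(5)--(6), a sweep of Figure~\ref{fig02} confirms that no responsible structure has four or more red/gray vertices, so ${\cal K}_4$ and ${\cal K}_5$ consist entirely of plain components with four and five $2$-anchors, and the bounds follow immediately from the first two clauses of Lemma~\ref{lemma11}.

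The main obstacle is the bookkeeping for the responsible structures in Figure~\ref{fig02}: one must visit each of the twenty structures, count its red and gray vertices to place it in the correct ${\cal K}_i$, and check that its $(s(K), opt(K))$ pair is dominated by one of the listed fractions. This is mechanical but must be done carefully, especially for the mixed cases in ${\cal K}_2 \setminus {\cal K}_{2,c}$ where both Lemma~\ref{lemma12} and Figure~\ref{fig02} contribute, and for the lone structure in ${\cal K}_3$ coming from Figure~\ref{fig02} that must be fit alongside the Lemma~\ref{lemma11} outputs under a single list of three fractions.
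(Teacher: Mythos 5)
Your proposal follows essentially the same route as the paper's own proof: both classify each $K$ by $|R\cap V(K)|$ and then by whether $K$ is critical, responsible, or neither, reading the admissible $(s(K),opt(K))$ pairs off Lemmas~\ref{lemma10}--\ref{lemma13} and Figures~\ref{fig01}--\ref{fig02} (including the single responsible structure contributing $\frac{12}{11}$ to ${\cal K}_3$ and the absence of responsible structures in ${\cal K}_4\cup{\cal K}_5$). The remaining work you defer is exactly the figure-by-figure bookkeeping the paper also performs, so the argument is correct and not materially different.
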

\begin{proof}
If $K\in {\cal K}_0$, then $K$ has no $2$-anchor.
By Lemma~\ref{lemma10}, $K$ is not critical and hence $\frac {s(K)}{opt(K)}<\frac {14}{11}$.
If $K\in {\cal K}_{1,c}$, then $K$ is critical and $K$ has exactly one $2$-anchor (See the first line of Figure~\ref{fig01}).
So, by Lemma~\ref{lemma13}, $\frac {s(K)}{opt(K)} \preceq \left\{\frac 85, \frac {10}7\right\}$.
If $K\in {\cal K}\setminus {\cal K}_{1,c}$, then $K$ is not critical and $K$ has one $2$-anchor or one responsible $1$-anchor (See the first line of Figure~\ref{fig02}).
So, $s(K)\in \{6, 8, 10, 12, 14, 16\}$.
Since $K$ is not critical, $\frac {s(K)}{opt(K)}<\frac {14}{11}$ and hence 
$\frac {s(K)}{opt(K)} \preceq \left\{\frac 65, \frac 87, \frac {10}8, \frac {12}{10}, \frac {14}{12}, \frac {16}{13}\right\}$.

If $K$ is in ${\cal K}_{2,c}$, then $K$ is critical and $K$ has exactly two $2$-anchors (See Figure~\ref{fig01} except the first line).
So, by Lemma~\ref{lemma12}, $\frac {s(K)}{opt(K)} \preceq \left\{\frac {16}{12}, \frac {18}{13}, \frac {14}{11}\right\}$.
If $K\in {\cal K}\setminus {\cal K}_{2,c}$, then $K$ is not critical.
Furthermore, $K$ has two responsible $1$-anchors; one responsible $1$-anchor and a $2$-anchor or two $2$-anchors.
By the second structure of Figure~\ref{fig02}, $\frac {s(K)}{opt(K)}\le \frac 66$ if $K$ has two responsible $1$-anchors.
If $K$ has a responsible $1$-anchor and a $2$-anchor, then by Figure~\ref{fig02},
$\frac {s(K)}{opt(K)} \preceq \left\{\frac {12}{10}, \frac {14}{12}, \frac {16}{13}\right\}$.
Lastly, $K$ has two $2$-anchors and hence $s(K) \in \{12, 14, 16, 18\}$.
Since $K$ is not critical, we know $\frac {s(K)}{opt(K)} \preceq \left\{\frac {12}{10}, \frac {14}{12}, \frac {16}{13}, \frac {18}{15}\right\}$,
which completes the proof of the third statement.

If $K\in {\cal K}_3$, then by Figure~\ref{fig02}, $K$ has three $2$-anchors or one $2$-anchor and two responsible $1$-anchors.
By Lemma~\ref{lemma11} and the second structure of the fourth line in Figure~\ref{fig02}, $\frac {s(K)}{opt(K)} \preceq \left\{\frac {12}{11}, \frac {18}{15}, \frac {20}{17}\right\}$.
If $K \in {\cal K}_4\cup {\cal K}_5$, then $K$ has no responsible $1$-anchor.
By Lemma~\ref{lemma11} again, the lemma is proved.
\end{proof}

\begin{lemma}
\label{lemma18}
Suppose that no component of $H+C$ is critical. 
Then, $opt(G) < \frac {35}{22} alg(G)$.
\end{lemma}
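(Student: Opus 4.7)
The plan is to derive Lemma~\ref{lemma18} as a direct consequence of Lemma~\ref{lemma05} and Fact~\ref{fact03}, via a straightforward summation over components.

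First I would set up two decompositions over the collection $\cal K$ of components of $H+C$ that are not isolated bad components of $H$:
(i) $|V(M_C)| = \sum_{K \in {\cal K}} s(K)$, because each matching edge in $M_C$ lies in exactly one $K \in {\cal K}$ (by Notation~\ref{nota04}, $M_C$ picks up precisely the matching edges in $5$-paths of $H$ and in bad components of $H$ saturated by $C$, i.e. exactly those belonging to the components of ${\cal K}$); and (ii) $alg(G) = \sum_{K \in {\cal K}} opt(K)$, which is exactly how Step~4 of the algorithm produces its output when no component is critical.

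Next, I would invoke Fact~\ref{fact03} under the hypothesis that no component is critical, which forces ${\cal K}_{1,c} = {\cal K}_{2,c} = \emptyset$. The applicable bounds then come only from the non-critical bullets of Fact~\ref{fact03}: $\frac{s(K)}{opt(K)} < \frac{14}{11}$ for $K \in {\cal K}_0$, and $\frac{s(K)}{opt(K)} \preceq \frac{a}{b}$ for explicit fractions $\frac{a}{b}$ (with maxima $\frac{16}{13}$ in ${\cal K}_1 \setminus {\cal K}_{1,c}$ and ${\cal K}_2 \setminus {\cal K}_{2,c}$, and strictly smaller in ${\cal K}_3$, ${\cal K}_4$, ${\cal K}_5$). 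A direct cross-multiplication confirms that every one of these fractions is strictly below $\frac{14}{11}$ (e.g.\ $16 \cdot 11 = 176 < 182 = 14 \cdot 13$, so $\frac{16}{13} < \frac{14}{11}$), so there is a constant $c < \frac{14}{11}$ such that $s(K) \le c \cdot opt(K)$ for every $K \in {\cal K}$.

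Summing this inequality and combining with the two decompositions gives $|V(M_C)| = \sum_K s(K) \le c \cdot \sum_K opt(K) = c \cdot alg(G)$. Applying Lemma~\ref{lemma05}, $opt(G) \le \frac{5}{4} |V(M_C)| \le \frac{5c}{4} \cdot alg(G)$. Finally, $c < \frac{14}{11}$ yields $\frac{5c}{4} < \frac{5}{4} \cdot \frac{14}{11} = \frac{35}{22}$, so $opt(G) < \frac{35}{22} alg(G)$ as claimed.

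There is no real obstacle in this proof; it is essentially bookkeeping. The only two points that warrant care are the decomposition $|V(M_C)| = \sum_{K \in {\cal K}} s(K)$, which relies on the fact that ${\cal K}$ is designed precisely to exclude the isolated bad components of $H$ (whose matching edges lie in $M \setminus M_C$), and the verification that none of the explicit ratio bounds in Fact~\ref{fact03} for non-critical components equals or exceeds $\frac{14}{11}$, which is what delivers the strict inequality in the final estimate.
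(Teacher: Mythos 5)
Your proof is correct and follows essentially the same route as the paper: decompose $alg(G)=\sum_{K\in{\cal K}}opt(K)$ and $|V(M_C)|=\sum_{K\in{\cal K}}s(K)$, bound each ratio $\frac{s(K)}{opt(K)}$ strictly below $\frac{14}{11}$, and finish with Lemma~\ref{lemma05}. The only cosmetic difference is that you route the ratio bound through the case lists of Fact~\ref{fact03}, whereas the paper obtains $\frac{s(K)}{opt(K)}<\frac{14}{11}$ in one line directly from Definition~\ref{def08} (non-critical means ratio below $\frac{14}{11}$), which makes the detour through the explicit fractions unnecessary.
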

\begin{proof}
By Step~4 of the algorithm, $alg(G) = \sum_{K\in{\cal K}} opt(K)$. 
Moreover, Definition~\ref{def08} implies $\frac {s(K)}{opt(K)} < \frac {14}{11}$. 
So, $alg(G) = \sum_{K \in {\cal K} } opt(K) \ge \frac {11}{14} \sum_K s(K) =  \frac {11}{14} |V(M_C)|$.
By Lemma \ref{lemma05}, the lemma is proved.
\end{proof}

\begin{lemma}
\label{lemma19}
Suppose that at least one component of $H+C$ is critical 
and $\frac {\sum_{i=1}^5 i|{\cal K}_i|}{|{\cal K}_{1,c}|+2|{\cal K}_{2,c}|} > \frac {5}7 r$. 
Then, $opt(G) \le r \times alg(G)$.
\end{lemma}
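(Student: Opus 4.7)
Under the hypotheses of this lemma the algorithm halts at Step~4 and outputs $\bigcup_{K\in{\cal K}} OPT(K)$, so $alg(G) = \sum_{K\in{\cal K}} opt(K)$. Since Lemma~\ref{lemma05} yields $opt(G) \le \frac{5}{4}|V(M_C)| = \frac{5}{4}\sum_{K\in{\cal K}} s(K)$, it suffices to show
\[
\sum_{K\in{\cal K}} s(K) \;\le\; \frac{4r}{5}\sum_{K\in{\cal K}} opt(K).
\]

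The plan is to derive this from the following per-component inequality: for every $K\in{\cal K}$,
\[
s(K) - \frac{4r}{5}\,opt(K) \;\le\; 2\,|R_c\cap V(K)|\,-\,(4r-6)\,|R\cap V(K)|.
\]
I would verify it by enumerating all admissible pairs $(s(K),opt(K))$ from Lemmas~\ref{lemma10}--\ref{lemma13} (summarized in Fact~\ref{fact03}), using that $|R\cap V(K)| = i$ for $K\in{\cal K}_i$, $|R_c\cap V(K)| = 1$ for $K\in{\cal K}_{1,c}$, $|R_c\cap V(K)| = 2$ for $K\in{\cal K}_{2,c}$, and $|R_c\cap V(K)| = 0$ otherwise. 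Each subcase reduces to a single linear inequality in $r$ that is valid for $r = (15+\sqrt{505})/20$; the binding case is $(s,opt) = (8,5)$ in ${\cal K}_{1,c}$ (both sides equal $8-4r$), with further tight cases at $(6,5)$ in ${\cal K}_1\setminus{\cal K}_{1,c}$, $(12,10)$ in ${\cal K}_2\setminus{\cal K}_{2,c}$, and $(18,15)$ in ${\cal K}_3$.

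Summing the per-component bound over ${\cal K}$ yields
\[
\sum_{K\in{\cal K}} s(K) \;-\; \frac{4r}{5}\sum_{K\in{\cal K}} opt(K) \;\le\; 2N_c - (4r-6)A,
\]
where $N_c = |{\cal K}_{1,c}| + 2|{\cal K}_{2,c}|$ and $A = \sum_{i=1}^{5} i\,|{\cal K}_i|$. Because $r$ solves $10r^2 - 15r - 7 = 0$, one obtains $5r(4r-6) = 20r^2 - 30r = 14$, i.e., $(4r-6)\cdot\frac{5r}{7} = 2$. The hypothesis $A > \frac{5r}{7}N_c$ therefore forces $(4r-6)A > 2N_c$, making the right-hand side strictly negative. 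Combined with Lemma~\ref{lemma05}, this gives $opt(G) \le \frac{5}{4}\sum_K s(K) < r\cdot alg(G)$.

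The main obstacle is guessing the correct coefficients $(\alpha,\beta) = (2,\,4r-6)$ in the per-component inequality. They are pinned down by two simultaneous requirements: absorbing the worst-case deficit at $(s,opt) = (8,5)$ in ${\cal K}_{1,c}$ requires $\alpha - \beta = 8 - 4r$, while matching the hypothesis $A > \frac{5r}{7} N_c$ forces $\alpha/\beta = 5r/7$. Eliminating $\alpha$ between these two constraints produces exactly $10r^2 - 15r - 7 = 0$, which is what pinpoints the value of $r$; once the coefficients are in hand, the remaining case enumeration across Fact~\ref{fact03} is routine.
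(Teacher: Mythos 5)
Your proposal is correct and follows essentially the same route as the paper's proof: your per-component quantity $s(K)-\frac{4r}{5}opt(K)\le 2|R_c\cap V(K)|-(4r-6)|R\cap V(K)|$ is exactly the paper's adjusted weight $s'(K)\le\frac{4r}{5}opt(K)$ (with $s'(K)=s(K)+(4r-6)i$ for non-critical and $s'(K)=s(K)-(8-4r)i$ for critical components), and the summation step where the hypothesis $A>\frac{5r}{7}N_c$ together with $10r^2-15r-7=0$ makes the aggregate error term nonpositive is the same algebra the paper performs in Eq.~(\ref{eq42}). The binding cases you identify match the tight fractions in Fact~\ref{fact03}.
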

\begin{proof}
By Step~4 of the algorithm, $alg(G) = \sum_K opt(K)$, where $K$ ranges over 
all components of $H+C$ that is not a bad component of $H$. 
We can rewrite 
\begin{eqnarray}
\sum_{K\in{\cal K}}  opt(K) = \sum_{i\in\{0,3,4,5\} }\sum_{K\in {\cal K}_i} opt(K) + 
\sum_{i\in\{1,2\}} \sum_{K\in {\cal K}_i \setminus {\cal K}_{i,c} } opt(K) + 
\sum_{i\in\{1,2\}} \sum_{K\in {\cal K}_{i,c} } opt(K). \label{eq41}
\end{eqnarray}

By Lemma \ref{lemma05}, it suffices to show that $\frac{|V(M_C)|}{alg(G)} \le \frac 45 r$. 
We can rewrite 
\begin{eqnarray}
|V(M_C)| &=& \sum_{i\in\{0,3,4,5\} }\sum_{K\in {\cal K}_i} s(K) + 
\sum_{i\in\{1,2\}} \sum_{K\in {\cal K}_i \setminus {\cal K}_{i,c} } s(K) + 
\sum_{i\in\{1,2\}} \sum_{K\in {\cal K}_{i,c} } s(K) \nonumber \\
&\le& \sum_{i\in\{0,3,4,5\} }\sum_{K\in {\cal K}_i} \left(s(K) + (4r-6)i\right) + 
\sum_{i\in\{1,2\}} \sum_{K\in {\cal K}_i \setminus {\cal K}_{i,c} } \left(s(K) + (4r-6)i\right) \nonumber \\
&& + \sum_{i\in\{1,2\}} \sum_{K\in {\cal K}_{i,c} } \left(s(K) - \frac {(4r - 6)(5r - 7)}7 i\right) \nonumber \\
&=& \sum_{i\in\{0,3,4,5\} }\sum_{K\in {\cal K}_i} \left(s(K) + (4r-6)i\right) + 
\sum_{i\in\{1,2\}} \sum_{K\in {\cal K}_i \setminus {\cal K}_{i,c} } \left(s(K) + (4r-6)i\right) \nonumber \\
&& + \sum_{i\in\{1,2\}} \sum_{K\in {\cal K}_{i,c} } \left(s(K) - (8-4r) i\right), \label{eq42}
\end{eqnarray}
where the last equality holds because $10r^2 - 15r - 7 = 0$, while 
the inequality holds because 
\begin{eqnarray*}
\sum_{i\in\{0,3,4,5\} }\sum_{K\in {\cal K}_i} (4r-6)i + 
\sum_{i\in\{1,2\}} \sum_{K\in {\cal K}_i \setminus {\cal K}_{i,c} } (4r-6)i
&=& (4r - 6)\left( \sum_{i=1}^5 i |{\cal K}_i|-|{\cal K}_{1,c}|-2|{\cal K}_{2,c}|\right) \\
&\ge& \frac {(4r - 6)(5r - 7)}7 (|{\cal K}_{1,c}|+2|{\cal K}_{2,c}|).
\end{eqnarray*}

For each $i\in\{0,3,4,5\}$ and each $K\in{\cal K}_i$, we define $s'(K) = s(K) + (4r-6)i$. 
Similarly, for each $i\in\{1,2\}$ and each $K\in{\cal K}_i \setminus {\cal K}_{i,c}$, we define $s'(K) = s(K) + (4r-6)i$. 
Moreover, for each $i\in\{1,2\}$ and each $K\in {\cal K}_{i,c}$, we define $s'(K) = s(K) - (8 - 4r)i$. 
Then, by Eqs.~(\ref{eq41}, \ref{eq42}), 
it suffices to show that $\frac{s'(K)}{opt(K)} \le \frac{4}{5}r$ for all $K\in{\cal K}$, 
in order to show that $\frac{|V(M_C)|}{alg(G)} \le \frac {4}{5}r$. 

Consider a $K \in {\cal K}$. If $K \in {\cal K}_0$, then similarly to Lemma \ref{lemma18}, 
$\frac{s'(K)}{opt(K)} \le \frac{4}{5}r$. 
If $K \in {\cal K}_1 \setminus {\cal K}_{1,c}$, then by the second statement in Fact~\ref{fact03}, we have
\[
\frac {s'(K)}{opt(K)} \preceq \left\{\frac {4r}5, \frac {4r+2}7, \frac {4r+4}8, \frac {4r+6}{10}, \frac {4r+8}{12}, 
\frac {4r+10}{13}\right\} \preceq \frac {4r}5.
\]
If $K \in {\cal K}_2 \setminus {\cal K}_{2,c}$, then by the third statement in Fact~\ref{fact03}, we have
\[
\frac {s'(K)}{opt(K)} \preceq  \left\{\frac {8r-6}6, \frac {8r}{10}, \frac {8r+2}{12}, \frac {8r+4}{13}, \frac {8r+6}{15}\right\} 
\preceq \frac {4r}5.
\]
If $K \in {\cal K}_3 \cup {\cal K}_4 \cup {\cal K}_5$, then by the last two statements in Fact \ref{fact03}, we have
\[
\frac {s'(K)}{opt(K)} \preceq  \left\{\frac {12r}{15}, \frac {12r+2}{17}, \frac {12r-6}{11}, \frac {16r}{20}, \frac {20r-6}{25}\right\} 
\preceq \frac {4r}5.
\]
If $K \in {\cal K}_{1,c}$, then by the second statement in Fact~\ref{fact03}, we have
\[
\frac {s'(K)}{opt(K)} \preceq  \left\{ \frac {4r}5, \frac {4r+2}7\right\} \preceq \frac {4r}5.
\]
If $K \in {\cal K}_{2,c}$, by the third statement in Fact~\ref{fact03}, we have
\[
\frac {s'(K)}{opt(K)} \preceq  \left\{ \frac {8r-2}{11}, \frac {8r}{12}, \frac {8r+2}{13} \right\} \preceq \frac {4r}5.
\]
This completes the proof.
\end{proof}

\begin{theorem}
\label{thm01}
The running time of the algorithm is bounded by $O(\min\{m^2n^2, n^5\})$ and
the approximation ratio is at most $r = \frac {15 + \sqrt{505}}{20} \approx 1.874$.
\end{theorem}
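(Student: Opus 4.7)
The plan is to prove the running time bound and the approximation ratio separately. For the running time, I will bound the work performed in one level of recursion and then the number of recursive invocations. For the approximation ratio, I will handle the three possible termination scenarios of the algorithm (Step~0, the two disjuncts of Step~4, and Step~5) separately, combining Lemmas~\ref{lemma18} and~\ref{lemma19} with a clean induction argument for the recursive branch.

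For the running time, the work inside a single call is dominated by the construction in Step~1, which takes $O(\min\{m^2n, n^4\})$ time (Lemma~\ref{lemma03}); the maximum-weight path-cycle cover in Step~2, which takes $O(\min\{nm, n^2\}\log n)$ time (Lemma~\ref{lemma04}); the $O(n)$ local modifications in Step~3, each requiring at most a scan of $E(G)$ to locate an applicable operation (Lemma~\ref{lemma16}); and the per-component computation in Step~4, which is $O(1)$ per component by Lemma~\ref{lemma08}. The dominant term per call is therefore $O(\min\{m^2n, n^4\})$. Because $R_c\neq\emptyset$ whenever Step~5 is entered, at least one vertex is removed in forming $G_c$, so the recursion depth is $O(n)$. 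Multiplying yields the claimed $O(\min\{m^2n^2, n^5\})$.

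For the approximation ratio, I will induct on $|V(G)|$. The base case is $|V(G)|\le 4$, where Step~0 returns an optimum, so $alg(G)=opt(G)$. If the algorithm terminates via the first disjunct of Step~4 (no critical component of $H+C$), Lemma~\ref{lemma18} gives $opt(G)<\frac{35}{22}\,alg(G)<r\cdot alg(G)$; if via the second disjunct of Step~4, Lemma~\ref{lemma19} gives the desired bound directly. The interesting case is Step~5, where the algorithm recurses on $G_c$ and then augments $ALG(G_c)$ by the paths $P_v$ for $v\in R_c$. Each $P_v$ is a $5^+$-path by Remark~\ref{remark01}, and the $P_v$'s are pairwise vertex-disjoint and lie entirely within $R_c\cup U_c$ (distinct $2$-anchors of a common critical component attach to disjoint satellite elements, and critical components are themselves vertex-disjoint); hence $alg(G)\ge alg(G_c)+5|R_c|$, with $|R_c|=|{\cal K}_{1,c}|+2|{\cal K}_{2,c}|$ since every vertex of $R$ inside a critical component is a $2$-anchor.

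Combining Lemma~\ref{lemma17} with the inductive hypothesis $opt(G_c)\le r\cdot alg(G_c)$, it suffices to show
\begin{equation*}
r\cdot alg(G_c)+7\sum_{i=1}^5 i|{\cal K}_i|\;\le\;r\bigl(alg(G_c)+5|R_c|\bigr),
\end{equation*}
which reduces to $\sum_{i=1}^5 i|{\cal K}_i|\le\frac{5r}{7}|R_c|$. This inequality is exactly the negation of the Step~4 gating condition that diverts control to Step~5, so it holds by construction. The main technical point I expect to need care with is verifying that $ALG(G_c)\cup\bigcup_{v\in R_c}P_v$ is a valid feasible solution for $G$, which hinges on the vertex-disjointness claim above and on each $P_v$ using only vertices in the deleted set $R_c\cup U_c$; once this is checked, the algebra collapses because $r$ was chosen as the positive root of $10r^2-15r-7=0$, already invoked inside Lemma~\ref{lemma19}.
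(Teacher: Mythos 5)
Your proposal is correct and follows essentially the same route as the paper: the same per-call time bounds from Lemmas~\ref{lemma03}, \ref{lemma04}, \ref{lemma08} and~\ref{lemma16} multiplied by an $O(n)$ recursion depth, and the same induction on $|V(G)|$ with Lemmas~\ref{lemma18} and~\ref{lemma19} covering the Step~4 exits and the identity $|R_c|=|{\cal K}_{1,c}|+2|{\cal K}_{2,c}|$ turning the Step~5 gating condition into exactly the inequality needed to close the induction via Lemma~\ref{lemma17}. Your extra check that the paths $P_v$ are pairwise vertex-disjoint and confined to $R_c\cup U_c$ is a point the paper leaves implicit, and it is a worthwhile addition.
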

\begin{proof}
It is easy to see that each of Operations~\ref{op01}, \ref{op02}, and~\ref{op03} can be done in $O(m)$. 
Since they are executed at most $O(n)$ repetitions in Step~3 of the algorithm by Lemma~\ref{lemma16}, it takes $O(nm)$ time. 
By Lemmas~\ref{lemma03} and~\ref{lemma04}, Steps~1 and~2 can be done in $O(\min\{m^2n, n^4\})$ time.
It is easy to verify that compared to these three steps, the other steps take less time. 
Since the recursion depth is $O(n)$, the algorithm takes $O(\min\{m^2n^2, n^5\})$ time in total.

We next prove that the approximation ratio is at most~$r$. 
The proof is done by induction on $n$. 
In the base case, $n \le 4$ and the algorithm outputs the optimal solution for $G$ and hence we are done. 
Now suppose that $n \ge 5$.
By Lemmas~\ref{lemma18} and~\ref{lemma19}, we only need to consider the case 
where there exists a critical component in $H+C$ and $\frac {\sum_{i=1}^5 i|{\cal K}_i|}{|{\cal K}_{1,c}|+2|{\cal K}_{2,c}|} \le \frac {5r}7$.
In this case, we have $alg(G) \ge 5(|{\cal K}_{1,c}|+2|{\cal K}_{2,c}|) + alg(G_c)$ by the last step of the algorithm.
By the inductive hypothesis, $opt(G_c) \le r\times alg(G_c)$.
According to Lemma \ref{lemma17}, we finally obtain that
\[
\frac {opt(G)}{alg(G)} \le \frac {7\sum_{i=1}^5 i |{\cal K}_i|+opt(G_c)}{5(|{\cal K}_{1,c}|+2|{\cal K}_{2,c}|)+alg(G_c)} \le r.
\]
This completes the proof.
\end{proof}

\section{Conclusions}
In this paper, we investigated the problem $MPC_v^{4+}$ to find a collection of vertex-disjoint paths,
each containing at least $4$ vertices, such that the number of vertices in these paths is maximized.
In~\cite{GFL22a}, the authors design an $O(n^8)$-time $2$-approximation algorithm based on several local improvement operations,
where $n$ is the number of vertices in the input graph $G = (V, E)$.
They asked whether better approximation algorithms are possible by a completely different method.

{\color{black}
We answered this open question affirmatively in this paper, to construct a solution on top of a maximum matching $M$.
The key observation to this success is that the maximum matching $M$ can be proven to contain at least $4/5$ of the vertices in the optimal solution.
The subsequent construction involves extending edges of $M$ into $5$-paths,
computing a maximum-weight path-cycle cover $C$ of an auxiliary graph,
and three operations to modify the achieved subgraph.
We not only reduce the running time to $O(\min\{m^2 n^2, n^5\})$, where $m$ is the number of edges in the input graph,
but also prove a better approximation ratio of $1.874$ for our algorithm.

Our design idea can be extended to $MPC_v^{5+}$ with some effort. However, new ideas are needed for the general $k \ge 6$. 
It is possible that the local operations we designed inside our algorithm can be developed into local search algorithms with better performance ratios.
Currently we do not have any inapproximability result for $MPC_v^{k+}$ for any fixed constant $k \ge 4$.
Such negative results can be interesting to pursue.
}

\section*{Acknowledgments}
ZZC is supported in part by the Grant-in-Aid for Scientific Research of the Ministry of Education, Science, Sports and Culture of Japan, under Grant No. 18K11183.
GL is supported by the NSERC Canada.
ZZ is supported by National Science Foundation of China (NSFC: 61972329),
GRF grants for Hong Kong Special Administrative Region, China (CityU 11210119, CityU 11206120, CityU11218821),
and a grant from City University of Hong Kong (CityU 11214522).


\bibliography{mfcs2023}

\begin{thebibliography}{10}

\bibitem{AN07}
K.~Asdre and S.~D. Nikolopoulos.
\newblock A linear-time algorithm for the $k$-fixed-endpoint path cover problem
  on cographs.
\newblock {\em Networks}, 50:231--240, 2007.

\bibitem{AN10}
K.~Asdre and S.~D. Nikolopoulos.
\newblock A polynomial solution to the $k$-fixed-endpoint path cover problem on
  proper interval graphs.
\newblock {\em Theoretical Computer Science}, 411:967--975, 2010.

\bibitem{BK06}
P.~Berman and M.~Karpinski.
\newblock 8/7-approximation algorithm for (1,2)-{TSP}.
\newblock In {\em Proceedings of ACM-SIAM SODA'06}, pages 641--648, 2006.

\bibitem{CCC18}
Y.~Cai, G.~Chen, Y.~Chen, R.~Goebel, G.~Lin, L.~Liu, and An~Zhang.
\newblock Approximation algorithms for two-machine flow-shop scheduling with a
  conflict graph.
\newblock In {\em Proceedings of COCOON 2018}, LNCS 10976, pages 205--217,
  2018.

\bibitem{CCL21}
Y.~Chen, Y.~Cai, L.~Liu, G.~Chen, R.~Goebel, G.~Lin, B.~Su, and A.~Zhang.
\newblock Path cover with minimum nontrivial paths and its application in
  two-machine flow-shop scheduling with a conflict graph.
\newblock {\em Journal of Combinatorial Optimization}, 43:571--588, 2022.

\bibitem{CKM10}
Z.-Z. Chen, S.~Konno, and Y.~Matsushita.
\newblock Approximating maximum edge 2-coloring in simple graphs.
\newblock {\em Discrete Applied Mathematics}, 158:1894--1901, 2010.

\bibitem{Gab83}
H.~N. Gabow.
\newblock An efficient reduction technique for degree-constrained subgraph and
  bidirected network flow problems.
\newblock In {\em Proceedings of ACM STOC'83}, pages 448--456, 1983.

\bibitem{GW20}
R.~Gomez and Y.~Wakabayashi.
\newblock Nontrivial path covers of graphs: Existence, minimization and
  maximization.
\newblock {\em Journal of Combinatorial Optimization}, 39:437--456, 2020.

\bibitem{GFL22a}
M.~Gong, J.~Fan, G.~Lin, and E.~Miyano.
\newblock Approximation algorithms for covering vertices by long paths.
\newblock In {\em Proceedings of MFCS 2022}, LIPIcs 241, pages 53:1--53:14,
  2022.

\bibitem{H83}
D.~S. Hochbaum.
\newblock Efficient bounds for the stable set, vertex cover and set packing
  problems.
\newblock {\em Discrete Applied Mathematics}, 6:243--254, 1983.

\bibitem{KLM22}
K.~Kobayashi, G.~Lin, E.~Miyano, T.~Saitoh, A.~Suzuki, T.~Utashima, and
  T.~Yagita.
\newblock Path cover problems with length cost.
\newblock In {\em Proceedings of WALCOM 2022}, LNCS 13174, pages 396--408,
  2022.

\bibitem{K09}
A.~Kosowski.
\newblock Approximating the maximum $2$- and $3$-edge-colorable subgraph
  problems.
\newblock {\em Discrete Applied Mathematics}, 157:3593--3600, 2009.

\bibitem{MV80}
S.~Micali and V.~V. Vazirani.
\newblock An ${O}(\sqrt{|{V}|} |{E}|)$ algorithm for finding maximum matching
  in general graphs.
\newblock In {\em Proceedings of IEEE FOCS'80}, pages 17--27, 1980.

\bibitem{N21}
M.~Neuwohner.
\newblock An improved approximation algorithm for the maximum weight
  independent set problem in $d$-claw free graphs.
\newblock In {\em Proceedings of STACS 2021}, pages 53:1--53:20, 2021.

\bibitem{PH08}
L.~L. Pao and C.~H. Hong.
\newblock The two-equal-disjoint path cover problem of matching composition
  network.
\newblock {\em Information Processing Letters}, 107:18--23, 2008.

\bibitem{RT14}
R.~Rizzi, A.~I. Tomescu, and V.~M{\"a}kinen.
\newblock On the complexity of minimum path cover with subpath constraints for
  multi-assembly.
\newblock {\em BMC Bioinformatics}, 15:S5, 2014.

\end{thebibliography}

\end{document}